\documentclass[letterpaper, 11pt]{article}
\usepackage{latexsym,epsfig,secdot,url,amsbsy,amsmath,exscale,color,fancyhdr,booktabs, ulem}
\usepackage[psamsfonts]{amsfonts,euscript}
\usepackage{amsmath,amssymb,amsfonts,amsthm}
\usepackage{algorithm, algpseudocode,scalerel}
\usepackage[center,bf]{caption}
\usepackage{color}
\usepackage{bm,bbm}
\usepackage{graphicx,enumerate}
\usepackage{rotating}
\usepackage{subfigure}
\usepackage{multirow}
\usepackage{subcaption}
\usepackage{enumitem}
\usepackage{diagbox}
\usepackage{tikz}
\usetikzlibrary{shapes.geometric}

\usepackage{hyperref}

\def\double{\par\baselineskip=24pt}
\def\single{\par\baselineskip=12pt}



%
{\end{list}}

\def\E{{\rm E}}
\def\Var{{\rm Var}}







\def\b1{{\bf 1}}

\def\blot{\quad {$\vcenter{\vbox{\hrule height.4pt
             \hbox{\vrule width.4pt height.9ex \kern.9ex \vrule width.4pt}
             \hrule height.4pt}}$}}

\sloppy \topmargin=-0.5in \oddsidemargin=0in \evensidemargin=0in
\textwidth=6.5in \textheight=9in

\oddsidemargin  0in \evensidemargin 0in \topmargin -0.5in
\headheight 0.25in \headsep 0.25in
\textwidth   6.5in \textheight 9in 
\parskip 1.5ex  \parindent 0ex \footskip 20pt


\usepackage{endnotes}
\let\footnote=\endnote

%


\usepackage{natbib}
 \bibpunct[, ]{(}{)}{,}{a}{}{,}%

\newtheorem{theorem}{Theorem}
\newtheorem{lemma}{Lemma}

\newtheorem{assumption}{Assumption}

\newtheorem{remark}{Remark}
\newcommand{\comment}[1]{}

\providecommand{\customgenericname}{}
\newcommand{\newcustomtheorem}[2]{%
  \newenvironment{#1}[1]
  {%
   \renewcommand\customgenericname{#2}%
   \renewcommand\theinnercustomgeneric{##1}%
   \innercustomgeneric
  }
  {\endinnercustomgeneric}
}

\newcustomtheorem{customthm}{Theorem}
\newcustomtheorem{customlemma}{Lemma}


\def\E{\mathbf{E}}
\def\Pr{\mathbf{P}}
\def\Var{\mathbf{Var}}

\begin{document}
\double
\title{Indifference-Zone Relaxation Procedures for Finding Feasible Systems}
\author{Yuwei Zhou \\
	Kelley School of Business    \\
	Indiana University  \\
	\vspace{0.3cm}
   Bloomington, IN 47405, USA  \\
	Sigr\'un Andrad\'ottir  \\
	Seong-Hee Kim\\
	H. Milton Stewart School of Industrial and Systems Engineering \\
	Georgia Institute of Technology \\
    \vspace{0.3cm}	
    Atlanta, GA 30332, USA  \\
	Chuljin Park\footnote{Corresponding author. Email: parkcj@hanyang.ac.kr. Telephone: +82-2-2220-0476}\\
	Department of Industrial Engineering \\
	Hanyang University\\
	Seoul, 04763, South Korea}
\maketitle
\vspace{-24pt}
\begin{abstract}
We consider the problem of finding feasible systems with respect to stochastic constraints when system performance is evaluated through simulation. Our objective is to solve this problem with high computational efficiency and statistical validity. Existing indifference-zone (IZ) procedures introduce a fixed tolerance level, which denotes how much deviation the decision-maker is willing to accept from the threshold in the constraint. These procedures are developed under the assumption that all systems' performance measures are exactly the tolerance level away from the threshold, leading to unnecessary simulations. In contrast, IZ-free procedures, which eliminate the tolerance level, perform well when systems' performance measures are far from the threshold. However, they may significantly underperform compared to IZ procedures when systems' performance measures are close to the threshold. To address these challenges, we propose the Indifference-Zone Relaxation (${\cal IZR}$) procedure. ${\cal IZR}$ introduces a set of relaxed tolerance levels and utilizes two subroutines for each level: one to identify systems that are clearly feasible and the other to exclude those that are clearly infeasible. We also develop the ${\cal IZR}$ procedure with estimation (${\cal IZE}$), which introduces two relaxed tolerance levels for each system and constraint: one matching the original tolerance level and the other based on an estimate of the system's performance measure. By employing different tolerance levels, these procedures facilitate early feasibility determination with statistical validity.  We prove that ${\cal IZR}$ and ${\cal IZE}$ determine system feasibility with the desired probability and show through experiments that they significantly reduce the number of observations required compared to an existing procedure.


\noindent {\em Subject classification:} Simulation, Ranking and Selection, Fully Sequential Procedure, Feasibility Check, Stochastic Constraints

\end{abstract}


\section{Introduction}

Ranking and selection (R\&S) procedures have been developed and used for finding a system with the best performance among finitely many systems when the performance is evaluated by stochastic simulation. General approaches to R\&S fall into several categories, including fully sequential indifference-zone (IZ) procedures \citep{KimNelsonHB}, optimal computing budget allocation (OCBA) procedures \citep{ChenOCBA}, and Bayesian procedures \citep{ChickBayes}.

Constrained R\&S involves selecting the best system with respect to a primary performance measure while satisfying stochastic constraints on secondary performance measures. \cite{ak} and \cite{bk:constraint} develop fully sequential IZ feasibility check procedures (FCPs) to find a set of feasible systems with respect to a constraint and multiple constraints, respectively. These FCPs are employed as subroutines to select a system with the best performance among systems satisfying constraints on one or more secondary performance measures. \cite{ak} and \cite{Healey:HAK:dormancy} propose statistically valid procedures that select the best feasible system with at least a pre-specified probability under a stochastic constraint. \cite{Healey:HAK:TOMACS} extend the procedures to handle multiple stochastic constraints and to find the best system among the feasible systems. \cite{LeeOCBACO} provide an OCBA procedure that allocates a finite simulation budget to maximize the probability of correctly selecting the best system satisfying multiple stochastic constraints. Furthermore, \cite{huterragu}, \cite{SCORE}, and \cite{Gao_TACON2015} enhance a feasibility determination procedure for multiple stochastic constraints considering OCBA and large deviation theory. \cite{BayesFD} handle stochastic constraints using Bayesian concepts. In this paper, we focus on the fully sequential IZ procedures in the presence of stochastic constraints.

Fully sequential IZ FCPs introduce one tolerance level for each constraint, and this tolerance level specifies how much the decision-maker is willing to be off from a constant threshold for checking the feasibility of the systems with respect to the constraint. This is the least absolute difference in the performance measure and the constant threshold that the decision-maker wants to detect. In practice, the true expected performance measure is unknown. When a system's performance measure is very close to the threshold, distinguishing this small difference requires a huge number of replications to determine feasibility. The tolerance level helps avoid such situations. Nevertheless, when the tolerance level is too small relative to the difference between the true expected performance measure and the threshold, computational costs for feasibility checks may be unnecessarily high. Therefore, a small value of the tolerance level may aggravate the computational burden when the number of systems is large and the range of performance measure values is also large. \cite{Lee2018} propose the adaptive feasibility check procedure that uses existing FCPs \citep{bk:constraint} with different thresholds as its subroutines and introduces a decreasing sequence of tolerance levels within the subroutines. The procedure is designed to self-adjust the tolerance levels for each system and constraint by introducing two independent FCPs with different thresholds and the same tolerance level. Although the procedure shows strong empirical performance in numerical and practical examples, it is found that the procedure may terminate without achieving the desired statistical guarantee when the expected performance measure is very close to the threshold.

Recently, fully sequential procedures that eliminate the IZ have been developed for traditional R\&S problems \citep{FHN_OR,Wang2024} and for constrained R\&S problems \citep{IZF_CRS}. Specifically, \cite{FHN_OR} proposed an IZ-free procedure for selecting the best system, using the Law of Iterated Logarithm \citep{Durrettbook} to maintain statistical validity without the IZ. Meanwhile, \cite{IZF_CRS} adapted the IZ-free procedures from \cite{FHN_OR} to identify feasible systems without a tolerance level and extended their work to address R\&S problems in the presence of stochastic constraints. Additionally, \cite{Fanetal2025} examined the impact of the first-stage sampling size and proposed improved versions of the IZ-free procedures for selecting the best system. On a different note, \cite{Wang2024} developed IZ-flexible procedures for selecting the best system, employing the Sequential Probability Ratio Test \citep{Wald1945, WW1948} to ensure statistical validity with or without considering the IZ. These procedures tend to outperform fully sequential IZ procedures, particularly when the difference between a system's performance measure and a threshold (or between the performance measures of two different systems) is significantly large. Nevertheless, when that difference is small, the IZ-free procedures require a considerably larger number of observations and tend to underperform relative to IZ procedures. It is important to note that the total number of observations needed by a procedure can be heavily affected by this latter scenario.

In this paper, we first propose a new FCP, referred to as the Indifference-Zone Relaxation (${\cal IZR}$) procedure, which (i) introduces a set of relaxed tolerance levels for each constraint and (ii) simultaneously runs two subroutines with different thresholds and the relaxed tolerance levels. In particular, one subroutine of ${\cal IZR}$ mainly identifies clearly infeasible systems, and the other subroutine distinguishes clearly feasible systems. Then, the algorithm makes the feasibility decision for each system and constraint when the two subroutines provide the same feasibility decision (i.e., either feasible or infeasible) for the same tolerance level. As a result, the feasibility decisions of different systems and constraints may be made with different tolerance levels and this leads to saving computational costs while keeping the statistical guarantee with respect to the original tolerance level.

We then analyze the impact of the number and magnitude of relaxed tolerance levels and introduce an enhanced FCP, referred to as the ${\cal IZR}$ procedure with Estimation (${ \cal IZE }$). The ${\cal IZE}$ procedure preserves the overall structure of ${\cal IZR}$ but simplifies the design by using only two tolerance levels: one equal to the original tolerance level, and the other determined adaptively for each system and constraint based on an estimate of the difference between the system mean and constraint threshold. Our numerical results reveal that the choice of relaxed tolerance levels (either specified by the user as in ${\cal IZR}$ or estimated as in ${\cal IZE}$) has a significant impact on performance. Specifically, when system means are well spread out, employing a larger relaxed tolerance level facilitates the early elimination of clearly feasible or infeasible systems, thereby improving efficiency. In contrast, when system means are densely clustered, increasing the tolerance level provides limited benefit. This highlights the importance of accurate tolerance level estimation in the effectiveness of ${\cal IZE}$.
Overall, both ${\cal IZR}$ and ${\cal IZE}$ require significantly fewer observations than the existing FCP when the difference between a system’s performance measure and the corresponding threshold is substantially greater than the original tolerance, with ${\cal IZE}$ typically outperforming ${\cal IZR}$ in such situations. 
When this difference is small or near the original tolerance level, both procedures require a similar number of observations as the existing FCP. Importantly, both ${\cal IZR}$ and ${\cal IZE}$ maintain statistical validity with respect to the original tolerance level.


The rest of the paper is organized as follows: Section \ref{sec2} provides the background for our problem.
Sections \ref{sec3} and \ref{sec4} propose our procedures and the associated proofs of statistical validity. Section \ref{sec5} presents numerical results for our procedures and compares their performance with that of an existing procedure. Concluding remarks are provided in Section \ref{sec6}. Note that \cite{Park2024} introduce an earlier version of this work that considers a single stochastic constraint, does not include mathematical proofs, and provides limited numerical results. This preliminary version also does not include the ${\cal IZE}$ procedure.

\section{Background}
\label{sec2}
We consider $k$ systems whose $s$ performance measures can be observed through stochastic simulation. Let $\Theta $ denote the set of all systems (i.e., $\Theta = \{ 1,2,\ldots, k\}$) and $Y_{i \ell j}$ for $j=1,2,\ldots,$ denote the $j$th simulation observation associated with the $\ell$th performance measure of system $i$. For any given system $i$ and performance measure $\ell$, $y_{i \ell}$ denotes the expectation of $Y_{i \ell j}$ (i.e., $y_{i \ell}=\E[Y_{i \ell j}]$) and $\sigma_{i \ell}^2$ denotes the variance of $Y_{i \ell j}$ (i.e., $\sigma_{i \ell}^2=\Var[Y_{i \ell j}]$). Let $q_\ell$ be a given threshold for the $\ell$th constraint, $\ell = 1,2,\ldots,s$. System $i$ is defined as feasible if $y_{i \ell} \leq q_\ell$ for all $\ell$, and infeasible otherwise. Our problem is to determine the set of feasible systems and we assume that the observations satisfy the following assumption throughout the paper:

\begin{assumption} \label{assump:normal}
For each $i=1,2,\ldots,k$,\\
\begin{equation*}
\left [\begin{array}{l}
  Y_{i 1 j} \\
  Y_{i 2 j} \\
  \vdots \\
  Y_{i s j}
\end{array} \right]
\overset{iid}{\sim} MN_{s}
\left(
\left [\begin{array}{l}
  y_{i 1} \\
  y_{i 2} \\
  \vdots \\
  y_{i s}
\end{array} \right]
, \Sigma_i
\right),
\end{equation*}
where $\overset{iid}{\sim}$ denotes independent and identically distributed (iid), $MN_{s}$ denotes the $s-$dimensional multivariate normal distribution, and $\Sigma_i$ is the $s \times s$ positive deﬁnite covariance matrix of the vector $(Y_{i 1 j}, Y_{i 2 j}, \ldots, Y_{i s j})$.
\end{assumption}

Assumption~\ref{assump:normal} is reasonable, particularly when the observations of the system are either (i) averages within a replication from transient or steady-state simulations or (ii) batch means \citep{LawKeltonbook}, even though they are not normally distributed. Additionally, the observations of different systems can be statistically dependent (i.e., the systems are said to be dependent). For example, one may employ common random numbers to simulate the different systems \citep{KimNelsonTOMACS}.


To define a correct decision event, \cite{ak} introduce a tolerance level, denoted by $\epsilon_\ell$, which is a user-specified positive real number for each $\ell = 1,2,\ldots,s$. Each system $i$ falls in one of the following three sets:
\begin{align*}
D& \equiv \left\{ i \in \Theta \ |\ y_{i \ell} \leq q_{\ell} - \epsilon_{\ell},\; \mbox{for  } \ell = 1,2,\ldots,s \right\};\\
A& \equiv \left\{ i \in \Theta \ |\ y_{i \ell} < q_{\ell} + \epsilon_{\ell},\; \mbox{for  } \ell = 1,2,\ldots,s \right\} \setminus D; \mbox{ and  }\\
U& \equiv \cup_{\ell=1}^{s} \left\{ i \in \Theta \ |\ q_{\ell}+ \epsilon_{\ell} \leq y_{i\ell} \right\}.
\end{align*}

The systems in the sets $D$, $A$, and $U$ are called desirable, acceptable, and unacceptable systems, respectively. A correct decision for system $i$ is denoted by ${\rm CD}_i$ and it is defined as declaring system $i$ feasible if $i \in D$ and infeasible if $i \in U$. Any decision is considered as a correct decision for $i \in A$. Then, a correct decision for the problem is defined as the event that the correct decisions for all systems are simultaneously made (i.e., ${\rm CD} = \cap_{i=1}^k {\rm CD}_i$), and a statistically valid procedure should satisfy $\Pr({\rm CD}) \ge 1-\alpha$ where $\alpha$ is a confidence level.


Procedure ${\cal F}_B$ due to \cite{bk:constraint} is a statistically valid procedure for feasibility determination. It is based on a fully sequential R\&S procedure in \cite{Fabian} and \cite{KimNelsonTOMACS}. Specifically, the procedure keeps track of a monitoring statistic that is a cumulative sum of the difference between $Y_{i \ell j}$ and $q_\ell$. An observation is sampled at each stage of the procedure while the statistic sojourns within boundaries defining the ``continuation region." When this statistic first exits the continuation region, the feasibility decision is made for constraint $\ell$. Figure~\ref{fig:ctsregion} shows a sample path of the monitoring statistics for system $i$ and constraint $\ell$ (dashed line) and the boundaries of the continuation region (bold line). As shown in the figure, if the statistic first exits through the upper boundary, then we conclude that system $i$ is infeasible for constraint $\ell$. On the other hand, if it first exits through the lower boundary, system $i$ is declared as feasible for constraint $\ell$. When simultaneously considering $k$ systems and $s$ constraints, it is possible to establish statistical validity through the Bonferroni inequality. In this case, a system is declared as feasible when all constraints are satisfied for the system, but it is declared as infeasible whenever one of the constraints is violated.

\begin{figure}[!tb]
\begin{center}
\scalebox{0.60}{\includegraphics{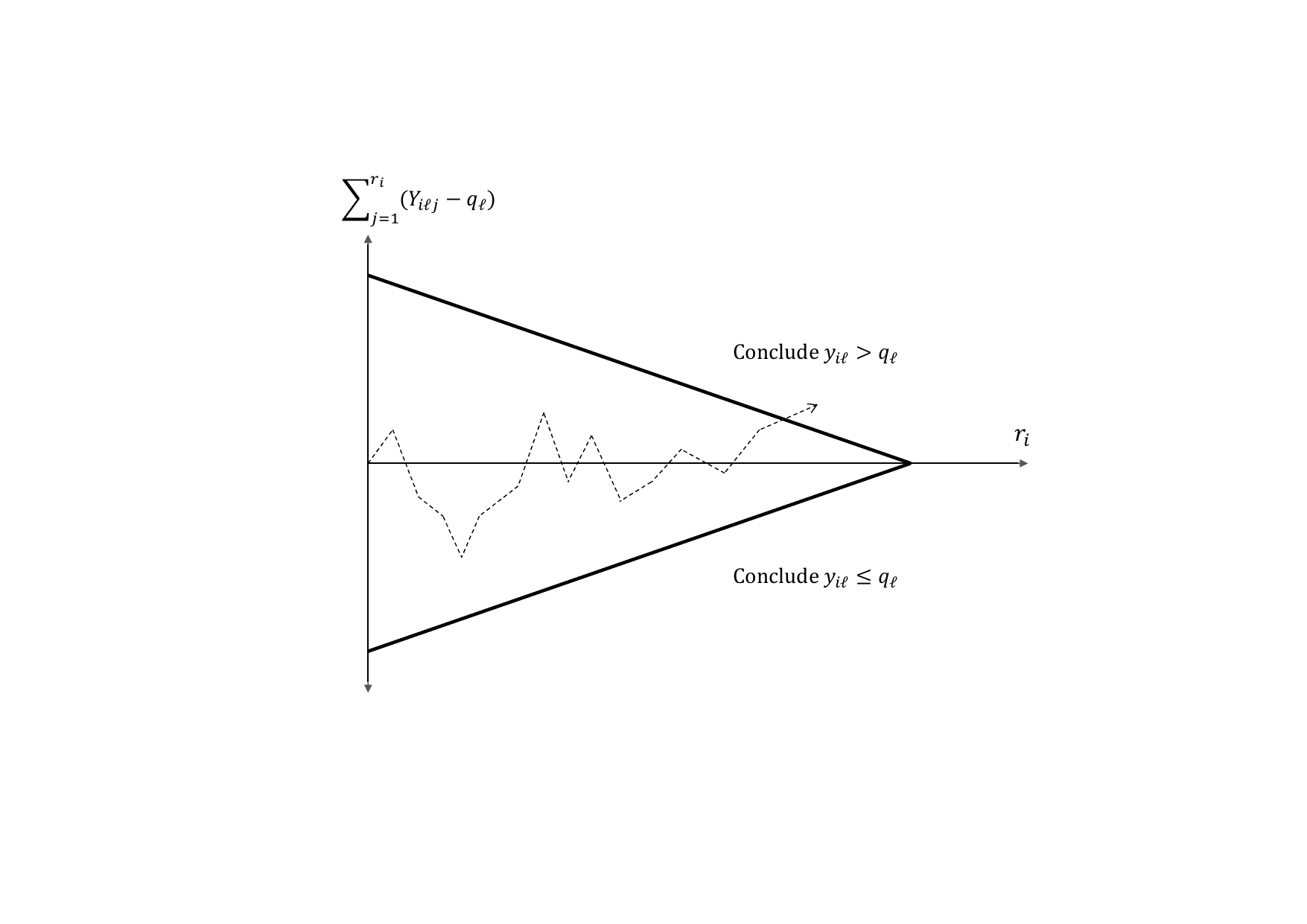}}
\caption{A triangular continuation region for system $i$ and constraint $\ell$.}\label{fig:ctsregion}
\end{center}
\end{figure}

To provide a more specific description of ${\cal F}_B$, we need a list of notations as follows:

\noindent
$n_0 \equiv$ the initial sample size for each system ($n_0 \geq 2$);\\
$r_i \equiv$ the observation counter for system $i$ ($r_i \geq n_0$);\\
$S_{i \ell}^2 \equiv$ the sample variance of $Y_{i \ell 1}, \ldots, Y_{i \ell n_0}$ for system $i$ regarding constraint $\ell$ ($i=1,2, \ldots, k$ and $\ell = 1,2,\ldots, s$);\\
$M \equiv$ the set of systems whose feasibility is not determined yet;\\
$F \equiv$ the set of systems declared as feasible; and\\
${\rm ON}_i \equiv$ the set of constraint indices whose corresponding performance measures for system $i$ have neither been deemed feasible nor infeasible.

As in \cite{Fabian}, we need the following functions to define the continuation region:
\begin{eqnarray}
R(r;v,w,z) & = & \max \left\{ 0, \frac{wz}{2cv} - \frac{v}{2c}r \right\}, \mbox{ for any } v, w, z \in \mathbb{R},\ v\ne 0,\nonumber \\   \vspace{12pt}
g_n(\eta)    & = & \sum_{\ell = 1}^c (-1)^{\ell +1} \left( 1- \frac{1}{2} {\mathbb I}(\ell = c)  \right) \times \left( 1 + \frac{2\eta (2c-\ell) \ell}{c}  \right)^{-n/2}, \label{eq:KNeta}
\end{eqnarray}
where $c, n$ are user-specified positive integers and ${\mathbb I}(\cdot)$ is an indicator function. Then, a detailed description of ${\cal F}_B$ is shown in Algorithm~\ref{alg:FB}, where $|\cdot|$ represents the cardinality of a set.

\begin{algorithm}[!htb]
	\caption{\hspace{-0.3em}: \textbf{Procedure ${\cal F}_B$}}\label{alg:FB}
    {\fontsize{11}{24}\selectfont
		\begin{algorithmic}\single
			\State [{\bf Setup}:]  Choose initial sample size $n_0 \geq 2$, confidence level $0<1-\alpha<1$, $c\in \mathbb{N}^+$, and $\Theta = \{1,2, \ldots, k\}$. Set tolerance levels $\epsilon_\ell>0$ and thresholds $q_{\ell}$ for constraint $\ell=1,2,\ldots, s$. Set $M = \{1,2,\ldots, k\}$ and $F = \emptyset$. Calculate $h_B^2= 2c\eta_B (n_0-1)$, where $\eta_B>0$ satisfies
\begin{equation} \label{eq:getafb}
g_{n_0-1}(\eta_B) = \beta_B = \left\{
\begin{array}{ll}
  \left[1-(1-\alpha)^{1/k}\right]/s, & \mbox{ if systems are independent};\\
  \alpha/(ks), & \mbox{ otherwise}.
\end{array}\right.
\end{equation}
	\State [{\bf Initialization}:]
			\For{$i \in \Theta$}
            \begin{itemize}
				\item[] Obtain $n_0$ observations $Y_{i \ell 1}, Y_{i \ell 2}, \ldots, Y_{i \ell n_0}$, compute $S_{i \ell}^2$, and set $Z_{i\ell}=0$ for $\ell=1,2,\ldots, s$.
				\item[] Set $r_i=n_0$ and ${\rm ON}_i = \{1,2,\ldots, s\}$.
			\end{itemize}
            \EndFor
			\State [{\bf Feasibility Check}:]
            Set $M^{old} = M$.
                        \For{$i \in M^{old}$}
			\For{$\ell \in {\rm ON}_i$}
			\State If $\sum_{j=1}^{r_i} (Y_{i \ell j}-q_{\ell} ) \leq -R(r_i; \epsilon_{\ell}, h_B^2, S_{i \ell}^2)$, set $Z_{i \ell} = 1$ and ${\rm ON}_i = {\rm ON}_i \setminus \{\ell \}$.
			\State If $\sum_{j=1}^{r_i} (Y_{i \ell j}-q_{\ell} ) \geq R(r_i; \epsilon_{\ell}, h_B^2, S_{i \ell}^2)$, eliminate $i$ from $M$ and exit the current \textbf{for} loop.
			\EndFor\\
$\quad\quad$ If $Z_{i\ell}=1$ for all $\ell$, move $i$ from $M$ to $F$.
\EndFor
			\State[{\bf Stopping Condition}:]
			\begin{itemize}
				\item[] If $M = \emptyset$, return $F$. Otherwise, {}set $r_i =r_i +1$ for all $i \in M$, take one additional observation $Y_{i\ell r_i}$ for all $i \in M$ and all $\ell \in {\rm ON}_i$, and then go to [{\bf Feasibility Check}].
			\end{itemize}
		\end{algorithmic}
	}
\end{algorithm}

{\bf Remark 1.} To implement Procedure ${\cal F}_B$, \cite{bk:constraint} recommend the choice of $c=1$. In this case, $g_n(\eta) = \frac{1}{2} \left(1+ 2\eta \right)^{-n/2}$ and the solution to equation~(\ref{eq:getafb}) is $\eta_B = \frac{1}{2}\left[(2\beta_B)^{-2/(n_0-1)}-1\right]$.


\section{Indifference-Zone Relaxation Procedure}
\label{sec3}
In this section, we discuss a new procedure, referred to as the Indifference-Zone Relaxation (${\cal IZR}$) procedure, in Section~\ref{sec3_1}, prove the statistical validity of our proposed procedure in Section~\ref{sec3_SG}, and analytically explain and empirically demonstrate the impact of key parameter values of ${\cal IZR}$ in Sections~\ref{sec3_PA} and \ref{sec3_ES}, respectively.

\subsection{Procedure ${\cal IZR}$}
\label{sec3_1}

In this section, we describe our new ${\cal IZR}$ procedure. For each constraint $\ell$, the ${\cal IZR}$ procedure introduces $T_\ell$ relaxed tolerance levels, denoted by $\epsilon_\ell^{(\tau)}$ for $\tau = 1, 2, \ldots, T_\ell$, satisfying $\epsilon_\ell^{(1)}> \cdots > \epsilon_\ell^{(T_\ell)} = \epsilon_\ell$. For each relaxed tolerance level $\epsilon \in \{\epsilon_\ell^{(1)}, \epsilon_\ell^{(2)}, \ldots, \epsilon_\ell^{(T_\ell)}\}$, our procedure utilizes two subroutines: ${\cal F_U}$ and ${\cal F_D}$. Subroutine ${\cal F_U}$ is designed to quickly detect clearly ${\cal U}$ndesirable systems using threshold $q_\ell + \epsilon_\ell - \epsilon$ with tolerance level $\epsilon$ while subroutine ${\cal F_D}$ is designed to facilitate the detection of ${\cal D}$esirable systems using threshold $q_\ell - \epsilon_\ell + \epsilon$ with tolerance level $\epsilon$. Observe that $q_\ell + \epsilon_\ell - \epsilon < q_\ell < q_\ell - \epsilon_\ell + \epsilon$ for $\epsilon \in \{\epsilon_\ell^{(1)}, \epsilon_\ell^{(2)}, \ldots, \epsilon_\ell^{(T_\ell-1)}\}$ and $q_\ell + \epsilon_\ell - \epsilon = q_\ell = q_\ell - \epsilon_\ell + \epsilon$ for $\epsilon = \epsilon_\ell^{(T_\ell)}$. Subroutines ${\cal F_U}$ and ${\cal F_D}$ simultaneously check the feasibility of each system $i$ for all tolerance levels using the monitoring statistics, $(\sum_{j=1}^{r_i} Y_{i \ell j}) - r_i(q_\ell+\epsilon_\ell- \epsilon)$ and $(\sum_{j=1}^{r_i} Y_{i \ell j}) - r_i(q_\ell-\epsilon_\ell + \epsilon)$, respectively. If both ${\cal F_U}$ and ${\cal F_D}$ result in the same feasibility decision (declaring system $i$ as either feasible or infeasible regarding constraint $\ell$) with some $\epsilon \in \{\epsilon_\ell^{(1)}, \epsilon_\ell^{(2)}, \ldots, \epsilon_\ell^{(T_\ell)}\}$, then the ${\cal IZR}$ procedure returns that feasibility decision for system $i$ regarding constraint $\ell$. Subroutines ${\cal F_U}$ and ${\cal F_D}$ maintain their own sets of relaxed tolerance levels, denoted by ${\cal E}_{{\cal U}i\ell}$ and ${\cal E}_{{\cal D}i\ell}$, respectively. For each system $i$, the sets ${\cal E}_{{\cal U}i\ell}$ and ${\cal E}_{{\cal D}i\ell}$ are initialized as $\{\epsilon_\ell^{(1)}, \epsilon_\ell^{(2)}, \ldots, \epsilon_\ell^{(T_\ell)}\}$ and are updated as feasibility decisions are made. These two sets may be different because (i) $\epsilon_\ell^{(\tau)}$ needs to be removed from ${\cal E}_{{\cal U}i\ell}$ or ${\cal E}_{{\cal D}i\ell}$ once the monitoring statistics of ${\cal F_U}$ or ${\cal F_D}$ first exit the continuation region with $\epsilon = \epsilon_\ell^{(\tau)}$ which is defined by $\pm R(r_i;\epsilon, h^2, S_{i\ell}^2)$, and (ii) their exit times may be different if $\tau < T_\ell$ (i.e., $\epsilon = \epsilon_\ell^{(\tau)} > \epsilon_\ell$).

\begin{figure}[!t]
        \centering
        \begin{minipage}[t]{\textwidth}
            \centering
            \subfigure[$\epsilon = \epsilon^{(1)}_\ell = 2\epsilon_\ell$]{
            \scalebox{0.55}{\includegraphics[width=10.0in]{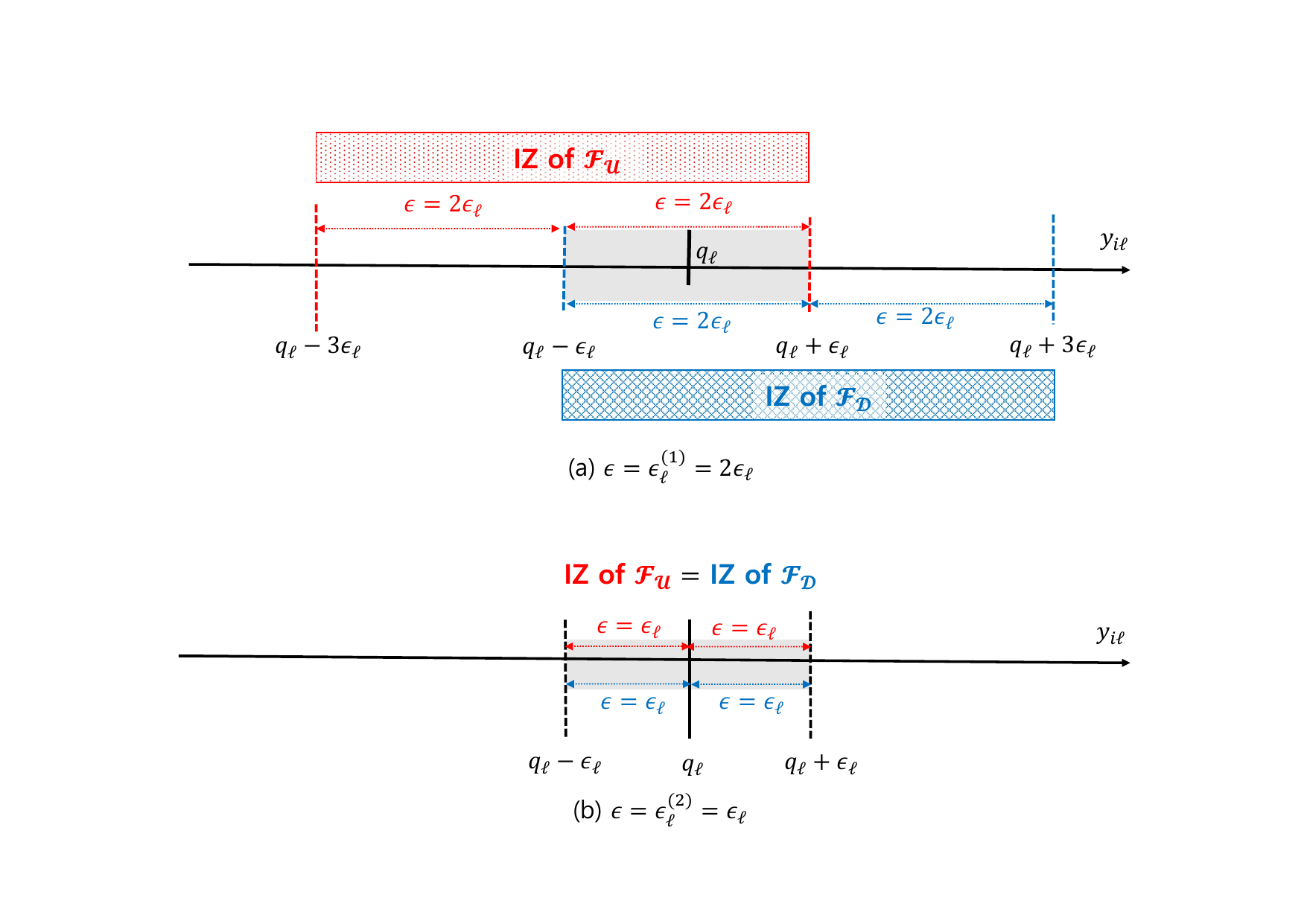}}}
        \end{minipage}
        \begin{minipage}[t]{\textwidth}
            \centering
            \subfigure[$\epsilon = \epsilon^{(2)}_\ell = \epsilon_\ell$]{
            \scalebox{0.55}{\includegraphics[width=10.0in]{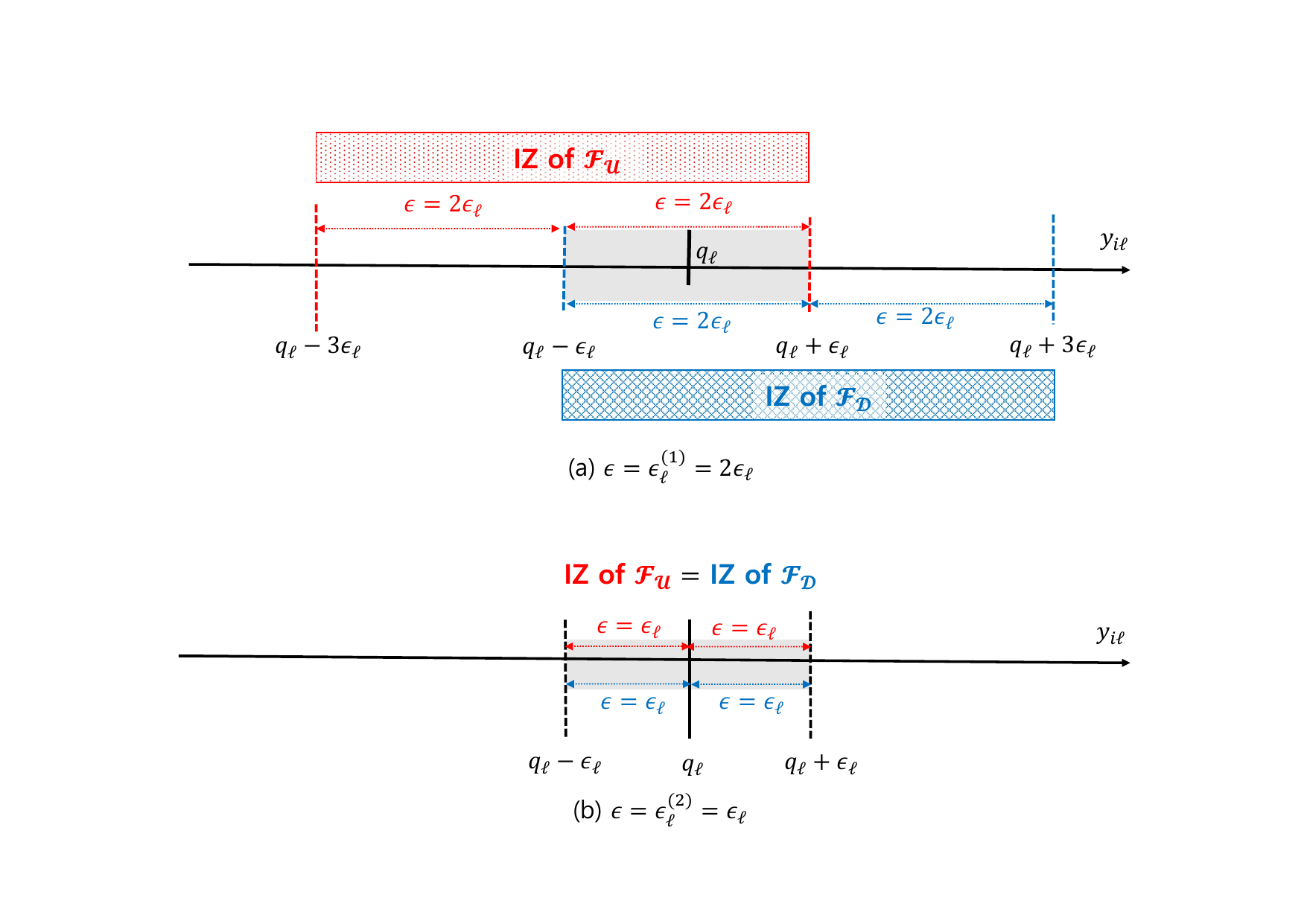}}}
        \end{minipage}
        \vspace{-6pt}
        \caption{An example of thresholds and acceptable regions for ${\cal F_U}$ and ${\cal F_D}$ with $T_\ell=2$.}
	\label{fig:example}
\end{figure}

We explain how the ${\cal IZR}$ procedure with subroutines ${\cal F_U}$ and ${\cal F_D}$ works using a simple example with one constraint. Specifically, we consider $T_\ell=2$, $\epsilon_\ell^{(1)} = 2 \epsilon_\ell$, and $\epsilon_\ell^{(2)}=\epsilon_\ell$. Figure~\ref{fig:example}(a) presents the acceptable regions (i.e., the IZs) of ${\cal F_U}$ and ${\cal F_D}$ for $\epsilon^{(1)}_\ell = 2\epsilon_\ell$ with the red-dotted and blue-checkered intervals, respectively. Additionally, the original acceptable region with threshold $q_\ell$ and tolerance level $\epsilon_\ell$ is shown as a gray-shaded interval. From the figure, one can see that the threshold of ${\cal F_U}$ is $q_\ell+\epsilon_\ell-\epsilon_\ell^{(1)} = q_\ell-\epsilon_\ell$, and the upper bound $\left(q_\ell-\epsilon_\ell\right)+\epsilon_\ell^{(1)} = q_\ell + \epsilon_\ell$ of its red-dotted acceptable region of ${\cal F_U}$ matches that of the original gray-shaded acceptable region. Similarly, the threshold of ${\cal F_D}$ is $q_\ell-\epsilon_\ell+\epsilon_\ell^{(1)} = q_\ell+\epsilon_\ell$, and the lower bound $\left(q_\ell+\epsilon_\ell\right)-\epsilon_\ell^{(1)} = q_\ell - \epsilon_\ell$ of its blue-checkered acceptable region of ${\cal F_D}$ matches that of the original gray-shaded acceptable region. Figure~\ref{fig:example}(b) presents the acceptable regions of ${\cal F_U}$ and ${\cal F_D}$ for $\epsilon^{(2)}_\ell = \epsilon_\ell$. As $\epsilon^{(2)}_\ell$ is the same as the original tolerance level, ${\cal F_U}$ and ${\cal F_D}$ become identical and have the same acceptable region as ${\cal F}_B$. 

Then we can consider the following possible cases for constraint $\ell$ of system $i$:
\begin{description}
\item [Case 1:] If $y_{i\ell} \le \left(q_\ell +\epsilon_\ell-\epsilon_\ell^{(1)}\right)-\epsilon_\ell^{(1)} = q_\ell - 3 \epsilon_\ell$, then system $i$ is desirable for both ${\cal F_U}$ and ${\cal F_D}$ under the relaxed tolerance level $\epsilon^{(1)}_\ell = 2\epsilon_\ell$ and is likely to be declared feasible by both subroutines, which is a correct decision for system $i$.

\item [Case 2:] If $q_\ell-3 \epsilon_\ell < \left(q_\ell -\epsilon_\ell+\epsilon_\ell^{(1)}\right)-\epsilon_\ell^{(1)} = q_\ell-\epsilon_\ell$, then system $i$ is acceptable for ${\cal F_U}$ but desirable for ${\cal F_D}$ under the relaxed tolerance level $\epsilon^{(1)}_\ell$. Therefore, ${\cal F_D}$ is likely to declare system $i$ feasible while ${\cal F_U}$ can declare system $i$ either feasible or infeasible. If they make the same decision (i.e., feasible decision), then the procedure stops and declares system $i$ feasible, which is a correct decision. Otherwise, ${\cal IZR}$ proceeds with the subroutines ${\cal F_U}$ and ${\cal F_D}$ using the smaller tolerance level $\epsilon^{(2)}_\ell$.

\item [Case 3:] If $q_\ell-\epsilon_\ell <\left(q_\ell +\epsilon_\ell-\epsilon_\ell^{(1)}\right)+\epsilon_\ell^{(1)} = q_\ell + \epsilon_\ell$, then system $i$ is acceptable for both subroutines under the relaxed tolerance level $\epsilon^{(1)}_\ell$. Thus, if both subroutines make the same decision, any decision is a correct decision for system $i$. Otherwise, it proceeds with the two subroutines using the smaller tolerance level $\epsilon^{(2)}_\ell$.

\item [Case 4:] If $q_\ell + \epsilon_\ell \le y_{i\ell} < \left(q_\ell -\epsilon_\ell+\epsilon_\ell^{(1)}\right)+\epsilon_\ell^{(1)}= q_\ell + 3\epsilon_\ell$, then system $i$ is unacceptable for ${\cal F_U}$ but acceptable for ${\cal F_D}$ under the relaxed tolerance level $\epsilon^{(1)}_\ell$. By similar arguments as in Case 2, if the two subroutines make the same infeasibility decision, then the procedure stops and declares system $i$ infeasible, which is a correct decision. Otherwise, it proceeds with the subroutines using the smaller tolerance level $\epsilon^{(2)}_\ell$.

\item [Case 5:] If $y_{i\ell} \ge  q_\ell + 3 \epsilon_\ell$, then system $i$ is unacceptable for both subroutines under the relaxed tolerance level $\epsilon^{(1)}_\ell$ and is likely to be declared infeasible by both subroutines, which is a correct decision for system $i$.
\end{description}

A detailed description of our proposed procedure is given in Algorithm~\ref{alg:IZR} whose statistical properties are precisely provided in Section~\ref{sec3_SG}.

\begin{algorithm}[!htb]
	\caption{\hspace{-0.3em}: \textbf{Procedure ${\cal IZR}$}}\label{alg:IZR}
    {\fontsize{11}{24}\selectfont
		\begin{algorithmic}\single 	
			\State [{\bf Setup}:] Choose initial sample size $n_0 \geq 2$, confidence level $0<1-\alpha<1$, $c\in \mathbb{N}^+$, and $\Theta = \{1,2, \ldots, k\}$. Set tolerance levels $\epsilon_\ell>0$, thresholds $q_{\ell}$, and relaxed tolerance sets ${\cal E}_{{\cal U}i\ell} = {\cal E}_{{\cal D}i\ell} = \{\epsilon_\ell^{(1)}, \ldots, \epsilon_\ell^{(T_\ell)}\}$ for each $i \in \Theta$ where $T_\ell \in \mathbb{N}^+$ and $\epsilon_\ell^{(1)}> \cdots > \epsilon_\ell^{(T_\ell)} = \epsilon_\ell$ for constraint $\ell=1,2,\ldots, s$. Set $M = \{1,2,\ldots, k\}$ and $F = \emptyset$. Calculate $h^2= 2c\eta (n_0-1)$, where $\eta>0$ satisfies
\begin{equation} \label{eq:getafb1}
g_{n_0-1}(\eta)=\beta = \left\{
\begin{array}{ll}
  \left[1-(1-\alpha)^{1/k}\right]/(\sum_{\ell=1}^s T_\ell), & \mbox{ if systems are independent}; \\
  \alpha/(k\sum_{\ell=1}^s T_\ell), & \mbox{ otherwise}.
\end{array}\right.
\end{equation}
			\State [{\bf Initialization}:] 
            \For{$i \in\Theta$}
		      \begin{itemize}\small
				\item[] Obtain $n_0$ observations $Y_{i \ell 1}, Y_{i \ell 2}, \ldots, Y_{i \ell n_0}$ and compute $S_{i \ell}^2$ for $\ell=1,2,\ldots, s$.
				
				\item[] Set $r_i=n_0$, ${\rm ON}_i = \{1,2,\ldots, s\}$, and $Z_{i \ell}=Z_{{\cal U}i \ell}^{(\epsilon)} = Z_{{\cal D}i \ell}^{(\epsilon)}=0$ for all $\epsilon \in {\cal E}_{{\cal U}i\ell} (= {\cal E}_{{\cal D}i\ell})$ and $\ell \in {\rm ON}_i$.
			\end{itemize} \EndFor
			\State [{\bf Feasibility Check}:]
            Set $M^{old} = M$.
            \For{$i \in M^{old}$}
			\For{$\ell \in {\rm ON}_i$}
			    \State [{\bf Subroutine ${\cal F_U}$}:]
                \For{$\epsilon \in {\cal E}_{{\cal U}i\ell}$},
                \begin{itemize}\small
                \item[] $\quad\quad\quad$If $(\sum_{j=1}^{r_{i}} Y_{i \ell j}) - r_{i}(q_\ell+\epsilon_\ell-\epsilon) \leq - R(r_{i};\epsilon, h^2, S_{i \ell}^2)$, set $Z_{{\cal U}i \ell}^{(\epsilon)}=1$ and ${\cal E}_{{\cal U}i\ell} = {\cal E}_{{\cal U}i\ell}\setminus \{ \epsilon\}$;
                \item[] $\quad\quad\quad$Else if $(\sum_{j=1}^{r_{i}} Y_{i \ell j}) - r_{i}(q_\ell+\epsilon_\ell-\epsilon) \geq R(r_{i};\epsilon, h^2, S_{i \ell}^2)$, set $Z_{{\cal U}i \ell}^{(\epsilon)}=-1$ and ${\cal E}_{{\cal U}i\ell} = {\cal E}_{{\cal U}i\ell}\setminus \{ \epsilon\}$.
                \item[]$\quad\quad\quad$If  $Z_{{\cal U}i \ell}^{(\epsilon)}\times Z_{{\cal D}i \ell}^{(\epsilon)}=1$, set $Z_{i\ell}=Z_{{\cal U}i \ell}^{(\epsilon)}(=Z_{{\cal D}i \ell}^{(\epsilon)})$ and ${\rm ON}_i = {\rm ON}_i\setminus \{ \ell\}$.
                \end{itemize}
                \EndFor
                \State [{\bf Subroutine ${\cal F_D}$}:]
                \For{$\epsilon \in {\cal E}_{{\cal D}i\ell}$},
                \begin{itemize}\small
                \item[] $\quad\quad\quad$If $(\sum_{j=1}^{r_{i}} Y_{i \ell j}) - r_{i}(q_\ell-\epsilon_\ell+\epsilon) \leq - R(r_{i};\epsilon, h^2, S_{i \ell}^2)$, set $Z_{{\cal D}i \ell}^{(\epsilon)}=1$ and ${\cal E}_{{\cal D}i\ell} = {\cal E}_{{\cal D}i\ell}\setminus \{ \epsilon\}$;
                \item[] $\quad\quad\quad$Else if $(\sum_{j=1}^{r_{i}} Y_{i \ell j}) - r_{i}(q_\ell-\epsilon_\ell+\epsilon) \geq R(r_{i};\epsilon, h^2, S_{i \ell}^2)$, set $Z_{{\cal D}i \ell}^{(\epsilon)}=-1$ and ${\cal E}_{{\cal D}i\ell} = {\cal E}_{{\cal D}i\ell}\setminus \{ \epsilon\}$.
                \item[]$\quad\quad\quad$If  $Z_{{\cal U}i \ell}^{(\epsilon)}\times Z_{{\cal D}i \ell}^{(\epsilon)}=1$, set $Z_{i\ell}=Z_{{\cal U}i \ell}^{(\epsilon)}(=Z_{{\cal D}i \ell}^{(\epsilon)})$ and ${\rm ON}_i = {\rm ON}_i \setminus \{ \ell\}$.
			    \end{itemize}
                \EndFor\\
                $\quad\quad\quad$If $Z_{i\ell}=-1$, eliminate $i$ from $M$ and exit the current \textbf{for} loop.
            \EndFor \\
            $\quad\quad$If $Z_{i\ell}=1$ for all $\ell$, move $i$ from $M$ to $F$.
            \EndFor
            \State[{\bf Stopping Condition}:]
			\begin{itemize}\small
				\item[] If $M=\emptyset$, return $F$. Otherwise, set $r_i =r_i +1$ for all $i \in M$, take one additional observation $Y_{i\ell r_i}$ for all $i \in M$ and all $\ell \in {\rm ON}_i$, and then go to [{\bf Feasibility Check}].
			\end{itemize}
		\end{algorithmic}
	}
\end{algorithm}

\noindent{\bf Remark 2.} Similar to ${\cal F}_B$, the choice of $c=1$ is recommended for ${\cal IZR}$, resulting in $\eta = \frac{1}{2}\left[(2\beta)^{-2/(n_0-1)}-1\right]$, where $\beta$ is defined in {\bf Setup}. 



\subsection{Statistical Guarantee of ${\cal IZR}$}
\label{sec3_SG}
In this section, we first discuss the statistical guarantee of the ${\cal IZR}$ procedure for the case of a single system with a single constraint and then extend the discussion to the case of multiple systems with multiple constraints.

Consider constraint $\ell$ of system $i$ in isolation. For a fixed $\epsilon \in \{\epsilon_\ell^{(1)}, \epsilon_\ell^{(2)}, \ldots, \epsilon_\ell^{(T_\ell)}\}$, let ${\cal U}^{(\epsilon)}_{i\ell}(1)$ and ${\cal U}^{(\epsilon)}_{i\ell}(-1)$ be the events that sample path $(\sum_{j=1}^{r} Y_{i \ell j}) - r(q_\ell+\epsilon_\ell-\epsilon)$ first exits the lower and upper bounds of the continuation region defined by $\left[-R(r;\epsilon, h^2, S_{i \ell}^2), R(r;\epsilon, h^2, S_{i \ell}^2)\right]$, respectively, as $r$ increases by 1 from $n_0$. Similarly, for a fixed $\epsilon \in \{\epsilon_\ell^{(1)}, \epsilon_\ell^{(2)}, \ldots, \epsilon_\ell^{(T_\ell)}\}$, let ${\cal D}^{(\epsilon)}_{i\ell}(1)$ and ${\cal D}^{(\epsilon)}_{i\ell}(-1)$ be the events that sample path $(\sum_{j=1}^{r} Y_{i \ell j}) - r(q_\ell-\epsilon_\ell+\epsilon)$ first exits the lower and upper bounds of the continuation region defined by $\left[-R(r;\epsilon, h^2, S_{i \ell}^2), R(r;\epsilon, h^2, S_{i \ell}^2)\right]$, respectively, as $r$ increases by 1 from $n_0$. Then, we may derive the following lemma.

\begin{lemma} \label{lemma:IZR1}
Under Assumption~\ref{assump:normal}, when $y_{i\ell} \geq q_\ell+\epsilon_{\ell}$, ${\cal IZR}$ guarantees
\begin{equation} \label{eq:LowerProb}
\Pr\left\{{\cal U}^{(\epsilon)}_{i\ell}(1)\right\} \leq \beta,\; \mbox{ for any } \epsilon \in \{\epsilon_\ell^{(1)}, \ldots, \epsilon_\ell^{(T_\ell)}\},
\end{equation} 
and
when $y_{i\ell} \leq q_\ell-\epsilon_{\ell}$, ${\cal IZR}$ guarantees
\begin{equation} \label{eq:UpperProb}
\Pr\left\{{\cal D}^{(\epsilon)}_{i\ell}(-1)\right\} \leq \beta,\; \mbox{ for any } \epsilon \in \{\epsilon_\ell^{(1)}, \ldots, \epsilon_\ell^{(T_\ell)}\}.
\end{equation} \vspace{-36pt}
\end{lemma}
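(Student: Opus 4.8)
The plan is to observe that, for each fixed relaxed level $\epsilon$, the two monitoring statistics driving ${\cal F_U}$ and ${\cal F_D}$ are nothing but the monitoring statistic of Procedure ${\cal F}_B$ run on a \emph{single} system and a \emph{single} ``virtual'' constraint with a shifted threshold, so that Lemma~\ref{lemma:IZR1} reduces to the per-constraint correctness guarantee that already underlies ${\cal F}_B$. First I would make this correspondence precise: running ${\cal F_U}$ at relaxed level $\epsilon$ on constraint $\ell$ of system $i$ is identical to running ${\cal F}_B$ on one system and one constraint with threshold $q_\ell+\epsilon_\ell-\epsilon$ and tolerance level $\epsilon$, and running ${\cal F_D}$ at level $\epsilon$ is identical to running ${\cal F}_B$ with threshold $q_\ell-\epsilon_\ell+\epsilon$ and tolerance level $\epsilon$, in both cases with $(\beta,\eta,h^2)$ playing the roles of $(\beta_B,\eta_B,h_B^2)$. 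Under Assumption~\ref{assump:normal} the increments $Y_{i\ell 1},Y_{i\ell 2},\ldots$ are i.i.d.\ normal and $S_{i\ell}^2$ is the sample variance of the first $n_0$ of them, exactly as ${\cal F}_B$ requires, and $[-R(r;\epsilon,h^2,S_{i\ell}^2),R(r;\epsilon,h^2,S_{i\ell}^2)]$ is the usual triangular continuation region for that virtual problem.

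Next I would translate the hypotheses into the ``unacceptable''/``desirable'' language of Section~\ref{sec2}. If $y_{i\ell}\ge q_\ell+\epsilon_\ell$, then $y_{i\ell}\ge(q_\ell+\epsilon_\ell-\epsilon)+\epsilon$, so system $i$ is unacceptable for the virtual constraint of ${\cal F_U}$ at tolerance $\epsilon$, and ${\cal U}^{(\epsilon)}_{i\ell}(1)$ is precisely the event that this virtual run of ${\cal F}_B$ exits through the lower boundary, i.e., wrongly declares the system feasible. Symmetrically, if $y_{i\ell}\le q_\ell-\epsilon_\ell$, then $y_{i\ell}\le(q_\ell-\epsilon_\ell+\epsilon)-\epsilon$, so system $i$ is desirable for the virtual constraint of ${\cal F_D}$ at tolerance $\epsilon$, and ${\cal D}^{(\epsilon)}_{i\ell}(-1)$ is the event that the virtual run exits through the upper boundary, i.e., wrongly declares the system infeasible. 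Here the events refer to the idealized, never-truncated sample path (``as $r$ increases by $1$ from $n_0$''), so algorithmic stopping plays no role at this stage.

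It then remains to invoke the single-system, single-constraint correctness guarantee behind Procedure ${\cal F}_B$ of \cite{bk:constraint}, which itself rests on the partial-sum bound against a triangular continuation region due to \cite{Fabian}: for a truly unacceptable (resp.\ desirable) system, the probability that the monitoring statistic first crosses the boundary associated with the incorrect decision is at most $g_{n_0-1}(\eta)$, which equals $\beta$ by the choice of $\eta$ in~(\ref{eq:getafb1}). Applying this with threshold $q_\ell+\epsilon_\ell-\epsilon$ yields $\Pr\{{\cal U}^{(\epsilon)}_{i\ell}(1)\}\le\beta$, and with threshold $q_\ell-\epsilon_\ell+\epsilon$ yields $\Pr\{{\cal D}^{(\epsilon)}_{i\ell}(-1)\}\le\beta$, which is the claim. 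If a self-contained derivation is wanted, I would instead (i) reduce to the least-favorable configuration $y_{i\ell}=q_\ell+\epsilon_\ell$ (resp.\ $y_{i\ell}=q_\ell-\epsilon_\ell$) by a coupling on the increments, using that shifting the drift further away from the ``wrong'' boundary can only decrease the wrong-exit probability, and then (ii) apply the bound of \cite{Fabian} for partial sums of i.i.d.\ normals against a triangular region with drift exactly $\epsilon$.

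The one genuinely delicate point, and the step I expect to be the main obstacle in a from-scratch proof, is the dependence between $S_{i\ell}^2$ and the partial-sum process: $S_{i\ell}^2$ is built from the same observations $Y_{i\ell 1},\ldots,Y_{i\ell n_0}$ that appear in the monitoring statistic, so one cannot simply condition on $S_{i\ell}^2$ and treat the boundary as a deterministic curve independent of the walk. This is exactly the configuration the lemma of \cite{Fabian} is designed to handle, which is why the cleanest route is to cite it --- equivalently, to cite the per-constraint guarantee of ${\cal F}_B$ that already packages it --- rather than to re-derive it. A secondary and routine point is the monotonicity of the wrong-exit probability in the drift used in the least-favorable-configuration reduction, which follows from a straightforward increment-by-increment coupling.
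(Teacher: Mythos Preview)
Your proposal is correct and follows essentially the same route as the paper: reduce each subroutine at a fixed $\epsilon$ to a single-system, single-constraint feasibility check with shifted threshold, observe that the hypothesis places the system in the unacceptable (resp.\ desirable) region for that virtual constraint so the worst-case drift equals the tolerance level $\epsilon$, and then invoke the triangular-region error bound underlying ${\cal F}_B$. The paper cites the proof of Theorem~1 in \cite{KimNelsonTOMACS} directly rather than \cite{bk:constraint} or \cite{Fabian}, but this is the same chain of results, and your more explicit discussion of the least-favorable-configuration reduction and the $S_{i\ell}^2$ dependence merely spells out what the paper's one-line citation packages.
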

\begin{proof}
When $y_{i\ell} = q_\ell+\epsilon_{\ell}$, the drift of the discrete process, $(\sum_{j=1}^{r} Y_{i \ell j}) - r(q_\ell+\epsilon_\ell-\epsilon)$, equals $\epsilon$ which is the tolerance level, and thus the result in (\ref{eq:LowerProb}) follows from the proof of Theorem 1 in \cite{KimNelsonTOMACS}. Similarly, when  $y_{i\ell} = q_\ell-\epsilon_{\ell}$, the drift of the discrete process, $(\sum_{j=1}^{r} Y_{i \ell j}) - r(q_\ell-\epsilon_\ell+\epsilon)$, equals $-\epsilon$, and thus the result in (\ref{eq:UpperProb}) also follows from the proof of Theorem 1 in \cite{KimNelsonTOMACS}.
\end{proof}

Let $\epsilon^*_{i\ell}$ be the random value of $\epsilon$ for which the feasibility decision is finally made by ${\cal IZR}$ (i.e., $Z_{{\cal U}i\ell}^{(\epsilon^*_{i\ell})}\times Z_{{\cal D}i\ell}^{(\epsilon^*_{i\ell})}=1$). Let ${\rm CD}_{i\ell}$ be the correct decision event that $Z_{{\cal U}i\ell}^{(\epsilon^*_{i\ell})}=Z_{{\cal D}i\ell}^{(\epsilon^*_{i\ell})}=1$ (or $Z_{{\cal U}i\ell}^{(\epsilon^*_{i\ell})}=Z_{{\cal D}i\ell}^{(\epsilon^*_{i\ell})}=-1$), i.e., system $i$ is declared as feasible (or infeasible) regarding constraint $\ell$ by ${\cal IZR}$, when $y_{i\ell} \leq q_\ell - \epsilon_\ell$ (or $y_{i\ell} \geq q_\ell + \epsilon_\ell$). Similarly, let ${\rm ICD}_{i\ell}(={\rm CD}_{i\ell}^c)$ be the incorrect decision event that $Z_{{\cal U}i\ell}^{(\epsilon^*_{i\ell})}=Z_{{\cal D}i\ell}^{(\epsilon^*_{i\ell})}=1$ (or $Z_{{\cal U}i\ell}^{(\epsilon^*_{i\ell})}=Z_{{\cal D}i\ell}^{(\epsilon^*_{i\ell})}=-1$), i.e., system $i$ is declared as feasible (or infeasible) regarding constraint $\ell$ by ${\cal IZR}$, when $y_{i\ell} \geq q_\ell + \epsilon_\ell$ (or $y_{i\ell} \leq q_\ell - \epsilon_\ell$). It is noted that ${\rm ICD}_{i\ell}$ cannot happen (i.e., $\Pr\left\{{\rm ICD}_{i\ell}\right\}=1-\Pr\left\{{\rm CD}_{i\ell}\right\}=0$) when $q_\ell - \epsilon_\ell < y_{i\ell} < q_\ell + \epsilon_\ell$ due to the definition of acceptable systems. Then, we derive the following lemma.

\begin{lemma} \label{lemma:IZR2}
Under Assumption~\ref{assump:normal}, when either $y_{i\ell} \geq q_\ell+\epsilon_{\ell}$ or $y_{i\ell} \leq q_\ell-\epsilon_{\ell}$, ${\cal IZR}$ guarantees
\begin{eqnarray*}
\Pr \left\{{\rm ICD}_{i\ell}\right\}&\leq& T_\ell \beta.
\end{eqnarray*}
\end{lemma}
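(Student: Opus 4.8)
The plan is to bound the incorrect-decision event ${\rm ICD}_{i\ell}$ by a union of the fixed-$\epsilon$ one-sided first-exit events whose probabilities are already controlled by Lemma~\ref{lemma:IZR1}, and then to apply the Bonferroni inequality across the $T_\ell$ relaxed tolerance levels $\epsilon_\ell^{(1)},\ldots,\epsilon_\ell^{(T_\ell)}$. Since the hypothesis is a disjunction, I would treat the two cases $y_{i\ell}\ge q_\ell+\epsilon_\ell$ and $y_{i\ell}\le q_\ell-\epsilon_\ell$ separately; they are mirror images of one another, with ${\cal F_U}$ in the first playing the role that ${\cal F_D}$ plays in the second.

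Consider first $y_{i\ell}\ge q_\ell+\epsilon_\ell$, so that ${\rm ICD}_{i\ell}$ can occur only if ${\cal IZR}$ declares system $i$ feasible for constraint $\ell$. By the stopping rule, that declaration is issued only at the random level $\epsilon^*_{i\ell}\in\{\epsilon_\ell^{(1)},\ldots,\epsilon_\ell^{(T_\ell)}\}$ at which $Z_{{\cal U}i\ell}^{(\epsilon^*_{i\ell})}=Z_{{\cal D}i\ell}^{(\epsilon^*_{i\ell})}=1$, and in particular $Z_{{\cal U}i\ell}^{(\epsilon^*_{i\ell})}=1$. Because ${\cal F_U}$ inspects its monitoring statistic $(\sum_{j=1}^{r}Y_{i\ell j})-r(q_\ell+\epsilon_\ell-\epsilon^*_{i\ell})$ at every stage $r\ge n_0$ until this statistic first leaves $[-R(r;\epsilon^*_{i\ell},h^2,S_{i\ell}^2),\,R(r;\epsilon^*_{i\ell},h^2,S_{i\ell}^2)]$, and sets its flag to $1$ exactly when that first departure is through the lower boundary, the event $\{Z_{{\cal U}i\ell}^{(\epsilon^*_{i\ell})}=1\}$ coincides with ${\cal U}^{(\epsilon^*_{i\ell})}_{i\ell}(1)$. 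The one point I would treat carefully is that although ${\cal IZR}$ may terminate early for this system--constraint pair (once $\ell$ leaves ${\rm ON}_i$ or $i$ leaves $M$), on ${\rm ICD}_{i\ell}$ the flag $Z_{{\cal U}i\ell}^{(\epsilon^*_{i\ell})}$ has in fact been set to $1$, so the relevant crossing genuinely occurs along the realized prefix and is therefore the first exit of the full infinite-horizon sample path, as required by the definition of ${\cal U}^{(\epsilon^*_{i\ell})}_{i\ell}(1)$. Since $\epsilon^*_{i\ell}$ takes values in $\{\epsilon_\ell^{(1)},\ldots,\epsilon_\ell^{(T_\ell)}\}$, this gives ${\rm ICD}_{i\ell}\subseteq\bigcup_{\tau=1}^{T_\ell}{\cal U}^{(\epsilon_\ell^{(\tau)})}_{i\ell}(1)$, whence the Bonferroni inequality and (\ref{eq:LowerProb}) yield $\Pr\{{\rm ICD}_{i\ell}\}\le\sum_{\tau=1}^{T_\ell}\Pr\{{\cal U}^{(\epsilon_\ell^{(\tau)})}_{i\ell}(1)\}\le T_\ell\beta$.

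The case $y_{i\ell}\le q_\ell-\epsilon_\ell$ is handled by the mirror argument: here ${\rm ICD}_{i\ell}$ requires ${\cal IZR}$ to declare $i$ infeasible for $\ell$, hence $Z_{{\cal D}i\ell}^{(\epsilon^*_{i\ell})}=-1$, i.e., the event ${\cal D}^{(\epsilon^*_{i\ell})}_{i\ell}(-1)$, so that ${\rm ICD}_{i\ell}\subseteq\bigcup_{\tau=1}^{T_\ell}{\cal D}^{(\epsilon_\ell^{(\tau)})}_{i\ell}(-1)$ and (\ref{eq:UpperProb}) applies the same way. It is worth recording why the argument must route through ${\cal F_U}$ in the first case (and ${\cal F_D}$ in the second) even though ${\rm ICD}_{i\ell}$ also forces the ``wrong'' exit in the companion subroutine: Lemma~\ref{lemma:IZR1} controls ${\cal U}^{(\epsilon)}_{i\ell}(1)$ only under $y_{i\ell}\ge q_\ell+\epsilon_\ell$, because only the ${\cal F_U}$ statistic has drift equal to its own tolerance level $\epsilon$ at the boundary configuration $y_{i\ell}=q_\ell+\epsilon_\ell$, which is precisely the drift condition exploited in the proof of Lemma~\ref{lemma:IZR1}.

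The substantive step is thus the set inclusion ${\rm ICD}_{i\ell}\subseteq\bigcup_{\tau=1}^{T_\ell}{\cal U}^{(\epsilon_\ell^{(\tau)})}_{i\ell}(1)$ (respectively, $\bigcup_{\tau=1}^{T_\ell}{\cal D}^{(\epsilon_\ell^{(\tau)})}_{i\ell}(-1)$): matching the algorithmic objects $Z_{{\cal U}i\ell}^{(\epsilon)}$, $Z_{{\cal D}i\ell}^{(\epsilon)}$, and the random stopping level $\epsilon^*_{i\ell}$ with the fixed-$\epsilon$ first-exit events defined on the common probability space, and checking that early termination of ${\cal IZR}$ does not break the inclusion. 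Once that is secured, the conclusion needs only the union bound over $\tau=1,\ldots,T_\ell$ and a direct appeal to Lemma~\ref{lemma:IZR1}; no additional probabilistic estimates are required, and $\beta$ is exactly the per-event error budget fixed in the setup of Algorithm~\ref{alg:IZR}.
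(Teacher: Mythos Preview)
Your proof is correct and follows the same strategy as the paper: partition ${\rm ICD}_{i\ell}$ according to the random level $\epsilon^*_{i\ell}$, bound each piece by the corresponding fixed-$\epsilon$ first-exit event, and invoke Lemma~\ref{lemma:IZR1} termwise. The only cosmetic difference is that the paper first passes through the joint event ${\cal U}^{(\epsilon)}_{i\ell}(1)\cap{\cal D}^{(\epsilon)}_{i\ell}(1)$ and then uses the containment ${\cal U}^{(\epsilon)}_{i\ell}(1)\subseteq{\cal D}^{(\epsilon)}_{i\ell}(1)$ to drop the ${\cal D}$ factor, whereas you go directly to ${\cal U}^{(\epsilon)}_{i\ell}(1)$ by noting that $Z_{{\cal U}i\ell}^{(\epsilon^*)}=1$ alone suffices.
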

\begin{proof}

We start by pointing out that since $(\sum_{j=1}^{r} Y_{i \ell j}) - r(q_\ell+\epsilon_\ell-\epsilon)\geq(\sum_{j=1}^{r} Y_{i \ell j}) - r(q_\ell-\epsilon_\ell+\epsilon)$ {for any positive integer $r$, we have ${\cal U}_{i\ell}^{(\epsilon)} (1) \subseteq {\cal D}_{i\ell}^{(\epsilon)} (1)$, ${\cal D}_{i\ell}^{(\epsilon)} (-1) \subseteq {\cal U}_{i\ell}^{(\epsilon)} (-1)$, and ${\cal U}_{i\ell}^{(\epsilon)} (1) \cap {\cal D}_{i\ell}^{(\epsilon)} (-1) = \emptyset$ for all $\epsilon \in \{\epsilon_\ell^{(1)}, \ldots, \epsilon_\ell^{(T_\ell)}\}$.} Therefore, when $y_{i\ell}\geq q_\ell+\epsilon_\ell$,
\begin{eqnarray*}
\Pr \left\{{\rm ICD}_{i\ell}\right\}
& = & \sum_{\tau=1}^{T_\ell} \Pr \left\{{\rm ICD}_{i\ell}, \epsilon^*_{i\ell} = \epsilon_\ell^{(\tau)} \right\} \\
&\leq& \sum_{\tau=1}^{T_\ell}\Pr \left\{{\cal U}_{i\ell}^{(\epsilon_\ell^{(\tau)})} (1), {\cal D}_{i\ell}^{(\epsilon_\ell^{(\tau)})} (1)\right\}\\
&=&\sum_{\tau=1}^{T_\ell}\Pr \left\{{\cal U}_{i\ell}^{(\epsilon_\ell^{(\tau)})}(1) \right\}\\
&\leq& T_\ell \beta,
\end{eqnarray*}
where the last inequality follows from Lemma~\ref{lemma:IZR1}. Similarly, we consider the case of $y_{i\ell} \leq q_\ell-\epsilon_{\ell}$. Then,
\begin{eqnarray*}
\Pr \left\{{\rm ICD}_{i\ell}\right\}
& = & \sum_{\tau=1}^{T_\ell} \Pr \left\{{\rm ICD}_{i\ell},\epsilon^*_{i\ell} = \epsilon_\ell^{(\tau)} \right\} \\
&\leq& \sum_{\tau=1}^{T_\ell}\Pr \left\{{\cal U}_{i\ell}^{(\epsilon_\ell^{(\tau)})} (-1), {\cal D}_{i\ell}^{(\epsilon_\ell^{(\tau)})} (-1)\right\}\\
&=&\sum_{\tau=1}^{T_\ell}\Pr \left\{{\cal D}_{i\ell}^{(\epsilon_\ell^{(\tau)})}(-1) \right\}\\
&\leq& T_\ell \beta, \label{eq:ICD22}
\end{eqnarray*}
where the last inequality follows from Lemma~\ref{lemma:IZR1}.
\end{proof}

Then, we derive the following theorem when determining the feasibility of $k$ systems with respect to $s$ constraints. For each system $i\in A \cup U$, the events ${\rm CD}_{i\ell}$ and ${\rm ICD}_{i\ell}$ should be interpreted as if constraint $\ell$ of system $i$ were considered in isolation (until a feasibility decision is made).

\begin{theorem}\label{thm:IZR}
Under Assumption~\ref{assump:normal}, ${\cal IZR}$ guarantees
\begin{equation*}
\Pr \left\{ \bigcap_{i=1}^k \bigcap_{\ell =1}^s {\rm CD}_{i \ell} \right\} \geq \ 1-\alpha.
\end{equation*}
\end{theorem}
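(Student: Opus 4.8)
The plan is to bound the probability of the complementary event, $\Pr\{\bigcup_{i=1}^k \bigcup_{\ell=1}^s {\rm ICD}_{i\ell}\}$, by a Bonferroni (union bound) argument over systems and constraints, and then invoke Lemma~\ref{lemma:IZR2} together with the definition of $\beta$ in equation~(\ref{eq:getafb1}). The key point that makes this work is that an overall incorrect decision can only arise from an incorrect feasibility decision on some individual constraint of some individual system: if every ${\rm CD}_{i\ell}$ holds, then every desirable system is correctly declared feasible on every constraint (hence feasible overall) and every unacceptable system has at least one constraint $\ell$ with $y_{i\ell}\ge q_\ell+\epsilon_\ell$ on which it is correctly declared infeasible (hence infeasible overall), while acceptable systems admit any decision. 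So $\bigcap_{i,\ell}{\rm CD}_{i\ell}$ implies ${\rm CD}$, and it suffices to lower-bound $\Pr\{\bigcap_{i,\ell}{\rm CD}_{i\ell}\}$.

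The steps, in order, are: (i) Write $\Pr\{\bigcap_{i=1}^k\bigcap_{\ell=1}^s {\rm CD}_{i\ell}\} = 1 - \Pr\{\bigcup_{i=1}^k\bigcup_{\ell=1}^s {\rm ICD}_{i\ell}\}$. (ii) For systems $i\in A$ (acceptable) or for constraints with $q_\ell-\epsilon_\ell < y_{i\ell} < q_\ell+\epsilon_\ell$, note $\Pr\{{\rm ICD}_{i\ell}\}=0$ as observed just before Lemma~\ref{lemma:IZR2}, so these contribute nothing. For the remaining $(i,\ell)$ pairs — those with $y_{i\ell}\ge q_\ell+\epsilon_\ell$ or $y_{i\ell}\le q_\ell-\epsilon_\ell$ — apply Lemma~\ref{lemma:IZR2} to get $\Pr\{{\rm ICD}_{i\ell}\}\le T_\ell\beta$. (iii) In the dependent case, apply the plain union bound: $\Pr\{\bigcup_{i,\ell}{\rm ICD}_{i\ell}\} \le \sum_{i=1}^k\sum_{\ell=1}^s T_\ell\beta = k\big(\sum_{\ell=1}^s T_\ell\big)\beta = \alpha$ by (\ref{eq:getafb1}), giving $\Pr\{{\rm CD}\}\ge 1-\alpha$. (iv) In the independent case, handle systems independently: for a fixed system $i$, $\Pr\{\bigcap_{\ell=1}^s {\rm CD}_{i\ell}\} \ge 1 - \sum_{\ell=1}^s T_\ell\beta = 1 - [1-(1-\alpha)^{1/k}] = (1-\alpha)^{1/k}$, and since the systems are independent, $\Pr\{\bigcap_i\bigcap_\ell {\rm CD}_{i\ell}\} = \prod_{i=1}^k \Pr\{\bigcap_\ell {\rm CD}_{i\ell}\} \ge (1-\alpha)$.

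One subtlety to address carefully is the within-system aggregation over constraints: the event ${\rm CD}_{i\ell}$ is defined per constraint "in isolation until a feasibility decision is made," but ${\cal IZR}$ stops checking a system's remaining constraints the moment it finds one violated constraint (it eliminates $i$ from $M$ and exits the constraint loop). I would argue that this early termination does not hurt: if system $i$ is unacceptable, it has some violated constraint $\ell_0$ with $y_{i\ell_0}\ge q_{\ell_0}+\epsilon_{\ell_0}$, and on the event $\bigcap_\ell {\rm CD}_{i\ell}$ (interpreted as each constraint being checked in isolation) the isolated check of $\ell_0$ yields the infeasible decision, so the actual run — which at worst stops earlier at some other violated constraint — still declares $i$ infeasible; if system $i$ is desirable, each constraint is checked to completion unless some earlier constraint is (incorrectly) declared infeasible, but that incorrect event is already counted in the union bound. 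This monotonicity-of-outcomes observation is what lets the per-constraint, in-isolation bounds of Lemma~\ref{lemma:IZR2} combine cleanly.

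The main obstacle is precisely this reconciliation between the sequential, interdependent stopping behavior of the actual algorithm and the "in isolation" framing under which Lemmas~\ref{lemma:IZR1} and~\ref{lemma:IZR2} are proved — i.e., making rigorous that the coupling between the real process and the isolated processes is such that $\bigcap_{i,\ell}{\rm CD}_{i\ell}$ (isolated) implies ${\rm CD}$ (actual). Once that implication is nailed down, the remainder is a routine Bonferroni/independence split calibrated exactly to the two cases of $\beta$ in (\ref{eq:getafb1}), and I expect the authors' proof to spend most of its words on the set-theoretic implication and then dispatch the probability bound in a few lines.
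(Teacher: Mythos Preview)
Your proposal is correct and follows essentially the same route as the paper: apply Lemma~\ref{lemma:IZR2} to bound each $\Pr\{{\rm ICD}_{i\ell}\}$ by $T_\ell\beta$, then aggregate via a product (independent systems) or a plain union bound (dependent systems), with $\beta$ calibrated by~(\ref{eq:getafb1}) so that the total error is exactly $\alpha$.

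One point where your expectation diverges from the paper: you anticipate that the authors spend most of their effort on the coupling/reconciliation between the in-isolation events ${\rm CD}_{i\ell}$ and the actual algorithm's early-termination behavior, with the probability bound dispatched quickly. In fact the paper does the opposite. The sentence immediately preceding the theorem stipulates that ${\rm CD}_{i\ell}$ and ${\rm ICD}_{i\ell}$ are to be interpreted as if constraint $\ell$ of system $i$ were considered in isolation, and the theorem is stated directly in terms of $\bigcap_{i,\ell}{\rm CD}_{i\ell}$ for these isolated events. The proof is then purely the Bonferroni/independence arithmetic you outline in steps (iii) and (iv), with no further discussion of the implication $\bigcap_{i,\ell}{\rm CD}_{i\ell}\Rightarrow{\rm CD}$. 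The monotonicity-of-outcomes argument you sketch (early elimination of an unacceptable system can only help; a desirable system is eliminated only if some isolated ${\rm ICD}_{i\ell}$ occurs) is the right way to close that gap, and you have been more explicit about it than the paper itself.
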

\begin{proof}
For the case of independent systems, since $\beta = \left[ 1- (1-\alpha)^{1/k}\right]/(\sum_{\ell=1}^s T_\ell)$ in ${\cal IZR}$, the Bonferroni inequality yields
\begin{eqnarray}
{\Pr \left\{ \bigcap_{i=1}^k \bigcap_{\ell =1}^s {\rm CD}_{i \ell} \right\}}& = &  \prod_{i=1}^k \Pr \left\{\bigcap_{\ell =1}^s {\rm CD}_{i \ell}\right\} \nonumber \\
& \geq &  \prod_{i=1}^k  \left\{ 1-  \sum_{\ell = 1}^s \Pr \{{\rm ICD}_{i\ell}\} \right\} \nonumber \\
& \geq &  \prod_{i=1}^k \left\{ 1- \sum_{\ell = 1}^s T_\ell \beta \right\} \label{eq:CD11}\\
& = & \prod_{i =1}^{k} \left\{ 1- \left[1- (1-\alpha)^{1/k}\right] \right\} \nonumber\\
&= &  1-\alpha,\nonumber
\end{eqnarray}
where the inequality (\ref{eq:CD11}) holds due to the definition of acceptable systems and Lemma~\ref{lemma:IZR2}. For the case of dependent systems, since $\beta = \alpha/(k\sum_{\ell=1}^s T_\ell)$ in ${\cal IZR}$, the Bonferroni inequality yields
\begin{eqnarray}
{\Pr \left\{ \bigcap_{i=1}^k \bigcap_{\ell =1}^s \mbox{CD}_{i \ell} \right\}}& \geq &  1- \sum_{i=1}^k  \sum_{\ell =1}^s \Pr \{\mbox{ICD}_{i \ell}\} \nonumber \\
& \geq &  1-\sum_{i=1}^k   \sum_{\ell = 1}^s T_\ell \beta \label{eq:CD12}\\
& = &  1- \sum_{i=1}^k  \alpha/k \nonumber \\
&= &  1-\alpha, \nonumber
\end{eqnarray}
where the inequality (\ref{eq:CD12}) holds due to the definition of acceptable systems and Lemma~\ref{lemma:IZR2}.
\end{proof}


\subsection{Impact of $T_\ell$ and $\epsilon_\ell^{(\tau)}$ Values}
\label{sec3_PA}
In this section, we examine the impacts of different values of $T_\ell$ and $\epsilon_\ell^{(\tau)}$ for $\tau = 1, 2, \ldots, T_\ell$ through an example when $c=1$. Note that the continuation region of ${\cal F}_B$ applies to the monitoring statistic $(\sum_{j=1}^{r_i} Y_{i \ell j}) - r_i q_\ell$ and can be expressed as
\begin{eqnarray}
{\cal F}_B(\epsilon)\hspace{-0.6em}&:&\left[-\frac{h_B^2 S_{i \ell}^2}{2\epsilon_\ell} + \frac{\epsilon_\ell}{2}r_i,\; \frac{h_B^2 S_{i \ell}^2}{2\epsilon_\ell} - \frac{\epsilon_\ell}{2}r_i\right],\mbox{ for } n_0 \leq r_i \leq \frac{h_B^2S_{i \ell}^2}{\epsilon_\ell^2}.\label{eq:cts3}
\end{eqnarray}
Let ${\cal F_U}(\epsilon)$ and ${\cal F_D}(\epsilon)$ be the subroutines ${\cal F_U}$ and ${\cal F_D}$ with a relaxed tolerance level $\epsilon$, respectively. Resetting the monitoring statistic as $(\sum_{j=1}^{r_i} Y_{i \ell j}) - r_i q_\ell$ for the two subroutines ${\cal F_U}$ and ${\cal F_D}$ of ${\cal IZR}$ to facilitate comparison with ${\cal F}_B$ results in the shifted continuation regions of ${\cal F_U}(\epsilon)$ and ${\cal F_D}(\epsilon)$ as
\begin{eqnarray}
{\cal F_U}(\epsilon)\hspace{-0.6em} &:&\left[-\frac{h^2 S_{i \ell}^2}{2\epsilon} + \frac{\epsilon}{2}r_i+ \epsilon_\ell r_i- \epsilon r_i,\; \frac{h^2 S_{i \ell}^2}{2\epsilon} - \frac{\epsilon}{2}r_i+ \epsilon_\ell r_i- \epsilon r_i \right],\mbox{ for } n_0 \leq r_i \leq \frac{h^2S_{i \ell}^2}{\epsilon^2}, \label{eq:cts1}\\
{\cal F_D}(\epsilon)\hspace{-0.6em}&:&\left[-\frac{h^2 S_{i \ell}^2}{2\epsilon} + \frac{\epsilon}{2}r_i- \epsilon_\ell r_i+ \epsilon r_i,\; \frac{h^2 S_{i \ell}^2}{2\epsilon} - \frac{\epsilon}{2}r_i- \epsilon_\ell r_i+ \epsilon r_i \right],\mbox{ for } n_0 \leq r_i \leq \frac{h^2S_{i \ell}^2}{\epsilon^2}.\label{eq:cts2}
\end{eqnarray}
It should be noted that if $\epsilon = \epsilon^{(T_\ell)}_\ell = \epsilon_\ell$, the shifted terms, $\epsilon_\ell r_i- \epsilon r_i$ and $-\epsilon_\ell r_i+ \epsilon r_i$ in (\ref{eq:cts1}) and (\ref{eq:cts2}), respectively, disappear, and the continuation regions in (\ref{eq:cts1}) and (\ref{eq:cts2}) become identical.

Synchronizing the monitoring statistic $(\sum_{j=1}^{r_i} Y_{i \ell j}) - r_i q_\ell$ for both ${\cal F}_B$ and ${\cal IZR}$ allows one to directly compare the computational efficiency of these two procedures by simultaneously checking the continuation region in (\ref{eq:cts3}) and the shifted continuation regions in (\ref{eq:cts1}) and (\ref{eq:cts2}). We consider an example with one constraint by setting $n_0=20$, $\alpha=0.05$, $\sigma_i^2 = 1$, and $\epsilon_\ell = 0.02$. For the values of $T_\ell$ and $\epsilon_\ell^{(\tau)}$ for $\tau = 1, 2, \ldots, T_\ell$, we consider the following two settings:
\begin{itemize}
\item $T_\ell=2$ with relaxed tolerance levels $\epsilon_\ell^{(1)} =2 \epsilon_\ell = 0.04$ and $\epsilon_\ell^{(2)} = \epsilon_\ell  =0.02$; and
\item $T_\ell=3$ with relaxed tolerance levels $\epsilon_\ell^{(1)} =2^2 \epsilon_\ell  = 0.08,\; \epsilon_\ell^{(2)}=2 \epsilon_\ell =0.04$, and $\epsilon_\ell^{(3)}= \epsilon_\ell =0.02$.
\end{itemize}

With these settings, the shifted continuation regions of ${\cal F_U}(\epsilon)$ and ${\cal F_D}(\epsilon)$ in (\ref{eq:cts1}) and (\ref{eq:cts2}), and the continuation region of ${\cal F}_B$ in (\ref{eq:cts3}) are shown in Figure~\ref{fig:ctsregions} when $k \in \{1, 100\}$. Additionally, the shifted continuation region of ${\cal F_U}(\epsilon^{(\tau-1)}_\ell)$ is always included in that of ${\cal F_U}(\epsilon^{(\tau)}_\ell)$, and the shifted continuation region of ${\cal F_D}(\epsilon^{(\tau-1)}_\ell)$ is always included in that of ${\cal F_D}(\epsilon^{(\tau)}_\ell)$ because $\epsilon^{(\tau-1)}_\ell > \epsilon^{(\tau)}_\ell$ for any $\tau = 2,3,\ldots,T_\ell$ (see Appendix A). Also, the lower bound of ${\cal F_U}(\epsilon)$ is less than the lower bound of ${\cal F_D}(\epsilon)$ because $\epsilon \geq \epsilon_\ell$ and the same is true for the upper bounds.

\begin{figure}[!tb]
        \centering
        \begin{minipage}[b]{0.48\textwidth}
            \centering
            \subfigure[$T_\ell=2$ and $k = 1$]{
            \scalebox{0.27}{\includegraphics[width=10.5in]{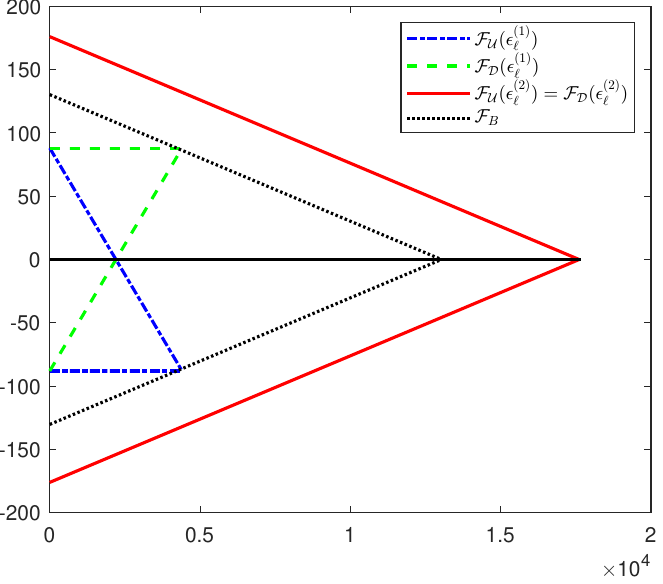}}}
        \end{minipage}
        \begin{minipage}[b]{0.48\textwidth}
            \centering
            \subfigure[$T_\ell=2$ and $k = 100$]{
            \scalebox{0.28}{\includegraphics[width=10.5in]{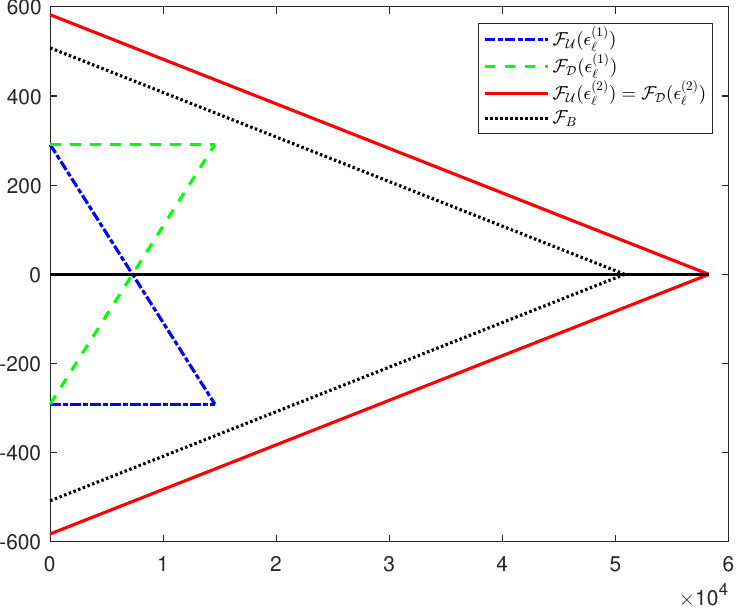}}}
        \end{minipage}
        \begin{minipage}[b]{0.48\textwidth}
            \centering
            \subfigure[$T_\ell=3$ and $k = 1$]{
            \scalebox{0.28}{\includegraphics[width=10.5in]{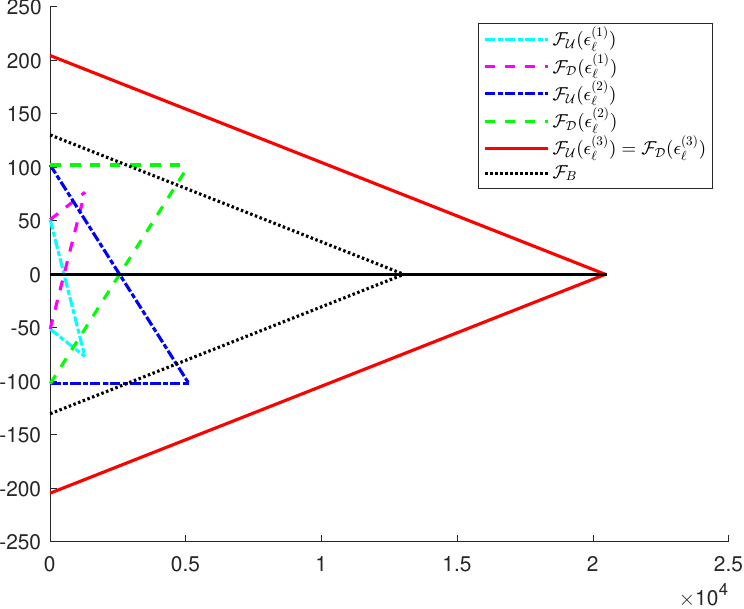}}}
        \end{minipage}
        \begin{minipage}[b]{0.48\textwidth}
            \centering
            \subfigure[$T_\ell=3$ and $k = 100$]{
            \scalebox{0.28}{\includegraphics[width=10.5in]{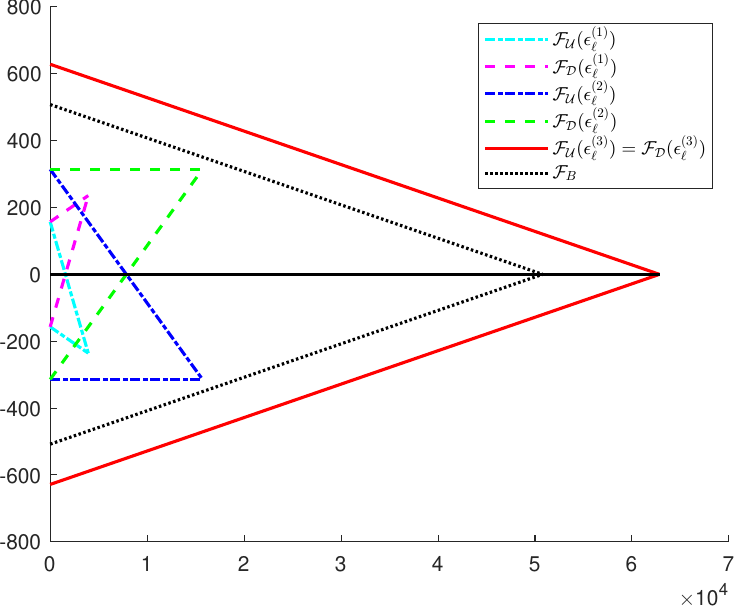}}}
        \end{minipage}
        \vspace{-6pt}
        \caption{Example continuation regions of ${\cal F_U}(\epsilon), {\cal F_D}(\epsilon)$, and ${\cal F}_B$.} \label{fig:ctsregions}
\end{figure}

For a certain $\tau <T_\ell$ (i.e., $\epsilon = \epsilon^{(\tau)}_\ell > \epsilon_\ell$), one of the following three scenarios may happen: the monitoring statistic first exits the continuation regions of ${\cal F_U}(\epsilon^{(\tau)}_\ell)$ and ${\cal F_D}(\epsilon^{(\tau)}_\ell)$ via
\begin{description}
    \item[](i) the upper boundary of ${\cal F_D}(\epsilon^{(\tau)}_\ell)$ (after exiting the upper boundary of ${\cal F_U}(\epsilon^{(\tau)}_\ell)$ as well),
    \item[](ii) the lower boundary of ${\cal F_U}(\epsilon^{(\tau)}_\ell)$ (after exiting the lower boundary of ${\cal F_D}(\epsilon^{(\tau)}_\ell)$ as well), and 
    \item[](iii) the upper boundary of ${\cal F_U}(\epsilon^{(\tau)}_\ell)$ and the lower boundary of ${\cal F_D}(\epsilon^{(\tau)}_\ell)$. 
 \end{description}
 If either scenario (i) or scenario (ii) occurs, then the system is declared as infeasible in scenario (i) or feasible in scenario (ii), respectively, and the ${\cal IZR}$ procedure terminates without requiring additional observations. Otherwise (i.e., if scenario (iii) happens), then ${\cal IZR}$ needs to wait for the feasibility decisions of ${\cal F_U}(\epsilon^{(\tau+1)}_\ell)$ and ${\cal F_D}(\epsilon^{(\tau+1)}_\ell)$. Thus there is a tradeoff in choosing the relaxed tolerance levels. In particular, a larger relaxed tolerance level $\epsilon$ constructs smaller shifted continuation regions of ${\cal F_U}(\epsilon)$ and ${\cal F_D}(\epsilon)$, and hence ${\cal IZR}$ may terminate with fewer observations while the likelihood of scenarios (i) or (ii) is lower. 

When $\tau = T_\ell$ (i.e., $\epsilon = \epsilon^{(T_\ell)}_\ell = \epsilon_\ell$), then the continuation regions of ${\cal F_U}(\epsilon^{(T_\ell)}_\ell)$ and ${\cal F_D}(\epsilon^{(T_\ell)}_\ell)$ are always identical, and thus ${\cal IZR}$ must terminate with the same decision. The common continuation region of ${\cal F_U}(\epsilon^{(T_\ell)}_\ell)$, ${\cal F_D}(\epsilon^{(T_\ell)}_\ell)$ (marked in the red solid triangle in Figure~\ref{fig:ctsregions}) has a different size compared to that of ${\cal F}_B$ (marked in a black dotted triangle in Figure~\ref{fig:ctsregions}) due to the values of $\eta$ and $\eta_B$. It should be noted that $\beta_B \geq \beta$ (and $\eta \geq \eta_B$) due to equations (\ref{eq:getafb}) and (\ref{eq:getafb1}), $\sum_{\ell=1}^s T_\ell \geq s$, and $g_n(\cdot)$ being a decreasing function, and thus the common continuation region of ${\cal F_U}(\epsilon^{(T_\ell)}_\ell)$, ${\cal F_D}(\epsilon^{(T_\ell)}_\ell)$ is larger than the continuation region of ${\cal F}_B$, even though ${\cal F_U}(\epsilon^{(T_\ell)}_\ell)$, ${\cal F_D}(\epsilon^{(T_\ell)}_\ell)$, and ${\cal F}_B$ share the same acceptable region (i.e., the same IZ) as shown in Figure~\ref{fig:example} (b). As a result, if ${\cal IZR}$ terminates with the smallest relaxed tolerance level which is the same as the tolerance level of ${\cal F}_B$, (i.e., $\epsilon = \epsilon_\ell^{(T_\ell)} = \epsilon_\ell$ or $\tau = T_\ell$), then one would expect ${\cal F}_B$ to require fewer simulation observations than ${\cal IZR}$.  Moreover, if one increases the value of $T_\ell$ (and thus $\sum_{\ell=1}^s T_\ell$ in (\ref{eq:getafb1}) becomes larger), then $\beta$ becomes smaller (i.e., $\eta$ becomes larger). Therefore, large $T_\ell$ may reduce the efficiency of ${\cal IZR}$ and thus $T_\ell>3$ is rarely recommended. Section~\ref{sec3_ES} provides a brief discussion based on a numerical example to empirically compare $T_\ell=2$ with $T_\ell=3$.

Nevertheless, when the number of systems (i.e., $k$) or the initial sample size for each system (i.e., $n_0$) increases, $\eta_B$ becomes similar to $\eta$. Specifically, the relative efficiency of ${\cal IZR}$ when $\tau = T_\ell$ compared with ${\cal F}_B$ can be assessed by $\frac{\eta}{\eta_B}$ which is greater than 1. If $c=1$, one may have
\begin{eqnarray} \label{eq:etaratio} 
\frac{\eta}{\eta_B} & = & \frac{\left(2 \beta \right)^{-\frac{2}{n_0-1}}-1}{\left(2 \beta_B \right)^{-\frac{2}{n_0-1}}-1}= \left\{
\begin{array}{ll}
  \frac{\left({\sum_{\ell = 1}^s T_\ell} \right)^{\frac{2}{n_0-1}}-\left(2 \left[1-(1-\alpha)^{1/k}\right] \right)^{\frac{2}{n_0-1}}}{\left({s} \right)^{\frac{2}{n_0-1}}-\left(2 \left[1-(1-\alpha)^{1/k}\right] \right)^{\frac{2}{n_0-1}}}, & \mbox{ if systems are independent}; \\
  \frac{\left({\sum_{\ell = 1}^s T_\ell} \right)^{\frac{2}{n_0-1}}-\left(2 \alpha/k \right)^{\frac{2}{n_0-1}}}{\left({s} \right)^{\frac{2}{n_0-1}}-\left(2 \alpha/k \right)^{\frac{2}{n_0-1}}}, & \mbox{ otherwise}.
\end{array}\right.
\end{eqnarray}


\begin{table}[!tb]\small 
\vspace{6pt}
		\centering
			\begin{tabular}{  c | cccc|cccc}
				\toprule
				    \multicolumn{9}{c}{$\sum_{\ell = 1}^s T_\ell = 2s$} \\  \midrule
		  	& \multicolumn{4}{c|}{$s=1$} & \multicolumn{4}{c}{$s=5$}   \\ 
              $k$  & $n_0=10$ & $n_0=20$ & $n_0=30$ & $n_0=50$ & $n_0=10$ & $n_0=20$ & $n_0=30$ & $n_0=50$  \\ \midrule
			1&	1.4158& 	1.3517& 	1.3335& 	1.3199& 	1.2867& 	1.2242& 	1.2071& 	1.1944 \\
	        10&	1.2607& 	1.1978& 	1.1807& 	1.1682& 	1.2228& 	1.1581& 	1.1409& 	1.1285 \\ 
                100&	1.2126& 	1.1469& 	1.1296& 	1.1172& 	1.1963& 	1.1281& 	1.1105& 	1.0980\\ 
                1000&	1.1914& 	1.1221& 	1.1043& 	1.0918& 	1.1832& 	1.1115& 	1.0933& 	1.0805  \\ \bottomrule
                \toprule
				    \multicolumn{9}{c}{$\sum_{\ell = 1}^s T_\ell = 3s$} \\  \midrule
		  	& \multicolumn{4}{c|}{$s=1$} & \multicolumn{4}{c}{$s=5$}   \\ 
              $k$  & $n_0=10$ & $n_0=20$ & $n_0=30$ & $n_0=50$ & $n_0=10$ & $n_0=20$ & $n_0=30$ & $n_0=50$  \\ \midrule
			1&	1.6904& 	1.5696& 	1.5361& 	1.5113& 	1.4761& 	1.3632& 	1.3329& 	1.3108 \\
	        10&	1.4329& 	1.3204& 	1.2905& 	1.2689& 	1.3700& 	1.2560& 	1.2265& 	1.2054 \\ 
                100&	1.3530& 	1.2379& 	1.2083& 	1.1873& 	1.3259& 	1.2075& 	1.1776& 	1.1566\\ 
                1000&	1.3178& 	1.1978& 	1.1677& 	1.1467& 	1.3041& 	1.1806& 	1.1499& 	1.1287   \\ \bottomrule
			\end{tabular}
		\caption{Values of $\eta/\eta_B$ for different $k$, $s$, and $n_0$ values with independent systems and $c=1$.}
		\label{tab:etaratio}
	\end{table}

Table~\ref{tab:etaratio} shows some values of $\frac{\eta}{\eta_B}$ for different $k$, $s$, and $n_0$, when systems are independent and $c=1$. In equation (\ref{eq:etaratio}), since (i) $\left({\sum_{\ell = 1}^s T_\ell} \right)^{\frac{2}{n_0-1}} \geq \left(s\right)^{\frac{2}{n_0-1}}>0$ and (ii) $\left(2 \left[1-(1-\alpha)^{1/k}\right] \right)^{\frac{2}{n_0-1}}$ and $\left(2 \alpha/k \right)^{\frac{2}{n_0-1}}$ are non-negative decreasing sequences converging to $0$ as $k$ increases under fixed $\alpha$ and $n_0$ values, one may have
\begin{equation*} \label{eq:asymptotick} 
\frac{\eta}{\eta_B} \geq \lim_{k \rightarrow \infty} \frac{\eta}{\eta_B}  = \left(\frac{1}{s} \sum_{\ell = 1}^s T_\ell\right)^{\frac{2}{n_0-1}}.
\end{equation*}

Similarly, L'Hôpital's rule yields that under fixed $\alpha$ and $k$ values, 
\begin{equation*} \label{eq:asymptotick2} 
\frac{\eta}{\eta_B} \geq \lim_{n_0 \rightarrow \infty} \frac{\eta}{\eta_B}  = \left\{
\begin{array}{ll}
  \frac{\ln{\left({\sum_{\ell = 1}^s T_\ell} \right)}-\ln{\left(2 \left[1-(1-\alpha)^{1/k}\right] \right)}}{\ln{\left({s} \right)}-\ln{\left(2 \left[1-(1-\alpha)^{1/k}\right] \right)}}, & \mbox{ if systems are independent}; \\
  \frac{\ln{\left({\sum_{\ell = 1}^s T_\ell} \right)}-\ln{\left(2 \alpha/k \right)}}{\ln{\left({s} \right)}-\ln{\left(2 \alpha/k \right)}}, & \mbox{ otherwise}.
\end{array}\right.
\end{equation*}

From the above results, if $\sum_{\ell = 1}^s T_\ell \simeq s$ (even though $\sum_{\ell = 1}^s T_\ell \geq s$), $k$ is large, or $n_0$ is large, then $\frac{\eta}{\eta_B}\simeq1$. In other words, the sizes of the continuation regions of ${\cal F_U}(\epsilon^{(T_\ell)}_\ell)$, ${\cal F_D}(\epsilon^{(T_\ell)}_\ell)$, and ${\cal F}_B$ become similar as can be seen by comparing Figures~\ref{fig:ctsregions} (a) and (b) (or Figures~\ref{fig:ctsregions} (c) and (d)) where $k=1$ and $k=100$, respectively. Consequently, when $k$ or $n_0$ is large, ${\cal IZR}$ and ${\cal F}_B$ require similar numbers of observations in the worst case when $\tau=T_\ell$ and ${\cal IZR}$ terminates with the smallest (original) tolerance level $\epsilon_\ell$.


\subsection{Empirical Study for $T_\ell$ and $\epsilon_\ell^{(\tau)}$ Values}
\label{sec3_ES}
In this section, we provide a brief numerical analysis to support the general recommendation of setting $T_\ell = 2$. We consider a case with one system and one constraint (i.e., $k = s = 1$), focusing on two possible values for $T_1\in \{2, 3\}$. We set $n_0=20$, $\alpha=0.05$, and $\sigma_i^2 = 1$, and also set the original tolerance level to $\epsilon_1=\epsilon_1^{(T_1)}=0.02$. The threshold constant is set to $q_1=0$ and two different levels of $y_{11}$ are considered as $y_{11}\in \{0.02, 0.5\}$. Note that $y_{11}=0.02$ corresponds to ``the slippage configuration," which is considered the most difficult case for checking feasibility. In contrast, determining feasibility for $y_{11}=0.5$ is relatively easy.

To determine the values of $\{\epsilon_1^{(1)},\ldots, \epsilon_1^{(T_1)}\}$, we introduce a positive integer $\xi >1$, where the relaxed tolerance levels are set as $\{\epsilon_1^{(1)}=\xi\epsilon_1, \epsilon_1^{(2)}=\epsilon_1\}$ for $T_1=2$ and $\{\epsilon_1^{(1)}=\xi^2 \epsilon_1, \epsilon_1^{(2)}=\xi \epsilon_1, \epsilon_1^{(3)}=\epsilon_1\}$ for $T_1=3$.
Figure \ref{fig:T2Recommendation} shows the number of observations required before concluding a feasibility decision, where we consider different values of $\xi$ as $\xi \in \{2, 3, \ldots, 16\}$. 

\begin{figure}[tb!]
        \centering
        \begin{minipage}[b]{0.48\textwidth}
            \centering
            \subfigure[$y_{11}=0.02$]{
            \scalebox{0.3}{\includegraphics[width=10.5in]{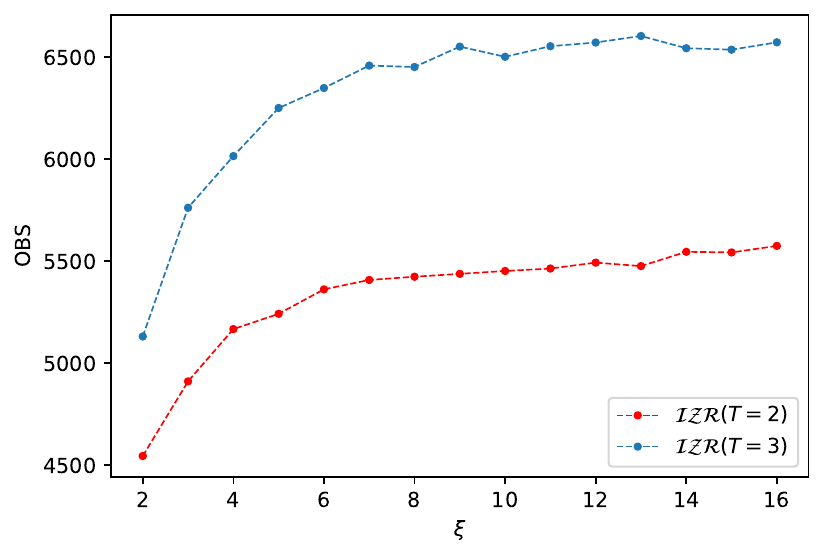}}}
        \end{minipage}
        \begin{minipage}[b]{0.48\textwidth}
            \centering
            \subfigure[$y_{11}=0.5$]{
            \scalebox{0.3}{\includegraphics[width=10.5in]{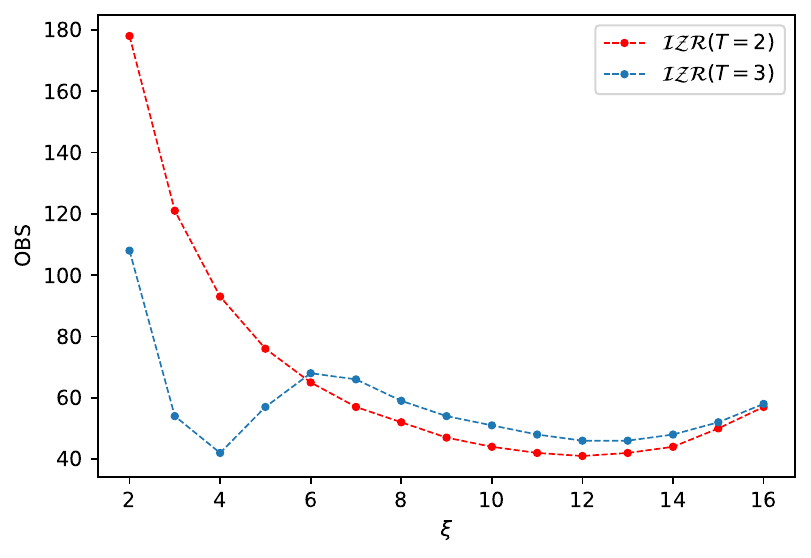}}}
        \end{minipage}
        \vspace{-6pt}
        \caption{Required number of observations to conclude feasibility decision for a single system with a single constraint and mean $y_{11}\in \{0.02, 0.5\}$. We set threshold constant $q_1=0$ in this experiment.} \label{fig:T2Recommendation}
\end{figure}

For the most difficult case (i.e., $y_{11}=0.02$), $T_1=2$ consistently outperforms $T_1=3$ for all values of $\xi$ we considered. This is because ${\cal IZR}$ tends to terminate at the smallest relaxed tolerance level (i.e., $\epsilon_1=\epsilon_1^{(T_1)}=0.02$) when $y_{11}=0.02$. As discussed in Section \ref{sec3_PA}, selecting a larger $T_1$ reduces $\beta$, which can lower the efficiency of the procedure. For the easier case (i.e., $y_{11}=0.5$), $T_1=3$ performs better than $T_1=2$ when $\xi$ is small (e.g., $\xi \leq 5)$, but both $T_\ell$ values yield similar performance for larger values of $\xi$ (i.e., $\xi \geq 6$). Moreover, the best results for $T_\ell = 2$ (obtained for $\xi=12$) and for $T_\ell =3$ (obtained for $\xi=4$) are similar. This suggests that setting $T_1=2$ with an appropriate value of $\xi$ can perform similarly as setting $T_1=3$ even when $y_{11}$ is large.  Motivated by the above results, we propose a version of the ${\cal IZR}$ procedure in Section \ref{sec4} that sets $T_\ell=2$ for $\ell=1,\ldots,s$ and selects the parameter $\xi$ effectively.

\section{Indifference-Zone Relaxation Procedure with Estimation}
\label{sec4}

As described in Sections \ref{sec3_PA} and \ref{sec3_ES}, the values of $T_\ell$ and $\epsilon_\ell^{(\tau)}$ for $\tau = 1, 2, \ldots, T_\ell$ can affect the computational efficiency of the ${\cal IZR}$ procedure, the best choices of these parameters depend on the problem, and letting $T_\ell=2$ with a good choice of $\epsilon^{(1)}_\ell$ typically works well. As problem features are often difficult to predict in advance, in this section, we provide and analyze a version of our procedure, referred to as the Indifference-Zone Relaxation with Estimation (${\cal IZE}$) procedure, that sets $T_\ell=2$ and employs preliminary simulation data to estimate a relaxed tolerance level $\epsilon^{(1)}_{i \ell}$ for system $i$ and constraint $\ell$. We provide the ${\cal IZE}$ procedure in Section~\ref{sec:4.1} and prove the statistical validity of ${\cal IZE}$ in Section~\ref{sec4.2}.



\subsection{Procedure ${\cal IZE}$}
\label{sec:4.1}

In this section, we first focus on the case with one system and one constraint to determine appropriate values of the relaxed tolerance level $\epsilon^{(1)}_{i \ell}$. We consider two scenarios for each $i$ and $\ell$: (i) using $T_{\ell}=1$ and tolerance level $\epsilon^{(1)}_{i\ell} = \epsilon_\ell$ or (ii) using $T_{\ell}=2$ and tolerance levels $\epsilon^{(1)}_{i\ell} = \xi_{i\ell}\epsilon_\ell$ and $\epsilon^{(2)}_{i\ell} = \epsilon_\ell$, where $\xi_{i\ell}$ is a constant greater than 1.

To compare the above two scenarios, one may estimate the computational efficiency of each scenario. We start by considering the case when $|y_{i\ell}-q_\ell|$ and $\sigma_{i\ell}^2$ are known. Under $k=s=1$, let $\eta_{T}$ be the solution of $g_{n_0-1}(\eta_{T})=\frac{\alpha}{T}$ and $h_T^2 = 2c(n_0-1)\eta_{T} $ when $T=1$ or $2$, see (\ref{eq:getafb1}). As shown in \cite{Healey:HAK:TOMACS}, the approximate number of observations required by our procedure with $T_{\ell}=1$ for checking feasibility can be described as
\begin{equation} \label{eq:appnum1}
\frac{h^2_1 \sigma_{i\ell}^2}{2c \epsilon_\ell \left[|y_{i\ell}-q_\ell|+\frac{\epsilon_\ell}{2c} \right]}. 
\end{equation}
Similarly, as shown in \cite{Healey:HAK:TOMACS} and \cite{Lee2018}, the approximate number of observations required by our procedure with $T_{\ell}=2$ for checking feasibility can be described as
\begin{align}
\frac{h^2_2 \sigma_{i\ell}^2}{2c \xi_{i\ell} \epsilon_\ell \left[|y_{i\ell}-q_\ell|-(\xi_{i\ell}-1)\epsilon_\ell + \frac{\xi_{i\ell} \epsilon_\ell}{2c} \right]},&\;  \mbox{ if ${\cal IZR}$ terminates with $\epsilon^{(1)}_{i\ell} = \xi_{i\ell}\epsilon_\ell$}, \label{eq:appnum2_1}\\
\frac{h^2_2 \sigma_{i\ell}^2}{2c \epsilon_\ell \left[|y_{i\ell}-q_\ell|+\frac{\epsilon_\ell}{2c} \right]},&\;  \mbox{ if ${\cal IZR}$ terminates with $\epsilon^{(2)}_{i\ell} = \epsilon_\ell$}. \label{eq:appnum2_2}
\end{align} 

It should be noted that the value of (\ref{eq:appnum1}) is always smaller than the value of (\ref{eq:appnum2_2}) due to $h^2_1 <h^2_2$, even though the continuation regions for $T=1$ and $T=2$ under tolerance level $\epsilon_\ell$ become similar if $k$ or $n_0$ is large as described at the end of Section~\ref{sec3_PA}.

The first motivation of the ${\cal IZE}$ procedure is based on the fact that our ${\cal IZE}$ procedure with $T_{\ell}=2$ should use $\xi_{i\ell}$ satisfying the following two conditions: (i) the procedure is likely to terminate with $\epsilon^{(1)}_{i\ell}$ and (ii) the value of (\ref{eq:appnum2_1}) needs to be minimized. For the first condition, one may recall Cases 1 and 5 (i.e., the system is clearly desirable or unacceptable) in Section~\ref{sec3_1}. Specifically, if $|y_{i\ell}-q_\ell| \geq (2\xi_{i\ell}-1) \epsilon_\ell$ (equivalently, either $y_{i\ell} \leq q_\ell + (1-2\xi_{i\ell}) \epsilon_\ell$ or $y_{i\ell} \geq q_\ell + (2\xi_{i\ell}-1) \epsilon_\ell$), our procedure with $T_{\ell}=2$ determines the feasibility of the system using $\epsilon^{(1)}_{i\ell}$ with high probability. Hence, $\xi_{i\ell}$ should satisfy $\xi_{i\ell} \leq \frac{|y_{i\ell}-q_\ell|+\epsilon_\ell}{2\epsilon_\ell}$. For the second condition, we minimize the value of (\ref{eq:appnum2_1}) within the range of $1 \leq \xi_{i\ell} \leq \frac{|y_{i\ell}-q_\ell|+\epsilon_\ell}{2\epsilon_\ell}$ and the optimal value $\xi_{i\ell}^*$ of $\xi_{i\ell}$ can be derived as $\xi_{i\ell}^* = \frac{|y_{i\ell}-q_\ell|+\epsilon_\ell}{2\epsilon_\ell}$, resulting in $|y_{i\ell}-q_\ell| = (2\xi_{i\ell}^* -1)\epsilon_\ell$.

The second motivation of the ${\cal IZE}$ procedure is based on the fact the value of (\ref{eq:appnum1}) should be larger than the value of (\ref{eq:appnum2_1}) when the ${\cal IZE}$ procedure introduces $T_{\ell}=2$ with $\xi_{i\ell}^*$. It is also undesirable to use $\xi^*_{i \ell}$ values close to 1, as such values are unlikely to provide significant computational savings. Therefore, we set a minimum value of $\xi^*_{i \ell}$ as 2 as in Section~\ref{sec3_1}. If (i) $|y_{i\ell}-q_\ell| = (2\xi_{i\ell}^*-1) \epsilon_\ell$ and (ii) $\xi_{i\ell}^* \geq 2$, then with $c=1$,  
\begin{equation*} 
\frac{\xi_{i\ell}^* \left[|y_{i\ell}-q_\ell|-(\xi_{i\ell}^*-1)\epsilon_\ell + \frac{\xi_{i\ell}^* \epsilon_\ell}{2} \right]}{\left[|y_{i\ell}-q_\ell|+\frac{\epsilon_\ell}{2} \right]}\geq\frac{12}{7},
\end{equation*}
and hence (\ref{eq:appnum1}) $>$ (\ref{eq:appnum2_1}) is guaranteed by 
\begin{equation}\label{eq:hratio} 
\frac{\eta_2}{\eta_1}  =  \frac{h_2^2}{h_1^2} <\frac{12}{7}. 
\end{equation}
In fact, $\frac{\eta}{\eta_B}$ in (\ref{eq:etaratio}) is equivalent to $\frac{\eta_2}{\eta_1}$ in (\ref{eq:hratio}) when $k=s=1$. It should be noted that (\ref{eq:hratio}) holds for $n_0 \geq 6$ and $0<\alpha\leq0.1$ (i.e., (\ref{eq:etaratio}) leads to $\frac{\eta_2}{\eta_1} = \frac{(2)^{\frac{2}{n_0-1}}- (2\alpha)^{\frac{2}{n_0-1}}}{1- (2\alpha)^{\frac{2}{n_0-1}}} \leq  \frac{(2)^{\frac{2}{5}}- (0.2)^{\frac{2}{5}}}{1- (0.2)^{\frac{2}{5}}} \simeq 1.673 < \frac{12}{7} \simeq 1.714$), and this guarantees (\ref{eq:appnum1}) $>$ (\ref{eq:appnum2_1}) in turn when $c=1$. As a result, we reformulate $\xi_{i\ell}^* = \max\left\{2,\frac{|y_{i\ell}-q_\ell|+\epsilon_\ell}{2\epsilon_\ell}\right\}$ based on the above two motivations.


To implement the above concepts within our procedure, $|y_{i\ell}-q_\ell|$ needs to be estimated. It should be noted that simulation observations for estimating $|y_{i\ell}-q_\ell|$ cannot be shared by the monitoring statistics to ensure that the monitoring statistics are independent of the parameter values and hence the statistical validity of the procedure. Therefore $n'_0 \geq2$ data points, $Y'_{i \ell 1}, Y'_{i \ell 2}, \ldots, Y'_{i \ell n'_0}$, are introduced to estimate both $|y_{i\ell}-q_\ell|$ and $\sigma_{i\ell}^2$, but those observations are not included in the monitoring statistics. On the other hand, $n''_0 \geq 0$, $n''_0 \neq 1$, data points, $Y_{i \ell 1}, Y_{i \ell 2}, \ldots, Y_{i \ell n''_0}$ are used only to estimate $\sigma_{i\ell}^2$ (but not to estimate $|y_{i\ell}-q_\ell|$), and are included in monitoring statistics. Thus, we collect a total of $n'_0+n''_0$ observations to estimate the needed parameters.

Let $\widehat{\Delta}_{i\ell}$ denote an estimator of $|y_{i\ell}-q_\ell|$. It should be noted that it is not necessary to handle $|y_{i\ell}-q_\ell| < \epsilon_\ell$ (i.e., system $i$ is acceptable regarding constraint $\ell$) because the ${\cal IZR}$ procedure with $T_{\ell}=2$ is unlikely to terminate with $\epsilon^{(1)}_{i\ell}$, and setting $T_\ell=1$ may perform better than setting $T_\ell=2$ when $|y_{i\ell}-q_\ell| < \epsilon_\ell$. Let $\bar{Y}'_{i\ell}$ denote $\frac{1}{n'_0} \sum_{j=1}^{n'_0} Y'_{i\ell j}$ and, therefore, we set the minimum value of the estimator of $|y_{i\ell}-q_\ell|$ as $\epsilon_\ell$ and we define our estimator  $\widehat{\Delta}_{i\ell}$ of $|y_{i\ell}-q_\ell|$ by
	\begin{eqnarray} \label{eq:delta}
		\widehat{\Delta}_{i\ell} &= \max \left\{ \epsilon_\ell, \left| \bar{Y}'_{i\ell}-q_\ell \right| \right\}.
	\end{eqnarray}
Since $n'_0$ cannot be large (e.g., for fairness sake, we will let $n'_0+n''_0=n_0$ in our numerical results), directly estimating $|y_{i\ell}-q_\ell|$ by $\left|\frac{1}{n'_0} \sum_{j=1}^{n'_0} \left(Y'_{i\ell j}-q_\ell\right) \right|$ can be too aggressive because of estimation error. Thus, we introduce a constant $0.5 < \nu \leq 1$ to alleviate the impact of estimation error of $\widehat{\Delta}_{i\ell}$. Therefore, we finally suggest 
\begin{equation}\label{eq:xihat} 
\epsilon_{i\ell}^{(1)} = \nu \hat{\xi}^*_{i\ell} \epsilon_\ell\; \mbox{ where }\; \hat{\xi}_{i\ell}^*= \max \left\{2, \frac{\widehat{\Delta}_{i\ell}+\epsilon_\ell}{2\epsilon_\ell}\right\}. 
\end{equation}
Note that $0.5 < \nu$ guarantees $\epsilon^{(1)}_{i\ell} > \epsilon^{(2)}_{i\ell}= \epsilon_\ell$ and  $\nu \leq 1$ implies a conservative selection of $\epsilon^{(1)}_{i\ell}$. 

To obtain a more precise sample variance while maintaining statistical validity, we suggest a new estimator of $\sigma_{i\ell}^2$ in our situation. Let $S_{i\ell}^{'2}$ be the sample variance of the dataset, $\{Y'_{i \ell j}; j=1,2, \ldots, n'_0\}$ and $S_{i\ell}^{''2}$ be the sample variance of $\{Y_{i \ell j}; j=1,2, \ldots, n''_0\}$. Then, a new sample variance $\widetilde{S}_{i\ell}^2$ is defined by
\begin{equation}\label{eq:samplevariance} 
\widetilde{S}_{i\ell}^2 = 
\left\{
\begin{array}{ll}
\frac{n'_0-1}{n'_0 + n''_0 -2} S_{i\ell}^{'2} + \frac{n''_0-1}{n'_0 + n''_0 -2} S_{i\ell}^{''2},& \mbox{if } n''_0\geq2,\\ 
\vspace{-6pt}\\
S_{i\ell}^{'2},& \mbox{if } n''_0=0.
\end{array}\right.
\end{equation}
A detailed description of ${\cal IZE}$ is shown in Algorithm~\ref{alg:IZR+++1}. It should be noted that $\eta_E = \frac{1}{2}\left[(2\beta_E)^{-2/n}-1\right]$ when $g_{n}(\eta_E)=\beta_E$ and $c=1$, where $n=n'_0+n''_0-2$ when $n''_0\geq 2$ and $n=n'_0-1$ when $n''_0=0$. Moreover, $\frac{\eta_E}{\eta_B}$ is equivalent to $\frac{\eta_2}{\eta_1}$ in (\ref{eq:hratio}) when $k=s=1$. 

\begin{algorithm}[!htb]
	\caption{\hspace{-0.3em}: \textbf{Procedure ${\cal IZE}$}}\label{alg:IZR+++1}
    {\fontsize{11}{24}\selectfont
		\begin{algorithmic}\single 	
						\State [{\bf Setup}:]
\begin{itemize}\small
				\item[] Choose integers $n'_0\geq 2$ and $n''_0\geq0$, $n''_0\neq 1$, confidence level $0<1-\alpha<1$, $c\in \mathbb{N}^+$, and $\Theta = \{1,2, \ldots, k\}$.
                \item[] Set $0.5 < \nu \leq 1$, tolerance levels $\epsilon_\ell>0$, and thresholds $q_{\ell}$  for $\ell=1,2,\ldots,s$.
                \item[] Set $M = \{1,2,\ldots, k\}$ and $F = \emptyset$.
			\end{itemize}
            \State [{\bf Initialization}:]
			\For{$i \in\Theta$}
                \begin{itemize}\small
				\item[] Obtain observations, $Y'_{i \ell 1}, Y'_{i \ell 2}, \ldots, Y'_{i \ell n'_0}, Y_{i \ell 1}, Y_{i \ell 2}, \ldots, Y_{i \ell n''_0} \overset{iid}{\sim} Y_{i\ell}$ and compute $\widetilde{S}_{i \ell}^2$ using equation (\ref{eq:samplevariance}) for $\ell=1,2,\ldots, s$.
                \item[] For each constraint $\ell=1,2,\ldots, s$, set $T_{\ell}=2$ and relaxed tolerance sets, ${\cal E}_{{\cal U}i \ell} = {\cal E}_{{\cal D}i \ell} = \{\epsilon_{i\ell}^{(1)}, \epsilon_{i\ell}^{(2)} \}$, where $\epsilon_{i\ell}^{(1)}$ is defined in equations (\ref{eq:delta})--(\ref{eq:xihat}) and $\epsilon_{i\ell}^{(2)}=\epsilon_\ell$.
                \item[] Set $r_i = n''_0$ and set $Z_{i \ell}=Z_{{\cal U}i \ell}^{(\epsilon)} = Z_{{\cal D}i \ell}^{(\epsilon)}=0$ for all $\epsilon \in {\cal E}_{{\cal U}i\ell} (= {\cal E}_{{\cal D}i\ell})$ and $\ell \in {\rm ON}_i$.
                \end{itemize} \EndFor
                
                \noindent If $n''_0\geq 2$, calculate $h_E^2= 2c\eta_E (n'_0+n''_0-2)$, where $\eta_E>0$ satisfies
                \begin{equation*}
                    g_{n'_0+n''_0-2}(\eta_E)=\beta_E =  \left\{
\begin{array}{ll}
  \left[1-(1-\alpha)^{1/k}\right]/(2s), & \mbox{ if systems are independent}; \\
  \alpha/(2ks), & \mbox{ otherwise}.
\end{array}\right.
                \end{equation*}
                Otherwise, if $n''_0 = 0$, calculate $h_E^2= 2c\eta_E (n'_0-1)$, where $\eta_E>0$ satisfies $g_{n'_0-1}(\eta_E)=\beta_E$.
				
            \noindent If $n''_0=0$, go to [{\bf Stopping Condition}], otherwise go to [{\bf Feasibility Check}].
			\State [{\bf Feasibility Check}:] Same as Algorithm~\ref{alg:IZR} except replacing $S_{i \ell}^2$ and $h^2$ by $\widetilde{S}_{i \ell}^2$ and $h_E^2$.
            \State[{\bf Stopping Condition}:]
			Same as Algorithm~\ref{alg:IZR}.
		\end{algorithmic}
	}
\end{algorithm}

\subsection{Statistical Guarantee of ${\cal IZE}$}
\label{sec4.2}

Similar to Lemma~\ref{lemma:IZR1}, we provide the following lemma regarding ${\cal IZE}$.

\begin{lemma} \label{lemma:IZE1}
Under Assumption~\ref{assump:normal}, when $y_{i\ell} \geq q_\ell+\epsilon_{\ell}$, ${\cal IZE}$ guarantees
\begin{equation} \label{eq:LowerProb2} 
\Pr\left\{{\cal U}^{(\epsilon)}_{i\ell}(1)\right\} \leq \beta_E, \; \mbox{ for any } \epsilon \in \{\epsilon_\ell^{(1)}, \epsilon_\ell^{(2)}\},
\end{equation} 
and
when $y_{i\ell} \leq q_\ell-\epsilon_{\ell}$, ${\cal IZE}$ guarantees
\begin{equation} \label{eq:UpperProb2} 
\Pr\left\{{\cal D}^{(\epsilon)}_{i\ell}(-1)\right\}  \leq \beta_E, \; \mbox{ for any } \epsilon \in \{\epsilon_\ell^{(1)}, \epsilon_\ell^{(2)}\}.
\end{equation} \vspace{-36pt}
\end{lemma}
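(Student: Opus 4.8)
The plan is to follow the template of Lemma~\ref{lemma:IZR1}: for each fixed $i$, $\ell$, and $\epsilon\in\{\epsilon_\ell^{(1)},\epsilon_\ell^{(2)}\}$, I reduce the claim to the boundary-crossing bound underlying the proof of Theorem~1 in \cite{KimNelsonTOMACS}. Relative to ${\cal F}_B$ and ${\cal IZR}$ there are two new complications: (i) the relaxed tolerance level $\epsilon_\ell^{(1)}=\epsilon_{i\ell}^{(1)}$ is data-dependent, and (ii) the variance estimate $\widetilde S_{i\ell}^2$ is a pooled estimator that reuses the observations $Y_{i\ell 1},\dots,Y_{i\ell n''_0}$ which also enter the monitoring statistics. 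I would carry out the case $y_{i\ell}\ge q_\ell+\epsilon_\ell$ with subroutine ${\cal F_U}$ and equation~(\ref{eq:LowerProb2}); the case $y_{i\ell}\le q_\ell-\epsilon_\ell$ with ${\cal F_D}$ and equation~(\ref{eq:UpperProb2}) is a mirror image (drift $-\epsilon$ in place of $\epsilon$, upper boundary in place of lower).

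First I would sort out the independence structure. The ${\cal F_U}$ monitoring statistic $(\sum_{j=1}^{r}Y_{i\ell j})-r(q_\ell+\epsilon_\ell-\epsilon)$, $r\ge n''_0$, depends only on $\bigl(\sum_{j=1}^{n''_0}Y_{i\ell j},\,Y_{i\ell,n''_0+1},Y_{i\ell,n''_0+2},\dots\bigr)$ (only on $(Y_{i\ell 1},Y_{i\ell 2},\dots)$ when $n''_0=0$), whereas $\widetilde S_{i\ell}^2$ depends only on $(S_{i\ell}^{''2},S_{i\ell}^{'2})$ and $\epsilon_{i\ell}^{(1)}$ only on $\bar{Y}'_{i\ell}$. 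Using (a) independence of the $Y'$-sample from the $Y$-sample, (b) the classical independence of the sample mean and sample variance of i.i.d.\ normal observations, so $\bar{Y}''_{i\ell}\perp S_{i\ell}^{''2}$ and $\bar{Y}'_{i\ell}\perp S_{i\ell}^{'2}$, and (c) independence of $Y_{i\ell,n''_0+1},Y_{i\ell,n''_0+2},\dots$ from $Y_{i\ell 1},\dots,Y_{i\ell n''_0}$, I would conclude that the four objects $\bigl\{(\sum_{j=1}^{n''_0}Y_{i\ell j},\,Y_{i\ell,n''_0+1},\dots),\ S_{i\ell}^{''2},\ S_{i\ell}^{'2},\ \bar{Y}'_{i\ell}\bigr\}$ are mutually independent. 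Hence the monitoring-statistic process is independent of the pair $(\widetilde S_{i\ell}^2,\epsilon_{i\ell}^{(1)})$, and conditioning on $\epsilon_{i\ell}^{(1)}$ leaves the law of $\widetilde S_{i\ell}^2$ intact; in either regime ($n''_0\ge2$ or $n''_0=0$), summing independent scaled chi-squares gives $(n'_0+n''_0-2)\widetilde S_{i\ell}^2/\sigma_{i\ell}^2\sim\chi^2_{n'_0+n''_0-2}$ (resp.\ $(n'_0-1)S_{i\ell}^{'2}/\sigma_{i\ell}^2\sim\chi^2_{n'_0-1}$), which is exactly the number of degrees of freedom used to define $\eta_E$ through $g_{n'_0+n''_0-2}(\eta_E)=\beta_E$ (resp.\ $g_{n'_0-1}(\eta_E)=\beta_E$).

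Next I would condition on $\epsilon_{i\ell}^{(1)}=e$ (no conditioning is needed when $\epsilon=\epsilon_\ell$). Given this, the ${\cal F_U}$ statistic has a fixed tolerance $\epsilon$ in both its threshold shift and its continuation region $\pm R(r;\epsilon,h_E^2,\widetilde S_{i\ell}^2)$; it is still a Gaussian random walk (started from $\sum_{j=1}^{n''_0}Y_{i\ell j}$, or from $Y_{i\ell 1}$ when $n''_0=0$) with per-step drift $y_{i\ell}-q_\ell-\epsilon_\ell+\epsilon\ge\epsilon$, equal to $\epsilon$ when $y_{i\ell}=q_\ell+\epsilon_\ell$, and $\widetilde S_{i\ell}^2$ is independent of it with the chi-square law above. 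This is precisely the configuration treated in the proof of Theorem~1 in \cite{KimNelsonTOMACS}, with degrees of freedom $n'_0+n''_0-2$ (or $n'_0-1$) in place of $n_0-1$, so the probability that the walk first exits the lower boundary is at most $g_{n'_0+n''_0-2}(\eta_E)=\beta_E$ (resp.\ $g_{n'_0-1}(\eta_E)=\beta_E$), with larger drift when $y_{i\ell}>q_\ell+\epsilon_\ell$ only decreasing this probability, as in that theorem. Thus $\Pr\left\{{\cal U}^{(\epsilon)}_{i\ell}(1)\mid\epsilon_{i\ell}^{(1)}=e\right\}\le\beta_E$, and averaging over $e$ yields~(\ref{eq:LowerProb2}); the symmetric argument for ${\cal F_D}$ (drift $-\epsilon$, upper boundary) yields~(\ref{eq:UpperProb2}).

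I expect the main obstacle to be the independence bookkeeping in the second step: since $\widetilde S_{i\ell}^2$ now borrows the first $n''_0$ observations that also feed the monitoring statistics, one cannot simply cite ``the variance is estimated from a separate set of observations'' as for ${\cal F}_B$; instead one must carefully isolate $S_{i\ell}^{''2}$ from the partial sum $\sum_{j=1}^{n''_0}Y_{i\ell j}$ via the normal sample-mean/sample-variance independence, and likewise isolate $\epsilon_{i\ell}^{(1)}$ (a function of $\bar{Y}'_{i\ell}$) from $S_{i\ell}^{'2}$, so as to re-establish both the needed independence from the random walk and the $\chi^2_{n'_0+n''_0-2}$ distribution. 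Once this is in place, the invocation of the cited boundary-crossing bound is immediate and the remainder of the argument is identical to that of Lemma~\ref{lemma:IZR1}.
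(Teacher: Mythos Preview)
Your proposal is correct and follows essentially the same approach as the paper: condition on the random pair $(\widetilde S_{i\ell}^2,\epsilon)$, establish that the underlying random walk $\sum_{j=1}^{r}Y_{i\ell j}$ is independent of this pair via the normal sample-mean/sample-variance independence applied separately to the $Y$- and $Y'$-samples, verify the $\chi^2$ degrees-of-freedom match the definition of $\eta_E$, and then invoke the boundary-crossing bound from \cite{KimNelsonTOMACS}. The paper carries this out more explicitly---it introduces the recentered process $X_{i\ell j}$ with conditional drift exactly $\epsilon$ to handle the monotonicity in $y_{i\ell}$, and writes out the Fabian-style alternating sum and the chi-square MGF calculation in full---whereas you package the same steps as a direct citation of Theorem~1 in \cite{KimNelsonTOMACS} with adjusted degrees of freedom; substantively there is no difference.
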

\begin{proof}
For a general output process ${\bf G}_{i \ell} = \{G_{i\ell j} , j = 1, 2, . . .\}$, let $T_{{\bf G}_{i \ell}}$ represent the stage at which $\sum_{j=1}^r G_{i\ell j}$ exits the triangular region defined by $R(r; \epsilon, h_E^2, \widetilde{S}_{i \ell}^2)$ for the first time after $n''_0$, i.e.,
\begin{equation*}
    T_{{\bf G}_{i \ell}} = \min\left\{r:r\geq n''_0\; \mbox{ and } -R(r; \epsilon, h_E^2, \widetilde{S}_{i \ell}^2) < \sum_{j=1}^r G_{i\ell j} < R(r;\epsilon, h_E^2, \widetilde{S}_{i \ell}^2)\;\mbox{is violated} \right\}.
\end{equation*}
First consider the case of $y_{i\ell} \geq q_\ell+\epsilon_{\ell}$ and define $\Upsilon_{i\ell j} =  Y_{i \ell j}-(q_\ell+\epsilon_\ell-\epsilon)$ for all $j$. Then,
\begin{eqnarray*} 
\Pr\left\{{\cal U}^{(\epsilon)}_{i\ell}(1)\right\}
& = & \Pr\left\{ \left(\sum_{j=1}^{T_{{\bf \Upsilon}_{i \ell}}} \Upsilon_{i \ell j}\right) \leq \min \left\{0, -\frac{h_E^2 \widetilde{S}_{i\ell}^2}{2c\epsilon} + \frac{\epsilon}{2c} T_{{\bf \Upsilon}_{i \ell}} \right\}  \right\}\\
& = & \E \left[\Pr\left\{\sum_{j=1}^{T_{{\bf \Upsilon}_{i \ell}}} \left(Y_{i \ell j} - q_\ell - \epsilon_\ell+\epsilon\right) \leq \min \left\{0, -\frac{h_E^2 \widetilde{S}_{i\ell}^2}{2c\epsilon } + \frac{\epsilon}{2c} T_{{\bf \Upsilon}_{i \ell}} \right\}  \;\middle|\; \widetilde{S}_{i\ell}, \epsilon \right\} \right].
\end{eqnarray*}
Now, we define $X_{i\ell j} = \left(Y_{i \ell j} - q_\ell - \epsilon_\ell+\epsilon\right) - \left(y_{i \ell} - q_\ell - \epsilon_\ell\right)$ so that $\E[X_{i\ell j}|\epsilon]=\epsilon$. Notice that $y_{i\ell} - q_\ell-\epsilon_{\ell}\geq0$ and therefore $X_{i\ell j} \leq Y_{i \ell j} - q_\ell - \epsilon_\ell+\epsilon$. This implies that $\sum_{j=1}^r X_{i\ell j}$ is more likely exit a given continuation region through the lower boundary than $\sum_{j=1}^{r} \left(Y_{i \ell j} - q_\ell - \epsilon_\ell+\epsilon\right)$. Thus,
\begin{eqnarray*} 
\Pr\left\{{\cal U}^{(\epsilon)}_{i\ell}(1)\right\}&\leq& \E \left[\Pr\left\{\sum_{j=1}^{T_{{\bf X}_{i \ell}}} X_{i \ell j} \leq \min \left\{0, -\frac{h_E^2 \widetilde{S}_{i\ell}^2}{2c\epsilon} + \frac{\epsilon}{2c} T_{{\bf X}_{i \ell}} \right\}  \;\middle|\; \widetilde{S}_{i\ell}, \epsilon \right\} \right]\\
&=&\E \left[\Pr\left\{\sum_{j=1}^{T_{{\bf X}_{i \ell}}} \frac{X_{i \ell j}}{\sigma_{i\ell}} \leq \min \left\{0, -\frac{h_E^2 \widetilde{S}_{i\ell}^2}{2c\epsilon \sigma_{i\ell}} + \frac{\epsilon}{2c \sigma_{i\ell}} T_{{\bf X}_{i \ell}} \right\}  \;\middle|\; \widetilde{S}_{i\ell}, \epsilon \right\} \right].
\end{eqnarray*}
If we consider $n''_0\geq2$, $(n'_0+n''_0-2) \frac{\widetilde{S}_{i\ell}^2}{\sigma_{i\ell}^2}=\frac{\left\{(n'_0-1) S_{i\ell}^{'2}+(n''_0-1) S_{i\ell}^{''2}\right\}}{\sigma_{i \ell}^2}$ has a chi-squared distribution with $n'_0+n''_0-2$ degrees of freedom. One important finding is that $\frac{1}{n''_0} \sum_{j=1}^{n''_0} Y_{i\ell j}$ is independent of $S_{i\ell}^{''2}$ under Assumption~\ref{assump:normal}, and it is also independent of $\bar{Y}'_{i\ell}$ and $S_{i\ell}^{'2}$. In this case, $\sum_{j=1}^{n''_0}Y_{i\ell j}$ is independent of $\widetilde{S}_{i\ell}^2$ and $\epsilon$ which is a function of $\bar{Y}'_{i\ell}$. If $n''_0=0$, then $(n'_0-1) \frac{\widetilde{S}_{i\ell}^2}{\sigma_{i\ell}^2}$ has a chi-squared distribution with $n'_0-1$ degrees of freedom. The observations we take after $n''_0$ do not depend on $\widetilde{S}_{i\ell}^2$ and $\epsilon$ as we assume that the $Y_{i\ell j}$ are iid; that is, the infinite sample path after $n''_0$ does not depend on $\widetilde{S}_{i\ell}^2$ and $\epsilon$. Also, notice that the conditional distribution of $\sum_{j=1}^r \frac{X_{i\ell j}}{\sigma_{i\ell}}$ given $\epsilon$ is identical to that of a Brownian motion process with drift $\frac{\epsilon}{\sigma_{i \ell}}$ and variance parameter $1$, defined by ${\cal W}(t, \frac{\epsilon}{\sigma_{i \ell}})$, at times $r=t=n''_0, n''_0+1, \ldots$.
Let 
\begin{equation*}
  a = \frac{h_E^2 \widetilde{S}_{i\ell}^2}{2c\epsilon \sigma_{i\ell}} =\left\{
\begin{array}{ll}
 \frac{\eta_E (n'_0+n''_0-2) \widetilde{S}_{i\ell}^2}{\epsilon \sigma_{i\ell}} ,& \mbox{if } n''_0\geq2,\\ 
\vspace{-6pt}\\
\frac{\eta_E (n'_0-1) \widetilde{S}_{i\ell}^2}{\epsilon \sigma_{i\ell}} ,& \mbox{if } n''_0=0,\\ 
\end{array}\right.  
\end{equation*}
and $\gamma = \frac{\epsilon}{2c\sigma_{i \ell}}$.  By an argument similar to that in Lemmas 3 \citep{Fabian} and 4 \citep{Jennison1980} and the proof of Theorem 1 in \cite{ak},
\begin{eqnarray*} 
\lefteqn{\E \left[\Pr\left\{\sum_{j=1}^{T_{{\bf X}_{i \ell}}} \frac{X_{i \ell j}}{\sigma_{i\ell}} \leq \min \left\{0, -\frac{h_E^2 \widetilde{S}_{i\ell}^2}{2c\epsilon \sigma_{i\ell}} + \frac{\epsilon}{2c \sigma_{i\ell}} T_{{\bf X}_{i \ell}} \right\}  \;\middle|\; \widetilde{S}_{i\ell}, \epsilon \right\} \right]}\nonumber\\
&=& \E \left[\Pr\left\{{\cal W}(T_{{\bf X}_{i \ell}}, \frac{\epsilon}{\sigma_{i \ell}}) <0  \;\middle|\; \widetilde{S}_{i\ell}, \epsilon  \right\} \right]\nonumber\\
& \leq&\E \left[ \sum_{\ell = 1}^c (-1)^{\ell +1} \left( 1- \frac{1}{2} {\mathbb I}(\ell = c) \right) \exp\left\{-2 a \gamma (2c - \ell) \ell \right\} \right]. \label{eq:lemma31}
\end{eqnarray*}
Then, if $n''_0\geq2$,
\begin{eqnarray*}
-2a\gamma (2c-\ell) \ell &=&  -2 \times \frac{\eta_E (n'_0+n''_0-2) \widetilde{S}_{i\ell}^2}{\epsilon \sigma_{i\ell}} \times \frac{\epsilon}{2c\sigma_{i \ell}} \times(2c-\ell) \ell\\
 &=&  - \frac{\eta_E(2c-\ell) \ell}{c} \times \frac{\left\{(n'_0-1) S_{i\ell}^{'2}+(n''_0-1) S_{i\ell}^{''2}\right\}}{\sigma_{i \ell}^2}
\end{eqnarray*}
and if $n''_0=0$,
\begin{eqnarray*}
-2a\gamma (2c-\ell) \ell &=&  -2 \times \frac{\eta_E (n'_0-1) \widetilde{S}_{i\ell}^2}{\epsilon \sigma_{i\ell}} \times \frac{\epsilon}{2c\sigma_{i \ell}}\times (2c-\ell) \ell\\
 &=&  - \frac{\eta_E(2c-\ell) \ell}{c} \times \frac{(n'_0-1) S_{i\ell}^{'2}}{\sigma_{i \ell}^2}.
\end{eqnarray*}
Since $\E[\exp\{\omega \chi_\upsilon^2\}] = (1-2\omega)^{-\upsilon/2}$ for $\omega< 1/2$ when $\chi_\upsilon^2$ is a chi-squared random variable with $\upsilon$ degrees of freedom, 
\begin{eqnarray*}
\lefteqn{\E \left[ \sum_{\ell = 1}^c (-1)^{\ell +1} \left( 1- \frac{1}{2} {\mathbb I}(\ell = c) \right) \exp\left\{-2 a \gamma (2c - \ell) \ell \right\}  \right]} \nonumber \\
& = & \left\{
\begin{array}{ll}
\sum_{\ell = 1}^c (-1)^{\ell +1} \left( 1- \frac{1}{2} {\mathbb I}(\ell = c) \right) \times \left( 1 + \frac{2\eta_E (2c-\ell) \ell}{c}  \right)^{-(n'_0+n''_0-2)/2}  ,& \mbox{if } n''_0\geq2,\\ 
\vspace{-6pt}\\
\sum_{\ell = 1}^c (-1)^{\ell +1} \left( 1- \frac{1}{2} {\mathbb I}(\ell = c) \right) \times \left( 1 + \frac{2\eta_E (2c-\ell) \ell}{c}  \right)^{-(n'_0-1)/2}   ,& \mbox{if } n''_0=0,
\end{array}\right.  \\
&=&  \beta_E,
\end{eqnarray*}
where the last equality holds since $\eta_E$ is the solution of $g_{n'_0+n''_0-2}(\eta_E) =  \beta_E$ if $n''_0\geq2$ and  $g_{n'_0-1}(\eta_E) =  \beta_E$ if $n''_0=0$, see Algorithm~\ref{alg:IZR+++1}. Thus, the result in (\ref{eq:LowerProb2}) holds. 

Similarly, when $y_{i\ell} \leq q_\ell-\epsilon_{\ell}$, the result in (\ref{eq:UpperProb2}) can be derived if $X_{i\ell j}$ is replaced by $X'_{i\ell j} = \left(q_\ell - \epsilon_\ell+\epsilon - Y_{i\ell j}\right) - \left( q_\ell - \epsilon_\ell - y_{i \ell}\right)$, so that $\E[X'_{i\ell j}|\epsilon]=\epsilon$. Thus,
\begin{equation*} 
\Pr\left\{{\cal D}^{(\epsilon)}_{i\ell}(-1)\right\}\leq \E \left[\Pr\left\{\sum_{j=1}^{T_{{\bf X'}_{i \ell}}} \frac{X'_{i \ell j}}{\sigma_{i\ell}} \leq \min \left\{0, -\frac{h_E^2 \widetilde{S}_{i\ell}^2}{2c\epsilon \sigma_{i\ell}} + \frac{\epsilon}{2c \sigma_{i\ell}} T_{{\bf X'}_{i \ell}} \right\}  \;\middle|\; \widetilde{S}_{i\ell}, \epsilon \right\} \right] \leq \beta_E.
\end{equation*}
\end{proof}
The next lemma follows from Lemma~\ref{lemma:IZE1} in the same manner as Lemma~\ref{lemma:IZR2} with $T_\ell=2$ for all $\ell$ follows from Lemma~\ref{lemma:IZR1}.

\begin{lemma} \label{lemma:IZE2}
Under Assumption~\ref{assump:normal}, when either $y_{i\ell} \geq q_\ell+\epsilon_{\ell}$ or $y_{i\ell} \leq q_\ell-\epsilon_{\ell}$, ${\cal IZE}$ guarantees
\begin{eqnarray*}
\Pr \left\{{\rm ICD}_{i\ell} \right\}&\leq&  2\beta_E.
\end{eqnarray*}
\end{lemma}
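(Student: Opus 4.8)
The plan is to mirror the proof of Lemma~\ref{lemma:IZR2} essentially verbatim, specializing to $T_\ell = 2$ for every $\ell$ and using Lemma~\ref{lemma:IZE1} in place of Lemma~\ref{lemma:IZR1}. The one feature of ${\cal IZE}$ that must be checked along the way is that the larger relaxed tolerance level $\epsilon_{i\ell}^{(1)} = \nu\hat{\xi}^*_{i\ell}\epsilon_\ell$ is estimated from the preliminary sample and is therefore random; but since $\nu > 1/2$ and $\hat{\xi}^*_{i\ell} \ge 2$ we always have $\epsilon_{i\ell}^{(1)} > \epsilon_{i\ell}^{(2)} = \epsilon_\ell$, so the deterministic inequality $\epsilon \ge \epsilon_\ell$ holds for both $\epsilon \in \{\epsilon_{i\ell}^{(1)}, \epsilon_{i\ell}^{(2)}\}$ along every sample path.

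First I would record the pathwise domination exactly as at the start of the proof of Lemma~\ref{lemma:IZR2}. For each $\epsilon \in \{\epsilon_{i\ell}^{(1)}, \epsilon_{i\ell}^{(2)}\}$ the relation $\epsilon \ge \epsilon_\ell$ gives $q_\ell + \epsilon_\ell - \epsilon \le q_\ell - \epsilon_\ell + \epsilon$, hence $(\sum_{j=1}^{r} Y_{i\ell j}) - r(q_\ell + \epsilon_\ell - \epsilon) \ge (\sum_{j=1}^{r} Y_{i\ell j}) - r(q_\ell - \epsilon_\ell + \epsilon)$ for every positive integer $r$; since both processes are compared against the same symmetric boundary $\pm R(r;\epsilon, h_E^2, \widetilde{S}_{i\ell}^2)$, this yields ${\cal U}^{(\epsilon)}_{i\ell}(1) \subseteq {\cal D}^{(\epsilon)}_{i\ell}(1)$, ${\cal D}^{(\epsilon)}_{i\ell}(-1) \subseteq {\cal U}^{(\epsilon)}_{i\ell}(-1)$, and ${\cal U}^{(\epsilon)}_{i\ell}(1) \cap {\cal D}^{(\epsilon)}_{i\ell}(-1) = \emptyset$. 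Because only the deterministic bound $\epsilon \ge \epsilon_\ell$ is used here, these inclusions are unaffected by $\epsilon_{i\ell}^{(1)}$ being data-dependent.

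Next I would decompose by the terminal tolerance level. In the case $y_{i\ell} \ge q_\ell + \epsilon_\ell$, write $\Pr\{{\rm ICD}_{i\ell}\} = \sum_{\tau=1}^{2}\Pr\{{\rm ICD}_{i\ell},\, \epsilon^*_{i\ell} = \epsilon_{i\ell}^{(\tau)}\}$, which is valid since ${\cal IZE}$ always terminates (the continuation region for $\epsilon_{i\ell}^{(2)} = \epsilon_\ell$ closes in finitely many stages) and $T_\ell = 2$. On $\{{\rm ICD}_{i\ell},\, \epsilon^*_{i\ell} = \epsilon_{i\ell}^{(\tau)}\}$ the procedure declared system $i$ feasible for constraint $\ell$ with $\epsilon = \epsilon_{i\ell}^{(\tau)}$, i.e., $Z_{{\cal U}i\ell}^{(\epsilon_{i\ell}^{(\tau)})} = Z_{{\cal D}i\ell}^{(\epsilon_{i\ell}^{(\tau)})} = 1$, which is the event ${\cal U}^{(\epsilon_{i\ell}^{(\tau)})}_{i\ell}(1) \cap {\cal D}^{(\epsilon_{i\ell}^{(\tau)})}_{i\ell}(1)$; by the first inclusion above this intersection equals ${\cal U}^{(\epsilon_{i\ell}^{(\tau)})}_{i\ell}(1)$, whose probability is at most $\beta_E$ by Lemma~\ref{lemma:IZE1}. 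Summing over $\tau \in \{1,2\}$ gives $\Pr\{{\rm ICD}_{i\ell}\} \le 2\beta_E$. The case $y_{i\ell} \le q_\ell - \epsilon_\ell$ is symmetric: replace the events ${\cal U}^{(\epsilon)}_{i\ell}(1)$ by ${\cal D}^{(\epsilon)}_{i\ell}(-1)$ throughout, use ${\cal D}^{(\epsilon)}_{i\ell}(-1) \subseteq {\cal U}^{(\epsilon)}_{i\ell}(-1)$, and invoke the second bound of Lemma~\ref{lemma:IZE1}.

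I do not anticipate a real obstacle: once Lemma~\ref{lemma:IZE1} is in hand the argument is pure bookkeeping, identical to the $T_\ell = 2$ specialization of Lemma~\ref{lemma:IZR2}. The only delicate point --- and the reason this lemma is stated separately rather than subsumed under Lemma~\ref{lemma:IZR2} --- is the randomness of $\epsilon_{i\ell}^{(1)}$, and that randomness has already been absorbed inside Lemma~\ref{lemma:IZE1}, whose proof conditions on $(\widetilde{S}_{i\ell},\epsilon)$ and uses the independence of the monitoring path after stage $n''_0$ from that pair. Consequently no probabilistic input beyond Lemma~\ref{lemma:IZE1} and the elementary inclusions above is required.
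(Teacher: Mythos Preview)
Your proposal is correct and matches the paper's own approach exactly: the paper simply states that Lemma~\ref{lemma:IZE2} follows from Lemma~\ref{lemma:IZE1} in the same manner as Lemma~\ref{lemma:IZR2} (with $T_\ell=2$) follows from Lemma~\ref{lemma:IZR1}, and you have spelled that out, including the pathwise ordering $\epsilon_{i\ell}^{(1)}>\epsilon_\ell$ that justifies the event inclusions despite the randomness of $\epsilon_{i\ell}^{(1)}$.
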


Finally, we have the following theorem.

\begin{theorem}\label{thm:IZE}
Under Assumption~\ref{assump:normal}, ${\cal IZE}$ guarantees
\begin{equation*}
\Pr \left\{ \bigcap_{i=1}^k \bigcap_{\ell =1}^s {\rm CD}_{i \ell} \right\} \geq \ 1-\alpha.
\end{equation*}
\end{theorem}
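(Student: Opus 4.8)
The plan is to mirror the proof of Theorem~\ref{thm:IZR}, with Lemma~\ref{lemma:IZE2} playing the role that Lemma~\ref{lemma:IZR2} played there, and using the fact that ${\cal IZE}$ fixes $T_\ell=2$ for every constraint, so that $\sum_{\ell=1}^s T_\ell = 2s$ and the per-(system, constraint, tolerance-level) budget is exactly $\beta_E$ as defined in Algorithm~\ref{alg:IZR+++1}. First I would observe that for each system $i$ and constraint $\ell$ with $q_\ell-\epsilon_\ell < y_{i\ell} < q_\ell+\epsilon_\ell$ (i.e., $i$ acceptable with respect to $\ell$), any feasibility declaration is a correct decision by definition of the acceptable set, so $\Pr\{{\rm ICD}_{i\ell}\}=0$; only the constraints with $y_{i\ell}\ge q_\ell+\epsilon_\ell$ or $y_{i\ell}\le q_\ell-\epsilon_\ell$ contribute to the failure probability, and for those Lemma~\ref{lemma:IZE2} gives $\Pr\{{\rm ICD}_{i\ell}\}\le 2\beta_E$.

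For the independent-systems case, the observations and hence the feasibility decisions of different systems are independent, so $\Pr\{\bigcap_{i=1}^k\bigcap_{\ell=1}^s{\rm CD}_{i\ell}\}=\prod_{i=1}^k\Pr\{\bigcap_{\ell=1}^s{\rm CD}_{i\ell}\}$. Within each system, the Bonferroni inequality over the $s$ constraints gives $\Pr\{\bigcap_{\ell=1}^s{\rm CD}_{i\ell}\}\ge 1-\sum_{\ell=1}^s\Pr\{{\rm ICD}_{i\ell}\}\ge 1-2s\beta_E$, and substituting $\beta_E=[1-(1-\alpha)^{1/k}]/(2s)$ yields $\Pr\{\bigcap_{\ell=1}^s{\rm CD}_{i\ell}\}\ge (1-\alpha)^{1/k}$; taking the product over $i=1,\dots,k$ produces the bound $1-\alpha$. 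For the dependent-systems case, I would instead apply Bonferroni directly across all $k s$ pairs: $\Pr\{\bigcap_{i=1}^k\bigcap_{\ell=1}^s{\rm CD}_{i\ell}\}\ge 1-\sum_{i=1}^k\sum_{\ell=1}^s\Pr\{{\rm ICD}_{i\ell}\}\ge 1-2ks\beta_E$, and substituting $\beta_E=\alpha/(2ks)$ gives $1-\alpha$. As in Theorem~\ref{thm:IZR}, for $i\in A\cup U$ the events ${\rm CD}_{i\ell}$ and ${\rm ICD}_{i\ell}$ are interpreted constraint-by-constraint in isolation until a decision is made.

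There is essentially no new obstacle at the level of this theorem: the genuinely delicate point — that the relaxed tolerance level $\epsilon^{(1)}_{i\ell}$ and the pooled sample variance $\widetilde S_{i\ell}^2$ are data-dependent — has already been handled inside Lemma~\ref{lemma:IZE1} by conditioning on $(\epsilon,\widetilde S_{i\ell})$ and exploiting the deliberate separation of the preliminary sample $Y'_{i\ell 1},\dots,Y'_{i\ell n'_0}$ (together with the first $n''_0$ of the $Y_{i\ell j}$'s, which feed only the variance estimate) from the increments driving the monitoring statistics. Thus the only thing to verify carefully here is the bookkeeping of the Bonferroni allocation, namely that $T_\ell=2$ for all $\ell$ makes $\beta_E$ exactly the right per-comparison level so that the union bound closes to $\alpha$ in both the independent and dependent cases.
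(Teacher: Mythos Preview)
Your proposal is correct and follows exactly the paper's approach: the paper's proof simply states that the result follows from Lemma~\ref{lemma:IZE2} by replacing $\beta$ with $\beta_E$ in the proof of Theorem~\ref{thm:IZR}, which is precisely the argument you have spelled out in detail.
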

\begin{proof}
From Lemma \ref{lemma:IZE2}, the result directly holds by replacing $\beta$ by $\beta_E$ in the proof of Theorem~\ref{thm:IZR}.
\end{proof}

\begin{remark}
    We used a specific formula for $\epsilon^{(1)}_{i\ell}$ in equations (\ref{eq:delta})--(\ref{eq:xihat}) and showed the statistical validity of ${\cal IZE}$ with this formula. Nevertheless, one may define $\epsilon^{(1)}_{i\ell}$ differently as a function of  $\bar{Y}'_{i\ell}$ and/or $\widetilde{S}_{i\ell}^2$ and obtain the same guarantee of statistical validity. Thus Theorem~\ref{thm:IZE} proves the statistical validity of a broad class of procedures employing a relaxed indifference-zone and estimation to achieve increased efficiency in practical settings.
\end{remark}

\section{Numerical Results}
\label{sec5}

In this section, we present numerical results to demonstrate the performance of our proposed procedures ${\cal IZR}$ and ${\cal IZE}$ relative to the existing procedure ${\cal F}_B$ due to \citet{bk:constraint}.
All the numerical results are based on 10,000 macro replications with $\alpha=0.05$ and we report the average number of observations (OBS) and estimated probability of correct decision (PCD). 
For Procedure ${\cal IZR}$, we particularly focus on setting $T_\ell=2$ and $\xi\in \{2, 3\}$ (i.e., $\{\epsilon_\ell^{(1)}=2\epsilon_\ell, \epsilon_\ell^{(2)}=\epsilon_\ell\}$ for $\xi=2$ and $\{\epsilon_\ell^{(1)}=3\epsilon_\ell, \epsilon_\ell^{(2)}=\epsilon_\ell\}$ for $\xi=3$). The choice of $T = 2$ follows the general recommendation provided in Section 3.4. The selection of $\xi =2$ is motivated by the discussion in Section \ref{sec3_1}, with $\xi=3$ being a larger choice. 

We begin by describing the experimental configurations in Section~\ref{sec:ExpConfig}. Section~\ref{sec:ImplementationParams} presents a detailed discussion and results on various implementation parameters for the three competing procedures. The statistical validity and efficiency of our proposed procedures are demonstrated in Sections~\ref{sec:StatisticalValidity} and~\ref{sec:Efficiency}, respectively. Section~\ref{sec:Inventory} illustrates the performance of our proposed procedures through an inventory control example. Finally, we provide a summary of the overall findings from our experimental results, along with general recommendations for selecting the implementation parameters for our proposed procedures, in Section \ref{sec:Discussion}.

\subsection{Experimental Configurations}
\label{sec:ExpConfig}
In this section, we describe the mean and variance configurations used in the cases we test. It should be noted that we set the threshold constant $q_\ell=0$ and tolerance level $\epsilon_\ell=0.02$, where $\ell=1,\ldots,s$, for all procedures considered.
To describe the mean configurations, we place systems into three groups based on their feasibility status using parameters $0 \leq \underline{b} \leq \overline{b} \leq k$. Specifically, we have
\begin{itemize}
	\item $\underline{b}$ systems with all constraints feasible,
	\item $(\overline{b} - \underline{b})$ systems with $m$ feasible and $(s - m)$ infeasible constraints, where $0\leq m\leq s$, and
	\item $(k - \overline{b})$ systems with all constraints infeasible.  	
\end{itemize}
Note that the choices of $\underline{b}, \overline{b}$, and $m$ impact how fast feasibility checks can be completed for the $k$ systems.

We now present the two mean configurations considered in our experiments: Concentrated Means (CM) and Scattered Means (SM). 

In the CM configuration, each system's mean is set to a fixed value that is $d_\ell\geq 0$ away from the corresponding threshold $q_\ell$ for constraint $\ell$. Specifically, the means are assigned as follows:
\begin{align*}
	y_{i\ell} = \begin{cases}
		-d_\ell, & i=1, 2, \ldots, \underline{b} \text{ and } \ell=1,\ldots,s,  \\
		-d_\ell, & i= \underline{b}+1, \underline{b}+2, \ldots, \overline{b} \text{ and } \ell=1,\ldots,m,  \\
		d_\ell, & i= \underline{b}+1, \underline{b}+2, \ldots, \overline{b} \text{ and } \ell=m+1,\ldots,s,  \\
		d_\ell, & i= \overline{b}+1, \overline{b}+2, \ldots, k \text{ and } \ell=1,\ldots,s.
	\end{cases}
\end{align*}
This configuration is considered as the ``most difficult'' case on constraint $\ell$ when $d_\ell=\epsilon_\ell$. When $0\leq d_\ell<\epsilon_\ell$, all systems are acceptable systems; when $d_\ell\geq \epsilon_\ell$, there are no acceptable systems. 

In contrast, the SM configuration sets system means to a wider range of values that vary with the system index. This setting is considered to better resemble practical settings. The means are given by:
\begin{align*}
	y_{i\ell} = \begin{cases}
		-(\underline{b}-i+1)d_\ell, & i=1, 2, \ldots, \underline{b} \text{ and } \ell=1,\ldots,s,  \\
		-(i-\underline{b})d_\ell, & i= \underline{b}+1, \underline{b}+2, \ldots, \overline{b} \text{ and } \ell=1,\ldots,m,  \\
		(i-\underline{b})d_\ell, & i= \underline{b}+1, \underline{b}+2, \ldots, \overline{b} \text{ and } \ell=m+1,\ldots,s,  \\
		(i-\overline{b})d_\ell, & i= \overline{b}+1, \overline{b}+2, \ldots, k \text{ and } \ell=1,\ldots,s.
	\end{cases}
\end{align*}
Note that when $0< d_\ell <\epsilon_\ell$, the number of acceptable systems with respect to constraint $\ell$ depends on the choices of $\underline{b}, \overline{b}, d_\ell$, and $\epsilon_\ell$ as explained below. 
\begin{itemize}
	\item For the first $\underline{b}$ systems, which are feasible for all constraints, system $i$ is acceptable on constraint $\ell$ only if $-(\underline{b} - i + 1)d_\ell > -\epsilon_\ell$ holds. This is equivalent to $\max \{0, \underline{b}+1-  \epsilon_\ell/d_\ell  \}< i\leq \underline{b}$ (since we also have $1\leq i\leq \underline{b}$ and $i$ is an integer). Therefore, $\min \{ \lceil \epsilon_\ell/d_\ell \rceil -1, \underline{b}\}$ systems are acceptable among the first $\underline{b}$ systems when $0< d_\ell <\epsilon_\ell$.
	\item Among the next $(\overline{b} - \underline{b})$ systems, which have $m$ feasible and $(s - m)$ infeasible constraints, system $i$ is acceptable on constraint $\ell$ only if $(i - \underline{b})d_\ell < \epsilon_\ell$, which is equivalent to $\underline{b}+1\leq i < \min \{\underline{b}+\epsilon_\ell/d_\ell, \overline{b}+1\}$ (as we also have $\underline{b}+1\leq i\leq \overline{b}$ and $i$ is an integer). Thus, $\min \{ \lceil \epsilon_\ell/d_\ell \rceil - 1, \overline{b}-\underline{b}\}$ systems are acceptable among the $(\overline{b} - \underline{b})$ systems when $0< d_\ell <\epsilon_\ell$.
	\item Similarly, among the last $(k - \overline{b})$ systems, which are infeasible on all constraints, system $i$ is acceptable only if $(i - \overline{b})d_\ell < \epsilon_\ell$, or equivalently, $\overline{b}+1\leq i< \min\{\overline{b}+\epsilon_\ell/d_\ell, k+1\}$ (as we also need to ensure $\overline{b} + 1\leq i\leq k$ and $i$ is an integer). Therefore, $\min \{ \lceil \epsilon_\ell/d_\ell \rceil - 1, k-\overline{b}\}$ systems are acceptable among the last $(k-\overline{b})$ systems when $0< d_\ell <\epsilon_\ell$.
\end{itemize}
In addition, when $d_\ell=0$, all systems are acceptable systems; when $d_\ell\geq \epsilon_\ell$, there are no acceptable systems.

We also consider three variance configurations: Constant Variance (CV), Increasing Variance (IV), and Decreasing Variance (DV). In all cases, we set $\sigma^2 =1$ as the average variance value.
\begin{itemize}
	\item {\bf IV:} We consider two variants:
	\begin{itemize}
		\item \textbf{IV-C (Constraint-based):} Variance increases with the constraint index. We set $\sigma_{i\ell}^2 = 2[\ell/(s+1)]\sigma^2$ for all $i = 1, \ldots, k$ and $\ell = 1, \ldots, s$.
		\item \textbf{IV-S (System-based):} Variance increases with the system index. We set $\sigma_{i\ell}^2 = 2[i/(k+1)]\sigma^2$ for all $i = 1, \ldots, k$ and $\ell = 1, \ldots, s$.
	\end{itemize}
	\item {\bf DV:} We consider two similar variants:
	\begin{itemize}
		\item \textbf{DV-C (Constraint-based):} Variance decreases with the constraint index. We set $\sigma_{i\ell}^2 = 2[(s - \ell+1)/(s+1)]\sigma^2$ for all $i = 1, \ldots, k$ and $\ell = 1, \ldots, s$.
		\item \textbf{DV-S (System-based):} Variance decreases with the system index. We set $\sigma_{i\ell}^2 = 2 [(k - i+1)/(k+1)]\sigma^2$ for all $i = 1, \ldots, k$ and $\ell = 1, \ldots, s$.
	\end{itemize}
	\item {\bf CV:} All systems have identical variance across all constraints as $\sigma^2$, where $i=1,\ldots,k$ and $\ell=1,\ldots,s$. 
\end{itemize}	

Finally, we simulate the systems and constraints independently (except for Section \ref{sec:Inventory} where the constraints are correlated). This assumption is justified for the systems, as the use of common random numbers is not expected to significantly affect the feasibility check as each system is evaluated against fixed threshold values rather than being compared to other systems (see \citet{Zhou2022}). On the other hand, \citet{bk:constraint} demonstrate that the correlation across constraints has minimal impact on the feasibility check of their proposed procedure ${\cal F}_B$. We expect a similar phenomenon to hold for our proposed procedures, ${\cal IZR}$ and ${\cal IZE}$, and therefore focus exclusively on the case where constraints are independent.
Finally, when systems are correlated, a more conservative setting of the implementation parameter $\eta$ (or $\eta_E$) may lead to slightly worse performance than in the independent case. To provide results that reflect more practical scenarios, we adopt the setting of $\eta$ (or $\eta_E$) corresponding to the correlated case throughout our experiments.

\subsection{Implementation Parameters}
\label{sec:ImplementationParams}  

In this section, we describe the choices for the implementation parameters $n_0'$, $n_0''$, and $\nu$ for ${\cal IZE}$ and $n_0$ for ${\cal F}_B$ and ${\cal IZR}$. 

\paragraph{Parameters $n_0'$ and $n_0''$.}

${\cal IZE}$ uses $n_0'>0$ initial observations to estimate $|y_{i\ell}-q_\ell|$ and $\sigma_{i\ell}^2$ and determine the tolerance level $\epsilon_{i\ell}^{(1)}$. These $n_0'$ observations are discarded and not used later in the feasibility check. Additionally, ${\cal IZE}$ uses $n_0''\geq 0$ observations to estimate $\sigma_{i\ell}^2$, which are subsequently included in the feasibility check.

Recall that Procedures ${\cal F}_B$ and ${\cal IZR}$ also require $n_0$ initial samples for estimating $\sigma_{i\ell}^2$, and those samples are included in the subsequent feasibility check. To ensure a fair comparison, we set the total number of initial samples in ${\cal IZE}$ as $n_0' + n_0'' = n_0$, matching the initial sample size used by ${\cal F}_B$ and ${\cal IZR}$. 
However, it is not clear how $n_0'$ and $n_0''$ should be compared given that both are used to estimate $\sigma_{i\ell}^2$ but only $n_0'$ is used to estimate $|y_{i\ell}-q_\ell|$ while only $n_0''$ is used for feasibility check. Hence, taking into account $n_0'> 0$, $n_0''\geq 0$ ($n_0''\ne 1$) and setting $n_0'+n_0''=n_0=20$, we evaluate four combinations: $(n_0', n_0'') \in \{(5, 15), (10, 10), (15, 5), (20,0)\}$.

\paragraph{Parameter $\nu$.} Among the three competing procedures, the parameter $\nu$ is only required by ${\cal IZE}$. This parameter is used to determine $\epsilon_{i\ell}^{(1)}$. As we have $0.5<\nu\leq 1$, we particularly focus on $\nu \in \{0.6, 0.8, 1\}$. 

We set the number of systems to $k = 99$, with $\underline{b} = 33$ and $\overline{b} = 66$. We also set $s=4$ constraints and $m=2$, i.e., systems $\underline{b}+1, \ldots, \overline{b}$ have $m=2$ feasible constraints and $(s-m)=2$ infeasible constraints. 
Since the SM configuration better reflects practical scenarios, we focus on this setting (except for the results regarding statistical validity; see Section \ref{sec:StatisticalValidity}) and vary the separation parameters as $d_\ell \in \{0.02, 0.5\}$ for $\ell = 1, \ldots, s$. Recall that we set $\epsilon_\ell = 0.02$ for all $\ell$, so $d_\ell = 0.02$ leads to some systems in the slippage configuration, which is typically considered the most challenging scenario for statistical validity. In contrast, $d_\ell = 0.5$ represents a less difficult and more practically representative setting.
Table \ref{tab:DiffCombInitSample_IZRE} presents the experimental results for $(n_0', n_0'')\in \{(5, 15), (10, 10), (15, 5), (20, 0)\}$ and $\nu\in \{0.6, 0.8, 1\}$. In addition, experimental results that include acceptable systems, where we set $d_\ell = 0.01$, are presented in  Appendix~\ref{sec:IZE_DiffComb_d_n0_Additional}.

\begin{table}[h!]
	\centering
	\resizebox{\columnwidth}{!}{
		\begin{tabular}{ c || c | c cccc | c cccc}
			\toprule
			& & \multicolumn{5}{c|}{$d_\ell=0.02$} & \multicolumn{5}{c}{$d_\ell=0.5$}   \\  
			$\nu$ & $(n_0', n_0'')$ & CV & IV-C & DV-C & IV-S & DV-S & CV & IV-C & DV-C & IV-S & DV-S  \\  \midrule
			\multirow{ 8}{*}{0.6} 
			& $(5, 15)$ & 240772 & 276892 & 223198 & 179481 & 306835 & 3266 & 3455 & 3224 & 2837 & 3720 \\ 
			& & (1.000) & (1.000) & (1.000) & (1.000) & (0.999) & (1.000) & (1.000) & (1.000) & (1.000) & (1.000) \\   \cline{2 - 12}
			& $(10, 10)$ & 229394 & 263549 & 213111 & 172549 & 290828 & 3169 & 3364 & 3117 & 2799 & 3582 \\  
			& & (1.000) & (1.000) & (1.000) & (1.000) & (1.000) & (1.000) & (1.000) & (1.000) & (1.000) & (1.000) \\    \cline{2 - 12}
			& $(15, 5)$ & 224429 & 257264 & 208317 & 169776 & 283473 & 3161 & 3342 & 3096 & 2810 & 3537 \\ 
			& & (1.000) & (1.000) & (0.999) & (0.999) & (1.000) & (1.000) & (1.000) & (1.000) & (1.000) & (1.000) \\   \cline{2 - 12}
			& $(20,0)$ & 214633 & 245522 & 198905 & 164013 & 269946 & 3267 & 3449 & 3197 & 2957 & 3623 \\ 
			& & (1.000) & (0.999) & (1.000) & (1.000) & (0.999) & (1.000) & (1.000) & (1.000) & (1.000) & (1.000)
			\\  \midrule
			\multirow{ 8}{*}{0.8} 
			& $(5, 15)$ & 247724 & 286519 & 230509 & 183466 & 320074 & 3169 & 3392 & 3174 & 2723 & 3697 \\ 
			& & (1.000) & (0.999) & (1.000) & (1.000) & (1.000) & (1.000) & (1.000) & (1.000) & (1.000) & (1.000) \\   \cline{2 - 12}
			& $(10, 10)$ & 230623 & 266345 & 214217 & 172300 & 297261 & 2973 & 3153 & 2966 & 2633 & 3346  \\  
			& & (1.000) & (0.999) & (0.999) & (1.000) & (0.999) & (1.000) & (1.000) & (1.000) & (1.000) & (1.000) \\    \cline{2 - 12}
			& $(15, 5)$ & 221805 & 256277 & 206108 & 166740 & 284332 & 2917 & 3097 & 2891 & 2630 & 3254 \\ 
			& & (1.000) & (1.000) & (1.000) & (0.999) & (1.000) & (1.000) & (1.000) & (1.000) & (1.000) & (1.000) \\   \cline{2 - 12}
			& $(20,0)$ & 209563 & 241632 & 194322 & 158257 & 267407 & 3026 & 3190 & 2984 & 2776 & 3327 \\ 
			& & (0.999) & (1.000) & (1.000) & (1.000) & (1.000) & (1.000) & (1.000) & (1.000) & (1.000) & (1.000)
			\\  \midrule
			\multirow{ 8}{*}{1} 
			& $(5, 15)$ & 259923 & 300723 & 241671 & 191022 & 336611 & 3410 & 3707 & 3446 & 2811 & 4107 \\ 
			& & (1.000) & (1.000) & (0.999) & (1.000) & (0.999) & (1.000) & (1.000) & (1.000) & (1.000) & (1.000) \\   \cline{2 - 12}
			& $(10, 10)$ & 241358 & 279959 & 224711 & 178056 & 313283 & 3075 & 3295 & 3097 & 2633 & 3561 \\  
			& & (0.999) & (0.999) & (1.000) & (1.000) & (1.000) & (1.000) & (1.000) & (1.000) & (1.000) & (1.000) \\    \cline{2 - 12}
			& $(15, 5)$ & 230734 & 266827 & 215059 & 171046 & 299728 & 2934 & 3142 & 2965 & 2596 & 3369 \\ 
			& & (0.999) & (1.000) & (1.000) & (0.999) & (1.000) & (1.000) & (1.000) & (1.000) & (1.000) & (1.000) \\   \cline{2 - 12}
			& $(20,0)$ & 217018 & 251120 & 201783 & 161885 & 280439 & 3002 & 3170 & 3001 & 2712 & 3377  \\ 
			& & (1.000) & (0.999) & (1.000) & (1.000) & (0.999) & (1.000) & (1.000) & (1.000) & (1.000) & (1.000)
			\\ 
			\bottomrule    
	\end{tabular}}
	\caption{Estimated OBS and PCD (in parentheses) of ${\cal IZE}$ with respect to different combinations of $(n_0', n_0'')$ and $\nu$ under the SM configuration with $k=99$ systems, $s=4$ constraints, and $d_\ell \in \{0.02, 0.5\}$}
	\label{tab:DiffCombInitSample_IZRE}
\end{table}

Based on the results in Table \ref{tab:DiffCombInitSample_IZRE}, with a small $n_0'$, we observe that a smaller value of $\nu$ (e.g., $\nu=0.6$) tends to perform better when $d_\ell$ is small, whereas a moderate value (e.g., $\nu = 0.8$) generally yields the best performance when $d_\ell$ is large. This trend is primarily driven by the density of system means: when system means are closely spaced, as in the case of $d_\ell = 0.02$, a smaller $\nu$ is more effective in conducting feasibility checks. In contrast, when the means are more widely separated (e.g., $d_\ell = 0.5$), a larger $\nu$ becomes more suitable.

We also find that the combination $(n_0', n_0'') = (20, 0)$ performs best when $d_\ell = 0.02$, while $(n_0', n_0'') = (15, 5)$ performs best when $d_\ell = 0.5$. In addition, among the values of $(n_0', n_0'')$ considered, the fact that $n_0'>10$ yields better performance suggests that higher importance is put on estimating $|y_{i\ell}-q_\ell|$ to choose reasonable $\epsilon_{i\ell}^{(1)}$ than on having more observations for performing feasibility checks. This tendency is more pronounced when systems are dense.

By comparing results across different variance configurations, we identify the following three key findings (see detailed discussion in Appendix \ref{sec:IZE_DiffVarConfig_Additional}). 
First, IV-C requires more observations than DV-C and CV falls in between. 
Second, the DV-S configuration requires more observations than IV-S, with CV falling in between. 
Finally, we note that the distinction between IV-C and DV-C only affects the systems with indices $\underline{b} + 1 \leq i \leq \overline{b}$, whereas the difference between IV-S and DV-S impacts all systems. As a result, the relative impact of variance configurations indexed by systems versus those indexed by constraints depends on the specific values of $\underline{b}$ and $\overline{b}$. In the particular setting considered here, we observe that DV-S (IV-S) results in the highest (lowest) overall observation requirement among the five variance configurations.

To decide on values of $\nu, n_0'$, and $n_0''$ to use in the remaining sections based on our experiments, we primarily focus on $d_\ell=0.5$ as it is more reflective of reality. We find that choosing a medium value $\nu=0.8$ works best in general. Moreover, $(n_0', n_0'') = (15, 5)$ provides the best performance across all variance configurations when $d_\ell=0.5$ (however, $(n_0', n_0'') = (20, 0)$ performs better than $(n_0', n_0'') = (15, 5)$ when $d_\ell=0.02$). 
We adopt the configuration of $\nu=0.8$ and $n_0'=n_0-5$ in the remainder of this section and also focus on the CV variance configuration.

\paragraph{Parameter $n_0$.} Since ${\cal IZE}$ allocates a portion of the initial $n_0$ observations for estimating $|y_{i\ell}-q_\ell|$ and not exclusively for estimating $\sigma_{i\ell}^2$, while ${\cal F}_B$ and ${\cal IZR}$ use all $n_0$ initial observations for estimating $\sigma_{i\ell}^2$ only, increasing $n_0$ may improve the relative performance of ${\cal IZE}$. To examine the impact of $n_0$, we evaluate all three procedures under varying values of $n_0 \in \{10, 20, \ldots, 50\}$. For ${\cal IZE}$, we fix the allocation as $(n_0', n_0'') = (n_0 - 5, 5)$ and $\nu=0.8$ for consistency and comparability.

The results are presented in Table~\ref{tab:DiffInitSample}, where we again consider $k = 99$ systems, $s = 4$ constraints, $\underline{b}=33, \overline{b}=66, m=2$, and focus on the SM and CV configurations. Additional results for $n_0 > 50$ and other choices of $n_0'$ and $\nu$ are provided in Appendix \ref{sec:FB_IZR_Diff_n0_Additional}.

\begin{table}[h!]
	\centering
	\resizebox{\columnwidth}{!}{
		\begin{tabular}{ c || c cc c | c cc c}
			\toprule
			& \multicolumn{4}{c|}{$d_\ell=0.02$} & \multicolumn{4}{c}{$d_\ell=0.5$}   \\  
			& ${\cal F}_B$ & ${\cal IZR}$ & ${\cal IZR}$ & ${\cal IZE}$ & ${\cal F}_B$ & ${\cal IZR}$ & ${\cal IZR}$ & ${\cal IZE}$  \\  
			$n_0$ & & $T=2, \xi=2$ & $T=2, \xi=3$ & & & $T=2, \xi=2$ & $T=2, \xi=3$ \\  \midrule
			$10$ & 587192 & 424793 & 381433 & 545038 & 27374 & 16539 & 11149 & 3879 \\
			& (1.000) & (0.999) & (1.000) & (0.999) & (1.000) & (1.000) & (1.000) & (1.000) \\   \hline
			$20$ & 328495 & 223273 & 199134 & 221806 & 15378 & 8824 & 6092 & 2917 \\ 
			& (0.999) & (1.000) & (0.999) & (1.000) & (1.000) & (1.000) & (1.000) & (1.000) \\   \hline
			$30$ & 279600 & 186483 & 165654 & 173285 & 13141 & 7722 & 5707 & 3637  \\  
			& (0.999) & (1.000) & (1.000) & (1.000) & (1.000) & (1.000) & (1.000) & (1.000) \\    \hline
			$40$ & 259272 & 171437 & 152048 & 154347 & 12395 & 7717 & 6077 & 4528 \\ 
			& (1.000) & (1.000) & (0.999) & (1.000) & (1.000) & (1.000) & (1.000) & (1.000) \\    \hline
			$50$ & 248162 & 163213 & 144637 & 144653 & 12268 & 8118 & 6691 & 5468\\ 
			& (1.000) & (0.999) & (0.999) & (1.000) & (1.000) & (1.000) & (1.000) & (1.000)  \\   		
			\bottomrule    
		\end{tabular}
	}
	\caption{Estimated OBS and PCD (in parentheses) of ${\cal F}_B, {\cal IZR}$, and ${\cal IZE}$ with $n_0\in \{10, 20, \ldots, 50\}$ under the SM configuration with $k=99$ systems and $s=4$ constraints. We consider the CV variance configuration.}
	\label{tab:DiffInitSample}
\end{table}

As shown in Table~\ref{tab:DiffInitSample}, all three procedures benefit from larger values of $n_0$ when the problem is more difficult (i.e., $d_\ell = 0.02$), as indicated by reductions in the required OBS. When the problem becomes easier (i.e., $d_\ell = 0.5$), ${\cal F}_B$ achieves the smallest OBS with the largest $n_0$, while ${\cal IZR}$ performs best with $n_0 = 40$ for $\xi = 2$ and $n_0 = 30$ for $\xi = 3$. In contrast, ${\cal IZE}$ reaches optimal performance at a smaller $n_0 = 20$.

These findings can be explained by two key factors. First, when the variance is large relative to the difference between system means and the threshold (i.e., $d_\ell = 0.02$), accurate variance estimation is crucial, and thus a larger $n_0$ is beneficial. Conversely, when system means are well separated from the threshold (e.g., $d_\ell = 0.5$), so that the relative magnitude between variance and the mean-threshold difference smaller,  increasing $n_0$ further may be unnecessary.

Second, when comparing ${\cal IZR}$ and ${\cal IZE}$ under $d_\ell = 0.5$, we observe that ${\cal IZR}$ performs best with a larger $n_0$ (i.e., $n_0=40$ for $\xi = 2$ and $n_0=30$ for $\xi=3$), whereas ${\cal IZE}$ performs best with a smaller $n_0$ (i.e., $n_0=20$). On one hand, ${\cal IZE}$ benefits from more proper selection of $\epsilon_{i\ell}^{(1)}$ (which improves with a larger $n_0'$), which suggests that increasing $n_0$ can enhance its performance. On the other hand, an unnecessarily large $n_0'$ can be detrimental. This is because once $\epsilon_{i\ell}^{(1)}$ is reasonably selected, a larger $n_0'$ leads to more observations used for estimating $|y_{i\ell}-q_\ell|$ that need to be excluded for feasibility check when many systems, especially those with means further from the constraint threshold, may be able to conclude feasibility decisions with fewer observations than $n_0$. In such cases, the procedure is forced to take more samples than necessary.
A similar trade-off applies to ${\cal IZR}$. When a larger $\xi$ is used, it may enable early termination for clearly feasible or infeasible systems. If $n_0$ is set too high, it can again result in oversampling, thereby reducing efficiency. Finally, we see that ${\cal F}_B$ benefits from the largest $n_0$. As ${\cal F}_B$ only has one tolerance level, a larger $n_0$ tends to lead to a better estimation of the continuation region (due to a more accurate variance estimation), which further improves its performance.  

In the remainder of this paper, we focus on setting $n_0=20$ as it is a common choice in the literature (e.g., \cite{bk:constraint}, \citet{Zhou2022}, \citet{Zhou2024}) and all competing procedures perform reasonably well with $n_0=20$ under $d_\ell=0.5$ (which is a more practical setting than $d_\ell=0.02$). 

\subsection{Statistical Validity}
\label{sec:StatisticalValidity}

In this section, we demonstrate the statistical validity of our proposed procedures. 
We focus on the CM configuration, which is known to represent challenging cases. Since the CM configuration does not reflect typical practical scenarios, we provide a comprehensive discussion of the efficiency of our proposed procedures in more realistic settings in Sections~\ref{sec:Efficiency} and~\ref{sec:Inventory}.
Since the presence of acceptable systems does not impact statistical validity (as such systems are always considered correctly classified regardless of their true feasibility), we exclude scenarios with acceptable systems from this analysis. We present the experimental results when considering a single system with a single constraints and multiple systems with multiple constraints in Sections \ref{sec:Validity_SingleSysSingleConstr} and \ref{sec:Validity_MultiSysMultiConstr}, respectively. 

\subsubsection{Single System with Single Constraint}
\label{sec:Validity_SingleSysSingleConstr}

We first consider a scenario with a single system (i.e., $k = \underline{b} = \overline{b} = 1$) and a single constraint, (i.e., $s=m = 1$). We evaluate the performance of ${\cal F}_B$, ${\cal IZR}$, and ${\cal IZE}$ for $d_1 \in \{0.02, 0.5\}$ and assume CV with $\sigma^2 = 1$. Recall that the tolerance level is $\epsilon_1 = 0.02$. Thus, the case where $d_1 = 0.02$ corresponds to the slippage configuration.
Table \ref{tab:Validity_k1s1} shows the experimental results. We also include additional results when $d_1\in \{0.05, 0.1, 1\}$ in Appendix \ref{sec:ValidityAdditionalResults_SingleSys}. 
\begin{table}[!h]\small
	\centering
		\begin{tabular}{  c | c | cc|c}
			\toprule
			$d_1$ & ${\cal F}_B$ & \multicolumn{2}{c|}{${\cal IZR}$} & ${\cal IZE}$  \\
			& & $T=2, \xi=2$ & $T=2, \xi=3$ &  \\ \midrule 
			0.02 & 4130 & 4545 & 4911 & 5140 \\
			& (0.949) & (0.961) & (0.957) & (0.959) \\ \hline
			0.5 & 259 & 178 & 121 & 82 \\
			& (1.000) & (1.000) & (1.000) & (1.000) \\ 
			\bottomrule
		\end{tabular}
	\caption{Estimated OBS and PCD (in parentheses) of ${\cal F}_B$, ${\cal IZR}$, and ${\cal IZE}$ under the CM configuration with $k=s=1$ and $d_1\in \{0.02, 0.5\}$. }
	\label{tab:Validity_k1s1}
\end{table}

We observe that all three procedures maintain statistical validity\footnotemark\footnotetext{Although our empirical results under this setting do not show a PCD above 0.95 for ${\cal F}_B$, we believe this is due to simulation error. When increasing the number of macro-replications to 100,000 and 1,000,000, ${\cal F}_B$ achieves PCD values of 0.9493 and 0.9503, respectively.} across the tested scenarios. When $d_1 = 0.5$, all procedures achieve a PCD of 1, whereas the PCD is much lower when $d_1=0.02$. This is because the slippage configuration represents a much more difficult scenario when concluding feasibility decisions. Among the three procedures, ${\cal F}_B$ achieves the lowest PCD when $d_1 = 0.02$, while ${\cal IZR}$ and ${\cal IZE}$ achieve comparably higher PCD values. 

Moreover, both ${\cal IZR}$ and ${\cal IZE}$ require more observations to conclude feasibility decisions compared to ${\cal F}_B$ when $d_1=0.02$. This is expected, as the implementation parameter $\eta_B$ for ${\cal F}_B$ is derived specifically for the slippage configuration, while ${\cal IZR}$ and ${\cal IZE}$ are designed to employ multiple tolerance levels to address the typical situation where not all performance measures are one tolerance level away from a constraint threshold. When $k=s=1$, the values of $\eta$ and $\eta_E$ in ${\cal IZR}$ and ${\cal IZE}$ are larger than the value of $\eta_B$ in ${\cal F}_B$, leading to larger OBS (in fact, this is the only case in all our numerical results where ${\cal F}_B$ outperforms either ${\cal IZR}$ or ${\cal IZE}$). Additionally, ${\cal IZE}$ tends to underperform ${\cal IZR}$ when $d_1 = 0.02$. This is primarily because ${\cal IZE}$ often uses a larger $\epsilon_{11}^{(1)}$ than ${\cal IZR}$, making it less likely to conclude feasibility decisions using the subroutines ${\cal F_U}$ and ${\cal F_D}$ with tolerance level $\epsilon_{11}^{(1)}$. 
However, when $d_1 = 0.5$, both ${\cal IZR}$ and ${\cal IZE}$ significantly outperform ${\cal F}_B$ in terms of efficiency. Specifically, ${\cal IZR}$ with $\xi=2$ or 3 requires only 68.7\% or 46.7\% of the observations used by ${\cal F}_B$, while ${\cal IZE}$ requires just 31.7\%.

\subsubsection{Multiple Systems with Multiple Constraints}
\label{sec:Validity_MultiSysMultiConstr}

We next consider the case with multiple systems ($k = 99$) and multiple constraints ($s = 4$). We evaluate performance for $d_\ell \in \{0.02, 0.5\}$, where $\ell=1,\ldots,s$, and consider all five variance configurations CV, IV-C, DV-C, IV-S, and DV-S. We also evaluate four different settings for $(\underline{b}, \overline{b}) \in \{(55, 77), (33, 66), (22, 44), (22, 77)\}$, capturing varying distributions of system feasibility:
\begin{itemize}
	\item $(\underline{b}, \overline{b}) = (55, 77)$: The majority of systems have all constraints feasible, with the remaining systems evenly split between fully infeasible and partially feasible (i.e., half the constraints feasible, half infeasible).
	\item $(\underline{b}, \overline{b}) = (33, 66)$: The proportions of systems that are fully feasible, fully infeasible, and partially feasible are balanced.
	\item $(\underline{b}, \overline{b}) = (22, 44)$: Most systems have all constraints infeasible, with the remaining systems split evenly between fully feasible and partially feasible.
	\item $(\underline{b}, \overline{b}) = (22, 77)$: The majority of systems are partially feasible, with the remaining systems split evenly between fully feasible and fully infeasible.
\end{itemize}

Table \ref{tab:Validity_k99s4} presents the experimental results, where we focus on setting $(\underline{b}, \overline{b})=(33, 66)$. Additional results for $d_\ell\in \{0.01, 0.1, 1\}$ and $(\underline{b}, \overline{b})\in \{(55, 77), (22, 44), (22, 77)\}$ are provided in Appendix \ref{sec:ValidityAdditionalResults_MultiSys}. 

\begin{table}[!h]\small
	\centering
		\begin{tabular}{  c||c | c | cc|c}
			\toprule
			& $d_\ell$ & ${\cal F}_B$ & \multicolumn{2}{c|}{${\cal IZR}$} & ${\cal IZE}$  \\
			& & & $T=2, \xi=2$ & $T=2, \xi=3$ &  \\ \midrule 
			\multirow{4}{*}{CV} & 0.02 & 2063068 & 1939161 & 2251000 & 2240086 \\
			& & (0.984) & (0.984) & (0.983) & (0.984) \\ \cline{2-6}
			& 0.5 & 122310 & 70524 & 47981 & 27408 \\
			& & (1.000) & (1.000) & (1.000) & (1.000) \\   \hline 
			\multirow{4}{*}{IV-C} & 0.02 & 2323070 & 2201370 & 2558362  & 2549204   \\
			& & (0.985) & (0.984) & (0.985) & (0.986) \\ \cline{2-6}
			& 0.5 & 136933 & 78920 & 53774 & 30302 \\
			& & (1.000) & (1.000) & (1.000) & (1.000) \\   \hline 
			\multirow{4}{*}{DV-C} & 0.02 & 1835631 & 1783829 & 2027587 & 2014044  \\
			& & (0.983) & (0.984) & (0.984) & (0.983) \\ \cline{2-6}
			& 0.5 & 106737 & 61605 & 41998 & 25676 \\
			& & (1.000) & (1.000) & (1.000) & (1.000) \\   \hline 
			\multirow{4}{*}{IV-S} & 0.02 & 1680646 & 1460462 & 1792751  & 1790352 \\
			& & (0.984) & (0.985) & (0.985) & (0.985) \\ \cline{2-6}
			& 0.5 & 103444 & 59514 & 40428 & 16504 \\
			& & (1.000) & (1.000) & (1.000) & (1.000)  \\   \hline 
			\multirow{4}{*}{DV-S} & 0.02 & 2446196 & 2419192 & 2708883 & 2708519 \\
			& & (0.985) & (0.985) & (0.982) & (0.985) \\ \cline{2-6}
			& 0.5 & 141151 & 81522 & 55585 & 40731 \\
			& & (1.000) & (1.000) & (1.000) & (1.000) \\   
			\bottomrule
		\end{tabular}
	\caption{Estimated OBS and PCD (in parentheses) of ${\cal F}_B, {\cal IZR}$, and ${\cal IZE}$ under the CM configuration with $k=99$ and $s=4$. }
	\label{tab:Validity_k99s4}
\end{table}

We observe that all three procedures provide statistical validity across all variance configurations. Consistent with Section~\ref{sec:Validity_SingleSysSingleConstr}, all procedures achieve a PCD of 1 when $d_1 = 0.5$, regardless of the variance setting, but exhibit a lower PCD when $d_1 = 0.02$. Although $d_1 = 0.02$, which corresponds to the slippage configuration, is considered the most difficult case, the PCD values in this setting are higher than those observed in the scenario with a single system and a single constraint (Section~\ref{sec:Validity_SingleSysSingleConstr}). This is due to the conservative setting of the parameter $\eta$ (or $\eta_E$) when multiple dependent systems or constraints or tolerance levels are considered. 
Moreover, the PCDs of the three procedures are comparable for each value of $d_\ell$ across all variance configurations.

We observe a similar pattern in terms of OBS as in Section~\ref{sec:Validity_SingleSysSingleConstr}. Specifically, ${\cal IZE}$ tends to require more observations than ${\cal F}_B$ when $d_1 = 0.02$. However, unlike in the single system case, ${\cal IZR}$ now requires fewer observations than ${\cal F}_B$ when $\xi = 2$, although it still requires more observations when $\xi = 3$. This is likely because some systems benefit from concluding feasibility decisions using a moderately larger tolerance level (i.e., $\epsilon_\ell^{(1)} = 2\epsilon_\ell$) rather than the fixed $\epsilon_\ell$ used in ${\cal F}_B$. In contrast, when $\xi = 3$, the tolerance level $\epsilon_\ell^{(1)}$ increases to $\epsilon_\ell^{(1)} = 3\epsilon_\ell$, but the feasibility decisions are less likely to terminate at this stage.
Consistent with the findings in Section~\ref{sec:Validity_SingleSysSingleConstr}, both ${\cal IZR}$ and ${\cal IZE}$ demonstrate significant improvements in efficiency over ${\cal F}_B$ in terms of OBS when $d_\ell=0.5$, where ${\cal IZE}$ constantly performs the best among all cases. 

\subsection{Efficiency}
\label{sec:Efficiency}

In this section, we demonstrate the efficiency of our proposed procedures. 
We adopt the same experimental configuration as in Section~\ref{sec:Validity_MultiSysMultiConstr}, with two key modifications. First, we focus on the SM configuration to better represent a practical setting.
Second, we consider a broader range of values for the mean parameter $d_\ell \in \{0.005, 0.01, 0.02, 0.05, 0.1, 0.5, 1, 2\}$.

Note that when $d_\ell = 0.005$ and $d_\ell = 0.01$, there are 9 and 3 acceptable systems, respectively, under all combinations of $(\underline{b}, \overline{b})$ considered. We focus on the CV variance configuration, as the results under the other four variance configurations exhibit similar patterns. The experimental results are summarized in Table~\ref{tab:Efficiency_k99s4_CV}. The experimental results based on the IV and DV variance configurations are included in Appendix \ref{sec:EfficiencyAdditional}. 

\begin{table}[h!]
	\centering
			\resizebox{\textwidth}{!}{ 
			\begin{tabular}{ c|c | c | cc|c || c|c | c | cc|c}
				\toprule
				$(\underline{b}, \overline{b})$ & $d_1$ & ${\cal F}_B$ & \multicolumn{2}{c|}{${\cal IZR}$} & ${\cal IZE}$ & $(\underline{b}, \overline{b})$ & $d_1$ & ${\cal F}_B$ & \multicolumn{2}{c|}{${\cal IZR}$} & ${\cal IZE}$  \\
				& & & $T=2, \xi=2$ & $T=2, \xi=3$ & & & & & $T=2, \xi=2$ & $T=2, \xi=3$ \\ \midrule  
				\multirow{16}{*}{$(55,77)$}
				& 0.005 & 995054 & 800026 & 795506 & 944899 & \multirow{16}{*}{$(33,66)$}
				& 0.005 & 981205 & 792406 & 790514 & 923142 \\
				& & (1.000) & (1.000) & (0.999) & (0.999)
				& & & (1.000) & (0.999) & (1.000) & (0.999) \\ \cline{2-6} \cline{8-12}
				& 0.01 & 591875 & 439363 & 413486 & 484775 & & 0.01 & 585965 & 435605 & 411886 & 479669 \\
				& & (1.000) & (1.000) & (1.000) & (1.000) & & & (0.999) & (1.000) & (1.000) & (1.000) \\ \cline{2-6}  \cline{8-12}
				& 0.02 & 331576 & 224920 & 200192 & 222075 & & 0.02 & 328495 & 223272 & 199134 & 221805 \\
				& & (0.999) & (1.000) & (0.999) & (1.000) & & & (0.999) & (1.000) & (0.999) & (1.000) \\ \cline{2-6}  \cline{8-12}
				& 0.05 & 144873 & 87814 & 66766 & 67669
                & & 0.05 & 143732 & 87176 & 66451 & 67684 \\
				& & (1.000) & (1.000) & (1.000) & (1.000) & & & (1.000) & (1.000) & (1.000) & (1.000) \\ \cline{2-6}  \cline{8-12}
				& 0.1 & 74913 & 43939 & 30570 & 24502
                & & 0.1 & 74394 & 43665 & 30340 & 24450 \\
				& & (1.000) & (1.000) & (1.000) & (1.000) & & & (1.000) & (1.000) & (1.000) & (1.000) \\ \cline{2-6} \cline{8-12}
				& 0.5 & 15495 & 8855 & 6092 & 2922 
                & & 0.5 & 15378 & 8823 & 6092 & 2917 \\
				& & (1.000) & (1.000) & (1.000) & (1.000) & & & (1.000) & (1.000) & (1.000) & (1.000) \\ \cline{2-6}  \cline{8-12}
				& 1 & 7846 & 4737 & 3548 & 2133 & & 1 & 7819 & 4733 & 3548 & 2133 \\
				& & (1.000) & (1.000) & (1.000) & (1.000)
                & & & (1.000) & (1.000) & (1.000) & (1.000) \\ \cline{2-6} \cline{8-12}
				& 2 & 4298 & 3000 & 2529 & 2001 & & 2 & 4297 & 3002 & 2528 & 2001 \\
				& & (1.000) & (1.000) & (1.000) & (1.000) & & & (1.000) & (1.000) & (1.000) & (1.000) \\  \hline
				\multirow{16}{*}{$(22,44)$}
				& 0.005 & 917369 & 754091 & 761425 & 862550 & \multirow{16}{*}{$(22,77)$}
				& 0.005 & 932379 & 762529 & 767152 & 875013 \\
				& & (0.999) & (1.000) & (0.999) & (0.999) & & & (1.000) & (0.999) & (0.999) & (0.999) \\  \cline{2-6} \cline{8-12}
				& 0.01 & 552557 & 416330 & 397735 & 458979 & & 0.01 & 560552 & 420375 & 401032 & 462774 \\
				& & (1.000) & (1.000) & (0.999) & (0.999) & & & (1.000) & (0.999) & (0.999) & (0.999) \\ \cline{2-6}  \cline{8-12}
				& 0.02 & 311766 & 213530 & 192602 & 217501 & & 0.02 & 315792 & 215805 & 194141 & 217444 \\
				& & (0.999) & (1.000) & (0.999) & (0.999) 
                & & & (1.000) & (0.999) & (1.000) & (0.999) \\ \cline{2-6}  \cline{8-12}
				& 0.05 & 137119 & 83227 & 63811 & 67075 &  & 0.05 & 138460 & 84140 & 64483 & 67285 \\
				& & (1.000) & (1.000) & (1.000) & (1.000) & & & (1.000) & (1.000) & (1.000) & (1.000) \\ \cline{2-6}  \cline{8-12}
				& 0.1 & 70983 & 41678 & 29070 & 24405 & & 0.1 & 71837 & 42167 & 29360 & 24397 \\
				& & (1.000) & (1.000) & (1.000) & (1.000) & & & (1.000) & (1.000) & (1.000) & (1.000) \\ \cline{2-6}  \cline{8-12}
				& 0.5 & 14742 & 8600 & 6023 & 2923 & & 0.5 & 14879 & 8629 & 6029 & 2916 \\
				& & (1.000) & (1.000) & (1.000) & (1.000) & & & (1.000) & (1.000) & (1.000) & (1.000) \\ \cline{2-6} \cline{8-12}
				& 1 & 7657 & 4717 & 3546 & 2134 & & 1 & 7678 & 4720 & 3548 & 2133 \\
				& & (1.000) & (1.000) & (1.000) & (1.000) & & & (1.000) & (1.000) & (1.000) & (1.000) \\ \cline{2-6}  \cline{8-12}
				& 2 & 4293 & 3002 & 2530 & 2001 & & 2 & 4295 & 3001 & 2529 & 2001  \\
				& & (1.000) & (1.000) & (1.000) & (1.000) & & & (1.000) & (1.000) & (1.000) & (1.000) \\ 
				\bottomrule
			\end{tabular}
		}
		\caption{Estimated OBS and PCD (in parentheses) of ${\cal F}_B, {\cal IZR}$, and ${\cal IZE}$ under the SM configuration with $k=99$ and $s=4$. The variance configuration is CV.}
		\label{tab:Efficiency_k99s4_CV}
	\end{table}

	We observe that both ${\cal IZR}$ and ${\cal IZE}$ consistently outperform ${\cal F}_B$ across all values of $d_\ell$ and under all combinations of $(\underline{b}, \overline{b})$. The patterns of observation savings for ${\cal IZR}$ and ${\cal IZE}$ are similar across the four $(\underline{b}, \overline{b})$ combinations. Figure~\ref{fig:OBSPercent} illustrates the percentage of observations required by each procedure, relative to ${\cal F}_B$, for all values of $d_\ell$ and each $(\underline{b}, \overline{b})$ combination considered.

    \begin{figure}[h!]
        \centering
        \begin{minipage}[b]{0.48\textwidth}
            \centering
            \subfigure[$(\underline{b}, \overline{b})=(55, 77)$]{
            \scalebox{0.3}{\includegraphics[width=10.5in]{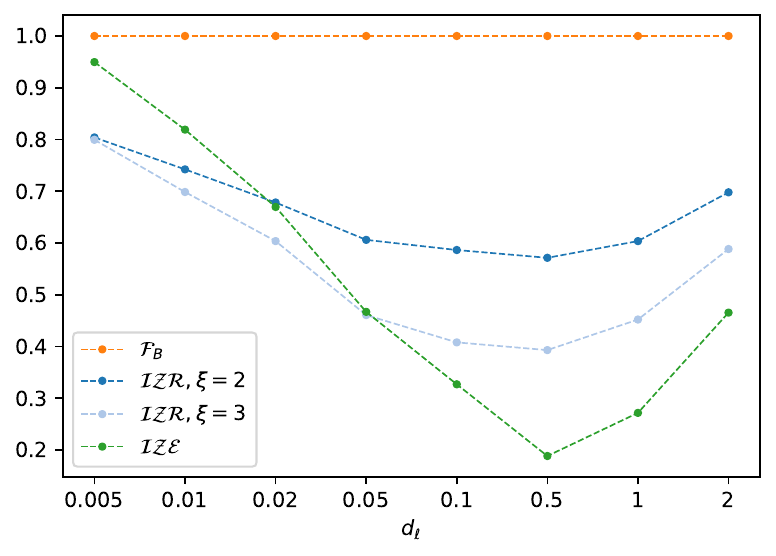}}}
        \end{minipage}
        \begin{minipage}[b]{0.48\textwidth}
            \centering
            \subfigure[$(\underline{b}, \overline{b})=(33, 66)$]{
            \scalebox{0.3}{\includegraphics[width=10.5in]{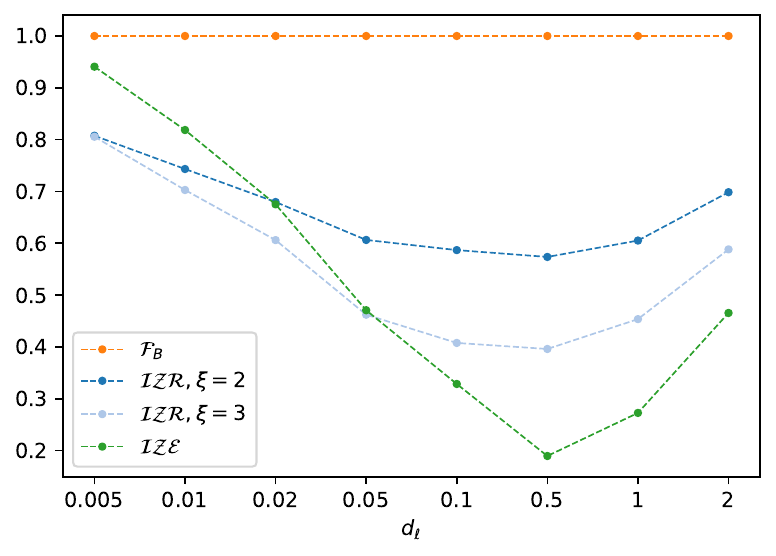}}}
        \end{minipage}
        \begin{minipage}[b]{0.48\textwidth}
            \centering
            \subfigure[$(\underline{b}, \overline{b})=(22, 44)$]{
            \scalebox{0.3}{\includegraphics[width=10.5in]{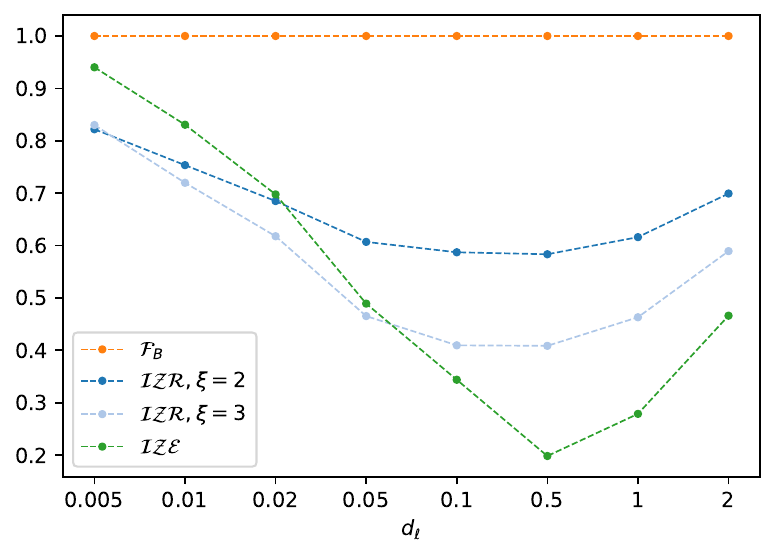}}}
        \end{minipage}
        \begin{minipage}[b]{0.48\textwidth}
            \centering
            \subfigure[$(\underline{b}, \overline{b})=(22, 77)$]{
            \scalebox{0.3}{\includegraphics[width=10.5in]{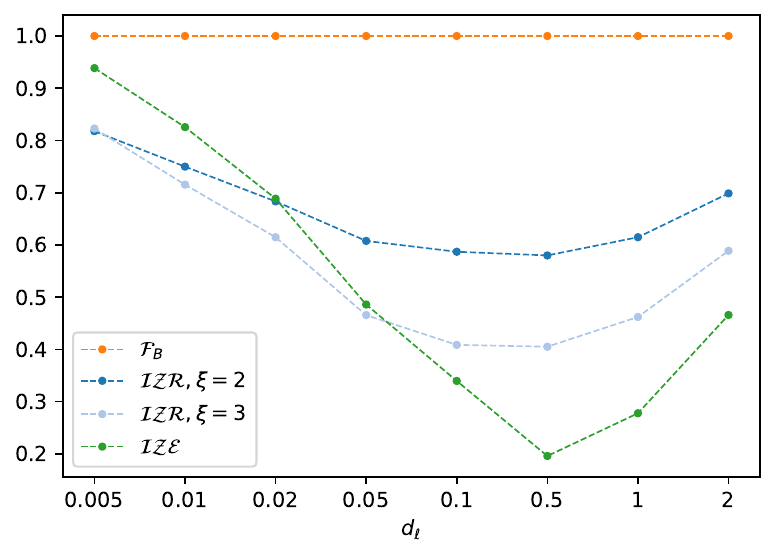}}}
        \end{minipage}
        \vspace{-6pt}
        \caption{Percentage of required OBS compared with ${\cal F}_B$ as a function of the value of $d_\ell$.} \label{fig:OBSPercent}
\end{figure}

	When the problem is difficult, i.e., when system means are close to the thresholds (such as $d_\ell = 0.005$), ${\cal IZE}$ requires fewer observations than ${\cal F}_B$ (approximately 95\%), while ${\cal IZR}$ achieves a more noticeable reduction, requiring approximately 80\% of the observations. 
	This limited savings from ${\cal IZE}$ is likely because the selected tolerance levels $\epsilon_\ell^{(1)}$ for ${\cal IZR}$ and the estimated $\epsilon_{i\ell}^{(1)}$ for ${\cal IZE}$ may not be sufficiently accurate to improve efficiency in such a difficult setting (in fact, $\epsilon_{i\ell}^{(1)}$ used in ${\cal IZE}$ may be even larger than for ${\cal IZR}$). It is also worth noting that although the savings from ${\cal IZR}$ and ${\cal IZE}$ are less pronounced in this setting compared to easier cases, this is expected. Neither procedure is specifically designed for scenarios where system means are densely concentrated around the threshold (as in the case of $d_\ell = 0.005$).
	
	As the problem becomes easier with increasing $d_\ell$, both ${\cal IZR}$ and ${\cal IZE}$ demonstrate substantial savings in efficiency. ${\cal IZR}$ requires only 40\%–60\% of the observations used by ${\cal F}_B$ when $d_\ell=0.5$, while ${\cal IZE}$ requires as little as 20\%. However, when the problem becomes very easy (e.g., $d_\ell \in \{1, 2\}$), the differences in required observations among ${\cal F}_B$, ${\cal IZR}$, and ${\cal IZE}$ diminish. This is expected, as in such settings, the majority of observations are used for estimating $\sigma_{i\ell}^2$ (and estimating $|y_{i\ell}-q_\ell|$ in the case of ${\cal IZE}$), rather than for performing feasibility decisions.

\subsection{Inventory Example}
\label{sec:Inventory}

In this section, we demonstrate the efficiency of our proposed procedures using an $({\sf s},{\sf S})$ inventory control example, where the decision-maker places an order to immediately raise the inventory level up to ${\sf S}$ whenever it falls below ${\sf s}$ at the beginning of a review period. If the inventory level is at least ${\sf s}$, no order is placed. This setting follows the example in \citet{LawKeltonbook}, and is also used in \citet{Zhou2022} and \citet{Zhou2024}.

The demand in each review period is assumed to follow a Poisson distribution with a mean of 25 units, and demands are independent across periods. The ordering cost consists of a fixed cost of 32 per order and a variable cost of 3 per unit ordered. Holding costs are 1 per unit per review period, and a penalty cost of 5 is incurred for each unit of unsatisfied demand.
We set the run length as 30 review periods\footnotemark\footnotetext{To ensure observations reflect steady-state behavior, we initialize the inventory level at ${\sf S}$ and discard the first 100 review periods as warm-up. The percentage of periods with a stockout and the average cost over the subsequent 30 review periods are used as single observations for the first and second performance measures, respectively. } and evaluate system performance using two measures:
\begin{enumerate}
	\item The probability that a stockout occurs in a review period ($\ell = 1$),
	\item The expected total cost per review period, including ordering, holding, and penalty costs ($\ell = 2$).
\end{enumerate}
The goal is to identify feasible $({\sf s},{\sf S})$ combinations such that the stockout probability does not exceed a threshold $q_1$ and the expected cost does not exceed $q_2$.

We consider 2901 independent systems as $\Theta=\{({\sf s},{\sf S})\mid 20\leq {\sf s}\leq 80, 40\leq {\sf S}\leq 100, {\sf s}\in \mathbb{Z}^+, {\sf S}\in \mathbb{Z}^+, \text{ and } {\sf s}\leq {\sf S}\}$. The analytical results for both performance measures are obtained using a steady-state analysis of a Markov chain model.
Figure \ref{fig:SysPerformance} shows the estimated performance of each system with respect to the two constraints. We also estimate the correlation between the two performance measures across all systems using simulation with 1,000,000 replications, finding the correlation ranges from -0.235 to 0.553. Note that this test case corresponds to multiple independent systems with correlated constraints. 
\begin{figure}[!t]
	\centering
	\includegraphics[width=.6\linewidth]{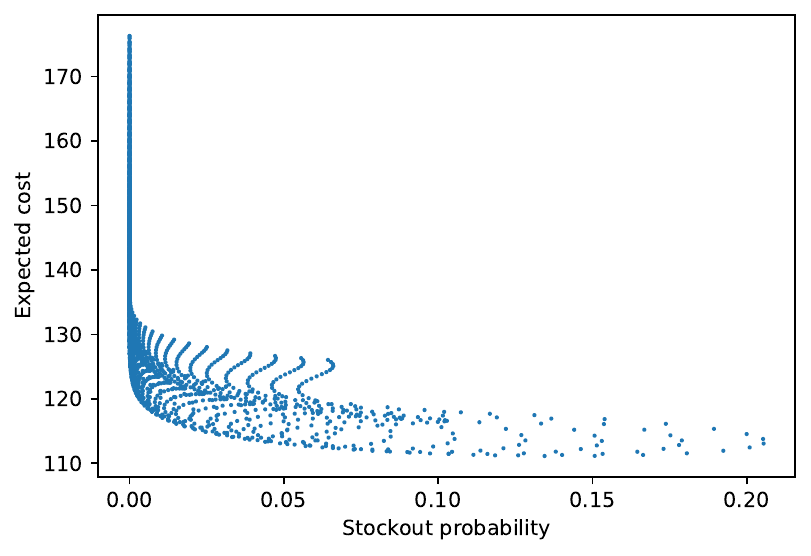}
	\caption{System performance with respect to the two constraints.}
	\label{fig:SysPerformance}
\end{figure}

We evaluate the performance of the procedures using three different sets of thresholds: $(q_1, q_2) \in \{(0.01, 120),\ (0.05, 125),\ (0.1, 130)\}$. Recall from Section~\ref{sec:ExpConfig} that systems with $i \leq \underline{b}$ satisfy both constraints, systems with $\underline{b} + 1 \leq i \leq \overline{b}$ satisfy only one constraint, and systems with $i \geq \overline{b} + 1$ violate both constraints. The threshold settings correspond to $(\underline{b}, \overline{b}) = (31, 2661),\ (526, 2892),\ \text{and } (1081, 2901)$, respectively.
Due to the nature of the two constraints, we set the tolerance levels as $\epsilon_1 = 0.001$ for the shortage probability constraint and $\epsilon_2 = 0.1$ for the expected cost constraint. The numbers of acceptable systems with respect to the three sets of threshold combinations are 2, 22, and 25, respectively.  Table \ref{tab:InventoryResults} shows the results where we set $n_0=20$. We also include additional results for $n_0\in \{30, 40, \ldots, 100\}$ in Appendix \ref{sec:InventoryAdditional}.  

\begin{table}[!tb]
	\centering
		{\footnotesize \begin{tabular}{ c || c c c c}
			\toprule
			& ${\cal F}_B$ & \multicolumn{2} {c} {${\cal IZR}$} & ${\cal IZE}$  \\   
			$(q_1, q_2)$ & & $T=2, \xi=2$ & $T=2, \xi=3$  \\ \midrule
			\multirow{ 2}{*}{(0.01, 120)} & 585540 & 397165 & 337587 & 288939  \\  
			& (0.991) & (0.990) & (0.990) & (0.990) \\  
			\hline
			\multirow{ 2}{*}{(0.05, 125)} & 1832392 & 1467543 & 1388945 & 1432279  \\  
			& (1.000) & (0.999) & (1.000) & (1.000) \\     
			\hline
			\multirow{ 2}{*}{(0.1, 130)} &  1157330 & 839989 & 766356 & 789751 \\  
			& (1.000) & (1.000) & (0.999) & (0.999) \\     
			\bottomrule
		\end{tabular}}
	\caption{Estimated OBS and PCD (in parentheses) of ${\cal F}_B, {\cal IZR}$, and ${\cal IZE}$ with respect to three sets of thresholds and $n_0=20$. For ${\cal IZE}$, we set $n_0'=n_0- 5$ and $\nu=0.8$. }
	\label{tab:InventoryResults}
\end{table}

First, we observe that all three procedures maintain statistical validity across all settings even though Assumption \ref{assump:normal} does not hold. In terms of efficiency, our proposed procedures, ${\cal IZR}$ and ${\cal IZE}$, consistently outperform the benchmark procedure ${\cal F}_B$ with respect to the required OBS, although the extent of improvement varies across the three threshold combinations (the percentage savings ranges within 20--51\%).

When $(q_1, q_2) = (0.01, 120)$, ${\cal IZE}$ achieves the best performance, requiring only 49\% of the observations used by ${\cal F}_B$. In contrast, ${\cal IZR}$ with $\xi = 2$ and $\xi = 3$ requires 68\% and 58\% of the observations, respectively. For $(q_1, q_2) = (0.05, 125)$, ${\cal IZR}$ with $\xi = 3$ performs best, requiring just 76\% of the observations required by ${\cal F}_B$, while ${\cal IZE}$ and ${\cal IZR}$ with $\xi = 2$ require 78\% and 80\% of the observations, respectively. Lastly, for $(q_1, q_2) = (0.1, 130)$, ${\cal IZE}$ uses 68\% of the observations used by ${\cal F}_B$, while ${\cal IZR}$ requires between 66\% and 73\% of the observations, depending on the choice of $\xi$.

The relative performance of ${\cal IZE}$ and ${\cal IZR}$ depends on the difference between the system means and the thresholds. Across all three threshold settings, using a larger $\xi$ (i.e., $\xi = 3$) in ${\cal IZR}$ leads to better performance, as it allows the procedure to eliminate clearly infeasible systems more quickly. This is consistent with the system performance distribution shown in Figure~\ref{fig:SysPerformance}, where many system means are far from the thresholds.

Comparing ${\cal IZE}$ and ${\cal IZR}$ with $\xi = 3$, we find that ${\cal IZE}$ outperforms ${\cal IZR}$ when $(q_1, q_2) = (0.01, 120)$, but requires slightly more observations (approximately 3\%) when $(q_1, q_2) = (0.05, 125)$ and $(0.1, 130)$. This performance difference can be attributed to two main factors: (i) the difficulty of ${\cal IZE}$ faces in accurately estimating $|y_{i\ell}-q_\ell|$ and selecting $\epsilon_{i\ell}^{(1)}$ with a limited number of initial samples, and (ii) the presence of many systems near the thresholds, which means that larger tolerance levels are not likely to be helpful when making feasibility decisions for these systems.
As a result, while ${\cal IZE}$ provides competitive performance, it does not consistently outperform ${\cal IZR}$. Nonetheless, ${\cal IZR}$ and ${\cal IZE}$ consistently and substantially outperform ${\cal F}_B$, and the performance difference between ${\cal IZE}$ and ${\cal IZR}$ remains relatively small, ranging from 14\% better to 3\% worse.

\subsection{Discussion}
\label{sec:Discussion}

Across all experiments conducted, we observe that ${\cal IZR}$ and ${\cal IZE}$ consistently outperform ${\cal F}_B$, except in the case where a single system with a single constraint is considered and its mean lies precisely at the acceptable boundary of the threshold. Overall, both ${\cal IZR}$ and ${\cal IZE}$ demonstrate substantial improvements in efficiency compared to ${\cal F}_B$.

When implementing ${\cal IZR}$, the choice of the parameter $\xi$ plays a critical role in the overall performance. If the decision-maker expects system means to be well spread out, implying a reasonable number of clearly feasible and clearly infeasible systems, a larger value of $\xi$ is recommended. This allows ``easy'' systems to conclude feasibility decisions with fewer observations.

Of course, a decision-maker may not know ahead of time how spread out the system means are. Overall, while ${\cal IZE}$ does not consistently outperform ${\cal IZR}$, it generally achieves at least comparable performance and it yields substantial savings in practical settings where many system means differ significantly from constraint thresholds. This indicates that ${\cal IZE}$ is more robust in terms of performance across varying problem instances.
Therefore, we recommend using ${\cal IZE}$ over ${\cal IZR}$, especially when the decision-maker lacks clear guidance for selecting the tuning parameter $\xi$ required by ${\cal IZR}$.

Regarding the implementation parameters of ${\cal IZE}$, the total number of initial observations $n_0$ and the proportion allocated to estimating $|y_{i\ell}-q_\ell|$ (i.e., $n_0'/n_0$) are both critical. If the decision-maker expects the problem to
be difficult (e.g., many systems are clustered near the thresholds), then allocating more samples to estimating $|y_{i\ell}-q_\ell|$ can improve performance. Conversely, if the problem is expected to be easier, then it is advisable to allocate fewer samples to estimating  $|y_{i\ell}-q_\ell|$, as doing so can avoid inefficiency caused by oversampling. In the absence of prior information about the system means, we recommend setting $n_0 = 20$ with $(n_0', n_0'') = (15, 5)$, as discussed in Section \ref{sec:ImplementationParams}.

\section{Conclusions}
\label{sec6}

In this paper, we address the problem of identifying feasible systems when performance is evaluated through stochastic simulation. We first propose ${\cal IZR}$, a fully sequential IZ procedure that incorporates multiple relaxed tolerance levels and uses two subroutines for each level. The ${\cal IZR}$ procedure enables us to reduce the number of required simulation observations, but the magnitude of the reduction varies depending on the selection of the number and values of the relaxed tolerance levels. We therefore suggest another procedure, the ${\cal IZE}$ procedure, which uses only two tolerance levels: one matches the original tolerance level and the other is estimated for each system and constraint using preliminary data on the system's performance measure. We prove the statistical validity of both procedures and demonstrate through experiments that they significantly reduce the number of observations required compared to an existing statistically valid procedure. 

In future research, we plan to explore potential improvements to the ${\cal IZE}$ procedure. Currently, ${\cal IZE}$ estimates a relaxed tolerance level for each system and constraint using preliminary data. To achieve computational efficiency, obtaining extensive preliminary data is often undesirable, resulting in inherently noisy estimates. To address this, instead of relying on tolerance levels that vary for each system and constraint, we aim to aggregate preliminary data across different systems and/or constraints. This approach would allow us to estimate fewer relaxed tolerance levels based on more data, thereby reducing the noise in our estimates.




\bibliographystyle{nonumber}

\clearpage
\newpage

\appendix
\section*{Appendices}
\section{Lemma for Section~3.3}

\begin{lemma} \label{lemma:IZR_Additional}
Suppose $h^2$, $S_{i \ell}^2$, $\epsilon_\ell$, and $\epsilon^{(\tau)}_\ell$ for $\tau = 1, 2,\ldots,T_\ell$ are positive real values and $\epsilon^{(\tau-1)}_\ell > \epsilon^{(\tau)}_\ell$ for any $\tau = 2,3,\ldots,T_\ell$. From (\ref{eq:cts1}), the shifted continuation region of ${\cal F_U}(\epsilon^{(\tau-1)}_\ell)$ is always included in that of ${\cal F_U}(\epsilon^{(\tau)}_\ell)$ for any $\tau = 2,3,\ldots,T_\ell$. Similarly, from  (\ref{eq:cts2}), the shifted continuation region of ${\cal F_D}(\epsilon^{(\tau-1)}_\ell)$ is always included in that of ${\cal F_D}(\epsilon^{(\tau)}_\ell)$ for any $\tau = 2,3,\ldots,T_\ell$.
\end{lemma}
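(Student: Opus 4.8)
The plan is to fix a stage index $r$ and reduce the claimed set inclusion to two scalar inequalities between the affine boundary functions, then dispatch those inequalities by elementary algebra. Write $a := h^2 S_{i\ell}^2$ and, for a tolerance level $\epsilon>0$, read off from (\ref{eq:cts1}) that the lower and upper boundaries of ${\cal F_U}(\epsilon)$ are the affine functions $L_{\cal U}(r;\epsilon) = -\tfrac{a}{2\epsilon} + (\epsilon_\ell - \tfrac{\epsilon}{2})r$ and $U_{\cal U}(r;\epsilon) = \tfrac{a}{2\epsilon} + (\epsilon_\ell - \tfrac{3\epsilon}{2})r$, the continuation region being the interval $[L_{\cal U}(r;\epsilon), U_{\cal U}(r;\epsilon)]$ for $n_0 \le r \le a/\epsilon^2$ and a single point (hence trivially contained in anything) for larger $r$. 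Setting $\epsilon' := \epsilon^{(\tau-1)}_\ell$ and $\epsilon := \epsilon^{(\tau)}_\ell$, so that $\epsilon' > \epsilon$, it then suffices to show $L_{\cal U}(r;\epsilon) \le L_{\cal U}(r;\epsilon')$ and $U_{\cal U}(r;\epsilon') \le U_{\cal U}(r;\epsilon)$ for every $r$ in the range $n_0 \le r \le a/(\epsilon')^2$ on which the inner region (that of ${\cal F_U}(\epsilon')$) is a genuine interval.

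Carrying out the two subtractions and collecting terms gives $L_{\cal U}(r;\epsilon) - L_{\cal U}(r;\epsilon') = \tfrac{\epsilon'-\epsilon}{2}(r - \tfrac{a}{\epsilon\epsilon'})$ and $U_{\cal U}(r;\epsilon) - U_{\cal U}(r;\epsilon') = \tfrac{\epsilon'-\epsilon}{2}(\tfrac{a}{\epsilon\epsilon'} + 3r)$. Since $\epsilon' > \epsilon$, the second quantity is strictly positive, giving $U_{\cal U}(r;\epsilon') < U_{\cal U}(r;\epsilon)$. For the first, the key step is that $\epsilon' > \epsilon$ forces $(\epsilon')^2 > \epsilon\epsilon'$, hence $a/(\epsilon')^2 < a/(\epsilon\epsilon')$; since the inequality is only needed for $r \le a/(\epsilon')^2$, this yields $r < a/(\epsilon\epsilon')$ and therefore $L_{\cal U}(r;\epsilon) - L_{\cal U}(r;\epsilon') < 0$. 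Both required inequalities hold, so the ${\cal F_U}$ inclusion follows for every $r$, i.e., for the whole region.

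The ${\cal F_D}$ case is identical, using the boundaries from (\ref{eq:cts2}), $L_{\cal D}(r;\epsilon) = -\tfrac{a}{2\epsilon} + (\tfrac{3\epsilon}{2} - \epsilon_\ell)r$ and $U_{\cal D}(r;\epsilon) = \tfrac{a}{2\epsilon} + (\tfrac{\epsilon}{2} - \epsilon_\ell)r$: the subtractions give $L_{\cal D}(r;\epsilon) - L_{\cal D}(r;\epsilon') = -\tfrac{\epsilon'-\epsilon}{2}(\tfrac{a}{\epsilon\epsilon'} + 3r) \le 0$ and $U_{\cal D}(r;\epsilon) - U_{\cal D}(r;\epsilon') = \tfrac{\epsilon'-\epsilon}{2}(\tfrac{a}{\epsilon\epsilon'} - r) \ge 0$ on $n_0 \le r \le a/(\epsilon')^2$, again invoking $r \le a/(\epsilon')^2 < a/(\epsilon\epsilon')$. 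Since nothing in the argument used $\tau$ beyond $\epsilon^{(\tau-1)}_\ell > \epsilon^{(\tau)}_\ell$, it applies for all $\tau = 2,3,\ldots,T_\ell$, completing the proof. There is no genuine obstacle here; the only points requiring care are (i) recognising that the relevant stage range is that of the inner (larger-$\epsilon$) region, which is exactly what makes the bound $r \le a/(\epsilon\epsilon')$ available for free, and (ii) disposing of the degenerate stages $r > a/(\epsilon')^2$, where the inner slice collapses to a point and the inclusion is immediate.
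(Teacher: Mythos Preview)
Your proof is correct and follows essentially the same approach as the paper: both arguments reduce the inclusion to comparing the affine upper and lower boundaries of the two shifted continuation regions on the stage range $n_0 \le r \le a/(\epsilon')^2$ of the inner region. The only cosmetic difference is that the paper verifies the boundary inequalities by evaluating at the two endpoints $r=0$ and $r=a/(\epsilon')^2$ and invoking linearity, whereas you compute and factor the differences $L(r;\epsilon)-L(r;\epsilon')$ and $U(r;\epsilon)-U(r;\epsilon')$ for general $r$ directly; your factorizations are slightly cleaner and also make explicit the symmetry between the ${\cal F_U}$ and ${\cal F_D}$ cases that the paper handles by appeal to symmetry.
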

\begin{proof}
By solving $-\frac{h^2 S_{i \ell}^2}{2\epsilon^{(\tau-1)}_\ell} + \frac{\epsilon^{(\tau-1)}_\ell}{2}r_i+ \epsilon_\ell r_i- \epsilon^{(\tau-1)}_\ell r_i= \frac{h^2 S_{i \ell}^2}{2\epsilon^{(\tau-1)}_\ell} - \frac{\epsilon^{(\tau-1)}_\ell}{2}r_i+ \epsilon_\ell r_i- \epsilon^{(\tau-1)}_\ell r_i$ for $r_i$, the upper and lower bounds of the shifted continuation region of ${\cal F_U}(\epsilon^{(\tau-1)}_\ell)$ cross at $r_i = \frac{h^2 S^2_{i\ell}}{(\epsilon^{(\tau-1)}_\ell)^2}$. The upper bounds of both ${\cal F_U}(\epsilon^{(\tau-1)}_\ell)$ and ${\cal F_U}(\epsilon^{(\tau)}_\ell)$ are lines whose $y$-intercepts equal $\frac{h^2 S_{i \ell}^2}{2\epsilon^{(\tau-1)}_\ell}$ and $\frac{h^2 S_{i \ell}^2}{2\epsilon^{(\tau)}_\ell}$, respectively. Then, $\frac{h^2 S_{i \ell}^2}{2\epsilon^{(\tau)}_\ell}>\frac{h^2 S_{i \ell}^2}{2\epsilon^{(\tau-1)}_\ell}$ (at $r_i = 0$) and the difference between the two upper bounds at $r_i = \frac{h^2 S^2_{i\ell}}{(\epsilon^{(\tau-1)}_\ell)^2}$ is  
\begin{eqnarray*}
    \lefteqn{\left\{\frac{h^2 S_{i \ell}^2}{2\epsilon^{(\tau)}_\ell} + \frac{h^2 S^2_{i\ell}}{(\epsilon^{(\tau-1)}_\ell)^2}\left(-\frac{\epsilon^{(\tau)}_\ell}{2}+ \epsilon_\ell - \epsilon^{(\tau)}_\ell \right) \right\} - \left\{\frac{h^2 S_{i \ell}^2}{2\epsilon^{(\tau-1)}_\ell} + \frac{h^2 S^2_{i\ell}}{(\epsilon^{(\tau-1)}_\ell)^2}\left(-\frac{\epsilon^{(\tau-1)}_\ell}{2}+ \epsilon_\ell - \epsilon^{(\tau-1)}_\ell \right) \right\}}\\
    &=&h^2 S_{i \ell}^2 \left\{\frac{1}{2\epsilon^{(\tau)}_\ell} + \frac{1}{(\epsilon^{(\tau-1)}_\ell)^2}\left(-\frac{3\epsilon^{(\tau)}_\ell}{2} \right) -\frac{1}{2\epsilon^{(\tau-1)}_\ell} - \frac{1}{(\epsilon^{(\tau-1)}_\ell)^2}\left(-\frac{3\epsilon^{(\tau-1)}_\ell}{2}\right) \right\}\\
    &=&h^2 S_{i \ell}^2 \left(\epsilon^{(\tau-1)}_\ell-\epsilon^{(\tau)}_\ell\right)\left\{\frac{1}{2\epsilon^{(\tau-1)}_\ell\epsilon^{(\tau)}_\ell} + \frac{3}{2(\epsilon^{(\tau-1)}_\ell)^2}\right\}>0.
\end{eqnarray*}
Thus, the upper bound of ${\cal F_U}(\epsilon^{(\tau)}_\ell)$ is always greater than that of ${\cal F_U}(\epsilon^{(\tau-1)}_\ell)$ at any $r_i$ such that $n_0 \leq r_i \leq \frac{h^2 S^2_{i\ell}}{(\epsilon^{(\tau-1)}_\ell)^2}$. Also, both lower bounds of ${\cal F_U}(\epsilon^{(\tau-1)}_\ell)$ and ${\cal F_U}(\epsilon^{(\tau)}_\ell)$ are lines with their $y$-intercepts, $-\frac{h^2 S_{i \ell}^2}{2\epsilon^{(\tau-1)}_\ell}$ and $-\frac{h^2 S_{i \ell}^2}{2\epsilon^{(\tau)}_\ell}$, respectively. Then, $-\frac{h^2 S_{i \ell}^2}{2\epsilon^{(\tau)}_\ell}<-\frac{h^2 S_{i \ell}^2}{2\epsilon^{(\tau-1)}_\ell}$ (at $r_i = 0$) and the difference between two lower bounds at $r_i = \frac{h^2 S^2_{i\ell}}{(\epsilon^{(\tau-1)}_\ell)^2}$ is  
\begin{eqnarray*}
    \lefteqn{\left\{-\frac{h^2 S_{i \ell}^2}{2\epsilon^{(\tau)}_\ell} + \frac{h^2 S^2_{i\ell}}{(\epsilon^{(\tau-1)}_\ell)^2}\left(\frac{\epsilon^{(\tau)}_\ell}{2}+ \epsilon_\ell - \epsilon^{(\tau)}_\ell \right) \right\} - \left\{-\frac{h^2 S_{i \ell}^2}{2\epsilon^{(\tau-1)}_\ell} + \frac{h^2 S^2_{i\ell}}{(\epsilon^{(\tau-1)}_\ell)^2}\left(\frac{\epsilon^{(\tau-1)}_\ell}{2}+ \epsilon_\ell - \epsilon^{(\tau-1)}_\ell \right) \right\}}\\
    &=&h^2 S_{i \ell}^2 \left\{-\frac{1}{2\epsilon^{(\tau)}_\ell} + \frac{1}{(\epsilon^{(\tau-1)}_\ell)^2}\left(-\frac{\epsilon^{(\tau)}_\ell}{2} \right) +\frac{1}{2\epsilon^{(\tau-1)}_\ell} - \frac{1}{(\epsilon^{(\tau-1)}_\ell)^2}\left(-\frac{\epsilon^{(\tau-1)}_\ell}{2}\right) \right\}\\
    &=&h^2 S_{i \ell}^2 \frac{-(\epsilon^{(\tau-1)}_\ell)^2-(\epsilon^{(\tau)}_\ell)^2 + 2 \epsilon^{(\tau)}_\ell \epsilon^{(\tau-1)}_\ell}{2\epsilon^{(\tau)}_\ell (\epsilon^{(\tau-1)}_\ell)^2}\\
    &=& - h^2 S_{i \ell}^2 \frac{\left(\epsilon^{(\tau-1)}_\ell-\epsilon^{(\tau)}_\ell\right)^2 }{2\epsilon^{(\tau)}_\ell (\epsilon^{(\tau-1)}_\ell)^2}<0.
\end{eqnarray*}
Thus, the lower bound of ${\cal F_U}(\epsilon^{(\tau)}_\ell)$ is always less than that of ${\cal F_U}(\epsilon^{(\tau-1)}_\ell)$ at any $r_i$ such that $n_0 \leq r_i \leq \frac{h^2 S^2_{i\ell}}{(\epsilon^{(\tau-1)}_\ell)^2}$. As a result, the shifted continuation region of ${\cal F_U}(\epsilon^{(\tau-1)}_\ell)$ is always included in that of ${\cal F_U}(\epsilon^{(\tau)}_\ell)$ for any $\tau = 2,3,\ldots,T_\ell$. Since (\ref{eq:cts1}) and (\ref{eq:cts2}) are symmetric, one can directly show that the shifted continuation region of ${\cal F_D}(\epsilon^{(\tau-1)}_\ell)$ is always included in that of ${\cal F_D}(\epsilon^{(\tau)}_\ell)$ for any $\tau = 2,3,\ldots,T_\ell$. \end{proof}

\section{Additional Results for Section \ref{sec:ImplementationParams}}
\label{sec:ImplementationParamAdditional}

In this section, we present additional experimental results related to the implementation parameters discussed in Section~\ref{sec:ImplementationParams}. 
Specifically, Section \ref{sec:IZE_DiffComb_d_n0_Additional} presents additional results of ${\cal IZE}$ for $d_\ell=0.01$ under different values of $\nu, n_0'$, and $n_0''$ and different variance configurations. Section \ref{sec:IZE_DiffVarConfig_Additional} includes additional discussion regarding the performance of ${\cal IZE}$ under different variance configurations. Finally, Section \ref{sec:FB_IZR_Diff_n0_Additional} shows additional results of Procedures ${\cal F}_B, {\cal IZR}$, and ${\cal IZE}$ with a wider range of $n_0$ values. 

\subsection{Additional Results of ${\cal IZE}$ for $d_\ell=0.01$}
\label{sec:IZE_DiffComb_d_n0_Additional}

Table~\ref{tab:DiffCombInitSample_IZRE_d0.01} summarizes the results for ${\cal IZE}$ under the setting $d_\ell = 0.01$, with $(n_0', n_0'') \in \{(5, 15), (10, 10), (15, 5), (20, 0)\}$ and $\nu \in \{0.6, 0.8, 1\}$, across all five variance configurations CV, IV-C, DV-C, IV-S, and DV-S.

\begin{table}[h!]
	\centering
		\begin{tabular}{ c || c | c cccc}
			\toprule
			& & \multicolumn{5}{c}{$d_\ell=0.01$}  \\  
			$\nu$ & $(n_0', n_0'')$ & CV & IV-C & DV-C & IV-S & DV-S  \\  \midrule
			\multirow{ 8}{*}{0.6} 
			& $(5, 15)$ & 507381 & 582873 & 467675 & 384759 & 643188 \\ 
			& & (0.999) & (0.999) & (0.999) & (0.999) & (0.999) \\   \cline{2 - 7}
			& $(10, 10)$ & 487178 & 559160 & 449413 & 369911 & 614107 \\  
			& & (0.999) & (1.000) & (0.999) & (1.000) & (0.999) \\   \cline{2 - 7}
			& $(15, 5)$ & 478735 & 547702 & 440987 & 363909 & 600270 \\ 
			& & (1.000) & (0.999) & (0.999) & (0.999) & (0.999) \\   \cline{2 - 7}
			& $(20,0)$ & 459273 & 524638 & 423052 & 351245 & 573748 \\ 
			& & (0.999) & (1.000) & (1.000) & (0.999) & (0.999)
			\\  \midrule
			\multirow{ 8}{*}{0.8} 
			& $(5, 15)$ & 522016 & 601879 & 481081 & 395005 & 662881 \\ 
			& & (0.999) & (0.999) & (0.999) & (0.999) & (0.999) \\  \cline{2 - 7}
			& $(10, 10)$ & 494197 & 569412 & 456553 & 372855 & 629393 \\  
			& & (0.999) & (1.000) & (1.000) & (1.000) & (1.000) \\   \cline{2 - 7}
			& $(15, 5)$ & 479669 & 551741 & 442222 & 362422 & 609485 \\ 
			& & (1.000) & (0.999) & (0.999) & (0.999) & (0.999) \\   \cline{2 - 7}
			& $(20,0)$ & 455594 & 524039 & 419502 & 345723 & 576925 \\ 
			& & (0.999) & (1.000) & (0.999) & (1.000) & (1.000)
			\\  \midrule
			\multirow{ 8}{*}{1} 
			& $(5, 15)$ & 540194 & 622610 & 496282 & 409477 & 683579 \\ 
			& & (0.999) & (0.999) & (0.999) & (0.999) & (1.000) \\  \cline{2 - 7}
			& $(10, 10)$ & 511785 & 590947 & 471571 & 385310 & 652446 \\  
			& & (1.000) & (0.999) & (1.000) & (0.999) & (0.999) \\   \cline{2 - 7}
			& $(15, 5)$ & 496284 & 570993 & 456918 & 372694 & 632438 \\ 
			& & (1.000) & (0.999) & (0.998) & (0.999) & (0.999)  \\   \cline{2 - 7}
			& $(20,0)$ & 469609 & 540550 & 433140 & 353993 & 599204 \\ 
			& & (0.999) & (0.999) & (1.000) & (0.999) & (0.999) \\  
			\bottomrule    
		\end{tabular}
	\caption{Estimated OBS and PCD (in parentheses) of ${\cal IZE}$ with respect to different combinations of $(n_0', n_0'')$ and $\nu$ under the SM configuration with $k=99$ systems, $s=4$ constraints, and $d_\ell=0.01$.}
	\label{tab:DiffCombInitSample_IZRE_d0.01}
\end{table}

\subsection{Additional Discussion on ${\cal IZE}$ Under Different Variance Configurations}
\label{sec:IZE_DiffVarConfig_Additional}

In this section, we provide additional discussion on the performance of ${\cal IZE}$ under different variance configurations. Specifically, we consider the setting where $\nu=0.8$ and $(n_0', n_0'')=(15, 5)$ under the SM configuration with $k=99$ systems and $s=4$ constraints (as discussed in Section \ref{sec:ImplementationParams}, see Table \ref{tab:DiffCombInitSample_IZRE}). We compare the required OBS from each system under different variance configurations. 

\paragraph{IV-C vs. DV-C.} As shown in Figure~\ref{fig:OBSBySys_C}, which presents the estimated OBS by system under IV-C and DV-C with $\nu=0.8$ and $(n_0', n_0'')=(15, 5)$, the difference between IV-C and DV-C arises due to how difficulty varies across constraints for systems $\underline{b}+1,\ldots, \overline{b}$. Specifically, for the systems with indices between $\underline{b} + 1$ and $\overline{b}$, IV-C assigns lower variance to the first $m$ constraints, making them relatively easier, while DV-C makes the remaining $(s - m)$ constraints easier. Because concluding that a system is infeasible requires identifying just one violated constraint, DV-C, which favors easier infeasible constraints, tends to terminate sooner, resulting in fewer observations. This difference is more pronounced when $d_\ell = 0.02$ (i.e., in the slippage configuration), where feasibility checks are inherently more difficult. When $d_\ell = 0.5$, the feasibility checks become easier overall, reducing the impact of the variance configuration and narrowing the performance gap between DV-C and IV-C.
\begin{figure}[h!]
        \centering
        \begin{minipage}[b]{0.48\textwidth}
            \centering
            \subfigure[$d_\ell=0.02$]{
            \scalebox{0.3}{\includegraphics[width=10.5in]{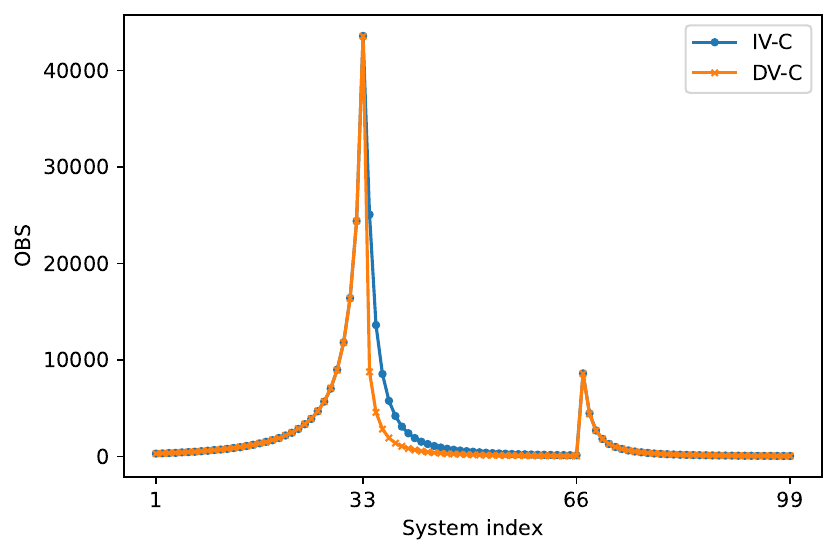}}}
        \end{minipage}
        \begin{minipage}[b]{0.48\textwidth}
            \centering
            \subfigure[$d_\ell=0.5$]{
            \scalebox{0.3}{\includegraphics[width=10.5in]{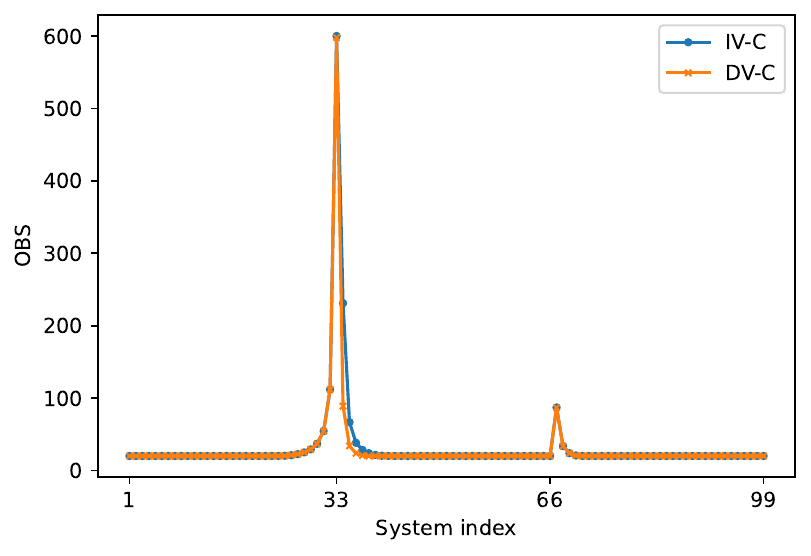}}}
        \end{minipage}
        \vspace{-6pt}
        \caption{Estimated OBS by systems for ${\cal IZE}$ under the IV-C and DV-C variance configurations.} \label{fig:OBSBySys_C}
\end{figure}

\paragraph{IV-S vs.\ DV-S.} Figure~\ref{fig:OBSBySys_S} illustrates the estimated OBS per system for both configurations, where we set $\nu=0.8$ and $(n_0', n_0'')=(15, 5)$. To interpret the results, we consider three segments of systems: (i) $i \leq \underline{b}$, (ii) $\underline{b} + 1 \leq i \leq \overline{b}$, and (iii) $i \geq \overline{b} + 1$.
\begin{itemize}
	\item Among the first $\underline{b}=33$ systems, feasibility checks are the most difficult (as all constraints are feasible and concluding a correct decision requires a correct decision for each constraint), with difficulty increasing as the system index increases. Because DV-S assigns higher variance to these systems compared to IV-S, it results in higher OBS.
	\item In the middle group, the feasibility difficulty is moderate (as only half of the constraints are feasible). DV-S requires more observations due to its relatively high variance in the earlier portion of this group where the feasibility decision is more difficult. The difference between DV-S and IV-S diminishes toward the latter half of the group, where IV-S leads to higher variance but the feasibility decision is easier.
	\item For the final $(k - \overline{b})$ systems, IV-S requires slightly more observations due to higher variance. However, since these systems are generally easier to classify, the increase in OBS is not substantial.
\end{itemize}
\begin{figure}[h!]
        \centering
        \begin{minipage}[b]{0.48\textwidth}
            \centering
            \subfigure[$d_\ell=0.02$]{
            \scalebox{0.3}{\includegraphics[width=10.5in]{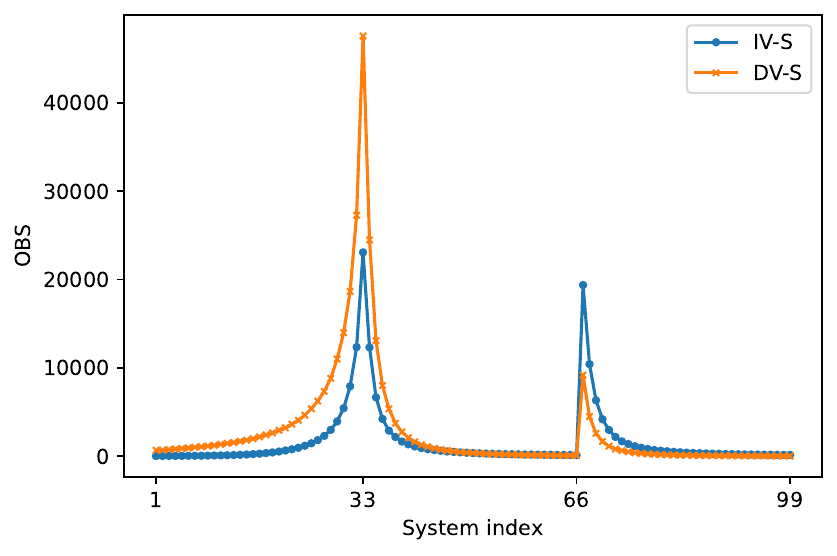}}}
        \end{minipage}
        \begin{minipage}[b]{0.48\textwidth}
            \centering
            \subfigure[$d_\ell=0.5$]{
            \scalebox{0.3}{\includegraphics[width=10.5in]{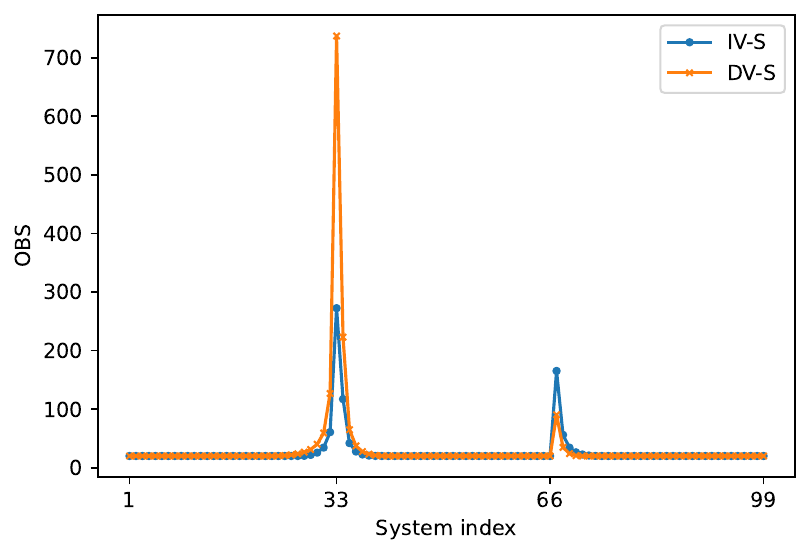}}}
        \end{minipage}
        \vspace{-6pt}
        \caption{Estimated OBS by systems for ${\cal IZE}$ under the IV-S and DV-S variance configurations.} \label{fig:OBSBySys_S}
\end{figure}

\subsection{Additional Results for ${\cal F}_B$, ${\cal IZR}$, and ${\cal IZE}$ Regarding $n_0$}
\label{sec:FB_IZR_Diff_n0_Additional}

Tables~\ref{tab:DiffCombInitSample_IZRE_nu0.6}--\ref{tab:DiffCombInitSample_IZRE_nu1} report the results for ${\cal IZE}$ under a wide range of $n_0$ values, with $\nu$ set to 0.6, 0.8, and 1, respectively.
Similarly, Table~\ref{tab:DiffCombInitSample_FB_IZR} presents the results for ${\cal F}_B$ and ${\cal IZR}$ under the same wide range of $n_0$ values as in Tables \ref{tab:DiffCombInitSample_IZRE_nu0.6}--\ref{tab:DiffCombInitSample_IZRE_nu1} for comparison.

\begin{table}[h!]
	\centering
	\resizebox{\columnwidth}{!}{
		\begin{tabular}{ c || c cccc | c cccc}
			\toprule
			& \multicolumn{5}{c|}{$d_\ell=0.02$} & \multicolumn{5}{c}{$d_\ell=0.5$}   \\  
			\diagbox{$n_0$}{$n_0'$} & $n_0-20$ & $n_0-15$ & $n_0-10$ & $n_0-5$ & $n_0$ & $n_0-20$ & $n_0-15$ & $n_0-10$ & $n_0-5$ & $n_0$  \\  \midrule
			$10$ & -- & -- & -- & 529315 & 427226 & -- & -- & -- & 4180 & 3672 \\ 
			& -- & -- & -- & (1.000) & (0.999) & -- & -- & -- & (1.000) & (1.000) \\   \hline
			$20$ & -- & 240772 & 229395 & 224430 & 214634 & -- & 3267 & 3169 & 3161 & 3268 \\ 
			& -- & (1.000) & (1.000) & (1.000) & (1.000) & -- & (1.000) & (1.000) & (1.000) & (1.000) \\  \hline
			$30$ & 187349 & 183065 & 180855 & 179740 & 176430 & 3838 & 3800 & 3792 & 3824 & 3988  \\  
			& (0.999) & (1.000) & (1.000) & (0.999) & (0.999) & (1.000) & (1.000) & (1.000) & (1.000) & (1.000) \\   \hline
			$40$ & 165028 & 163690 & 163071 & 162522 & 161302 & 4617 & 4615 & 4640 & 4690 & 4868 \\ 
			& (1.000) & (1.000) & (1.000) & (1.000) & (0.999) & (1.000) & (1.000) & (1.000) & (1.000) & (1.000) \\  \hline
			$50$ & 154729 & 154315 & 154136 & 154224 & 153741 & 5505 & 5529 & 5560 & 5616 & 5804 \\ 
			& (1.000) & (1.000) & (1.000) & (1.000) & (0.999) & (1.000) & (1.000) & (1.000) & (1.000) & (1.000)   \\   \hline
			$60$ & 149082 & 149076 & 149249 & 149382 & 149194 & 6446 & 6471 & 6511 & 6569 & 6760 \\
			& (1.000) & (0.999) & (1.000) & (1.000) & (0.999) & (1.000) & (1.000) & (1.000) & (1.000) & (1.000)  \\  \hline
			$70$ & 145846 & 145912 & 146145 & 146614 & 146703 & 7401 & 7434 & 7476 & 7535 & 7727 \\
			& (1.000) & (1.000) & (1.000) & (1.000) & (1.000) & (1.000) & (1.000) & (1.000) & (1.000) & (1.000)   \\  \hline
			$80$ & 143582 & 143977 & 144225 & 144827 & 145061 & 8373 & 8405 & 8448 & 8510 & 8702 \\
			& (1.000) & (1.000) & (0.999) & (1.000) & (1.000) & (1.000) & (1.000) & (1.000) & (1.000) & (1.000)  \\  \hline
			$90$ & 142690 & 142910 & 143192 & 143652 & 144005 & 9350 & 9384 & 9427 & 9487 & 9681  \\   
			& (1.000) & (1.000) & (0.999) & (0.999) & (1.000) & (1.000) & (1.000) & (1.000) & (1.000) & (1.000) \\    \hline
			$100$ & 141693 & 142309 & 142872 & 143114 & 143422 & 10329 & 10363 & 10408 & 10469 & 10662  \\
			& (1.000) & (1.000) & (0.999) & (1.000) & (1.000) & (1.000) & (1.000) & (1.000) & (1.000) & (1.000)  \\    \hline
			$150$ & 142265 & 142886 & 143317 & 143916 & 144220 & 15254 & 15289 & 15332 & 15394 & 15586 \\
			& (1.000) & (0.999) & (1.000) & (1.000) & (1.000) & (1.000) & (1.000) & (1.000) & (1.000) & (1.000)  \\    \hline
			$200$ & 145355 & 145932 & 146522 & 146937 & 147418 & 20193 & 20228 & 20272 & 20333 & 20524  \\
			& (1.000) & (1.000) & (1.000) & (1.000) & (1.000) & (1.000) & (1.000) & (1.000) & (1.000) & (1.000)  \\    \hline
			$300$ & 153954 & 154403 & 154758 & 155478 & 155974 & 30084 & 30119 & 30161 & 30222 & 30413  \\
			& (0.999) & (0.999) & (1.000) & (0.999) & (1.000) & (1.000) & (1.000) & (1.000) & (1.000) & (1.000) \\   \hline
			$400$ & 163298 & 163865 & 164236 & 164901 & 165333 & 39978 & 40013 & 40057 & 40116 & 40307  \\
			& (1.000) & (1.000) & (1.000) & (1.000) & (0.999) & (1.000) & (1.000) & (1.000) & (1.000) & (1.000) \\  \hline
			$500$ & 173030 & 173598 & 174050 & 174549 & 175099 & 49875 & 49910 & 49953 & 50013 & 50205  \\
			& (1.000) & (0.999) & (1.000) & (1.000) & (1.000) & (1.000) & (1.000) & (1.000) & (1.000) & (1.000)   \\    \hline
			$600$ & 182857 & 183447 & 183942 & 184405 & 184916 & 59775 & 59808 & 59852 & 59911 & 60102  \\
			& (0.999) & (0.999) & (1.000) & (1.000) & (1.000) & (1.000) & (1.000) & (1.000) & (1.000) & (1.000)  \\    \hline
			$700$ & 192716 & 193191 & 193767 & 194401 & 194812 & 69673 & 69707 & 69751 & 69810 & 70001 \\
			& (0.999) & (1.000) & (1.000) & (0.999) & (0.999) & (1.000) & (1.000) & (1.000) & (1.000) & (1.000) \\    \hline
			$800$ & 202643 & 203294 & 203913 & 204203 & 204541 & 79572 & 79607 & 79649 & 79710 & 79900 \\
			& (1.000) & (0.999) & (1.000) & (1.000) & (1.000) & (1.000) & (1.000) & (1.000) & (1.000) & (1.000) \\    \hline
			$900$ & 212639 & 213112 & 213642 & 214252 & 214743 & 89471 & 89506 & 89549 & 89608 & 89800 \\
			& (0.999) & (1.000) & (1.000) & (1.000) & (1.000) & (1.000) & (1.000) & (1.000) & (1.000) & (1.000) \\    \hline
			$1000$ & 222528 & 223117 & 223546 & 224123 & 224649 & 99371 & 99405 & 99448 & 99508 & 99700  \\
			& (1.000) & (1.000) & (1.000) & (1.000) & (0.999) & (1.000) & (1.000) & (1.000) & (1.000) & (1.000)   \\  
			\bottomrule    
	\end{tabular}}
	\caption{Estimated OBS and PCD (in parentheses) of ${\cal IZE}$ with $n_0\in \{10, 20, 30, 40, 50, 60, 70, 80, 90, 100, 150, 200, 300, 400, 500, 600, 700,800,900, 1000\}, n_0'\in \{n_0-20, n_0-15, n_0-10, n_0-5, n_0\}$, and $\nu=0.6$ under the SM and CV configurations with $k=99$ systems and $s=4$ constraints.}
	\label{tab:DiffCombInitSample_IZRE_nu0.6}
\end{table}

\begin{table}[h!]
	\centering
	\resizebox{\columnwidth}{!}{
		\begin{tabular}{ c || c cccc | c cccc}
			\toprule
			& \multicolumn{5}{c|}{$d_\ell=0.02$} & \multicolumn{5}{c}{$d_\ell=0.5$}   \\  
			\diagbox{$n_0$}{$n_0'$} & $n_0-20$ & $n_0-15$ & $n_0-10$ & $n_0-5$ & $n_0$ & $n_0-20$ & $n_0-15$ & $n_0-10$ & $n_0-5$ & $n_0$  \\  \midrule
			$10$ & -- & -- & -- & 545038 & 429440 & -- & -- & -- & 3879 & 3217 \\ 
			& -- & -- & -- & (0.999) & (1.000) & -- & -- & -- & (1.000) & (1.000) \\   \hline
			$20$ & -- & 247725 & 230624 & 221806 & 209564 & -- & 3170 & 2973 & 2917 & 3026 \\ 
			& -- & (1.000) & (1.000) & (1.000) & (0.999) & -- & (1.000) & (1.000) & (1.000) & (1.000) \\   \hline
			$30$ & 188131 & 181229 & 176363 & 173285 & 168685 & 3683 & 3629 & 3613 & 3637 & 3798 \\  
			& (1.000) & (0.999) & (1.000) & (1.000) & (1.000) & (1.000) & (1.000) & (1.000) & (1.000) & (1.000) \\   \hline
			$40$ & 160894 & 158045 & 156049 & 154347 & 152224 & 4469 & 4462 & 4485 & 4528 & 4701 \\ 
			& (0.999) & (1.000) & (0.999) & (1.000) & (1.000) & (1.000) & (1.000) & (1.000) & (1.000) & (1.000) \\   \hline
			$50$ & 148242 & 146705 & 145541 & 144653 & 143537 & 5368 & 5386 & 5419 & 5468 & 5645 \\ 
			& (1.000) & (1.000) & (1.000) & (1.000) & (1.000) & (1.000) & (1.000) & (1.000) & (1.000) & (1.000)    \\ \hline
			$60$ & 140654 & 139964 & 139065 & 138795 & 137936 & 6316 & 6340 & 6376 & 6427 & 6609 \\
			& (1.000) & (0.999) & (0.999) & (1.000) & (0.999) & (1.000) & (1.000) & (1.000) & (1.000) & (1.000)  \\  \hline
			$70$ & 136102 & 135568 & 135483 & 135091 & 134676 & 7280 & 7309 & 7347 & 7398 & 7581 \\
			&  (1.000) & (0.999) & (1.000) & (1.000) & (1.000) & (1.000) & (1.000) & (1.000) & (1.000) & (1.000)   \\  \hline
			$80$ & 133079 & 132818 & 132644 & 132317 & 132450 & 8255 & 8285 & 8322 & 8376 & 8558 \\
			& (0.999) & (0.999) & (0.999) & (1.000) & (0.999) & (1.000) & (1.000) & (1.000) & (1.000) & (1.000)  \\  \hline
			$90$ & 131185 & 130993 & 131104 & 131091 & 130890 & 9236 & 9266 & 9303 & 9357 & 9539  \\  
			& (1.000) & (1.000) & (1.000) & (1.000) & (1.000) & (1.000) & (1.000) & (1.000) & (1.000) & (1.000) \\    \hline
			$100$ & 129439 & 129620 & 129788 & 130007 & 129892 & 10218 & 10248 & 10286 & 10340 & 10523  \\
			& (1.000) & (1.000) & (0.999) & (1.000) & (1.000) & (1.000) & (1.000) & (1.000) & (1.000) & (1.000)  \\   \hline
			$150$ & 127724 & 128070 & 128368 & 128578 & 128915 & 15148 & 15179 & 15217 & 15271 & 15452  \\
			& (1.000) & (0.999) & (0.999) & (1.000) & (0.999) & (1.000) & (1.000) & (1.000) & (1.000) & (1.000)  \\    \hline
			$200$ & 129409 & 129825 & 130133 & 130441 & 130938 & 20089 & 20121 & 20158 & 20212 & 20393  \\
			& (0.999) & (0.999) & (0.999) & (1.000) & (0.999) & (1.000) & (1.000) & (1.000) & (1.000) & (1.000)   \\   \hline
			$300$ & 136188 & 136667 & 137188 & 137423 & 137926 & 29981 & 30013 & 30051 & 30103 & 30285  \\
			& (0.999) & (0.999) & (1.000) & (1.000) & (1.000) & (1.000) & (1.000) & (1.000) & (1.000) & (1.000) \\    \hline
			$400$ & 144701 & 145105 & 145639 & 146044 & 146554 & 39878 & 39908 & 39946 & 39999 & 40179  \\
			& (1.000) & (1.000) & (1.000) & (1.000) & (0.999) & (1.000) & (1.000) & (1.000) & (1.000) & (1.000) \\   \hline
			$500$ & 153734 & 154350 & 154608 & 155030 & 155836 & 49776 & 49807 & 49843 & 49897 & 50077  \\
			& (0.999) & (0.999) & (1.000) & (0.999) & (0.999) & (1.000) & (1.000) & (1.000) & (1.000) & (1.000)   \\   \hline
			$600$ & 163128 & 163739 & 164265 & 164555 & 165140 & 59675 & 59705 & 59743 & 59795 & 59976  \\
			& (0.999) & (1.000) & (1.000) & (1.000) & (1.000) & (1.000) & (1.000) & (1.000) & (1.000) & (1.000)  \\   \hline
			$700$ & 172770 & 173370 & 173701 & 174187 & 174710 & 69573 & 69604 & 69641 & 69694 & 69875  \\
			& (0.999) & (1.000) & (0.999) & (1.000) & (0.999) & (1.000) & (1.000) & (1.000) & (1.000) & (1.000)  \\   \hline
			$800$ & 182386 & 182951 & 183485 & 183895 & 184504 & 79473 & 79504 & 79542 & 79593 & 79773  \\
			& (1.000) & (1.000) & (1.000) & (1.000) & (1.000) & (1.000) & (1.000) & (1.000) & (1.000) & (1.000)  \\   \hline
			$900$ & 192286 & 192667 & 193135 & 193673 & 194176 & 89372 & 89403 & 89440 & 89493 & 89673  \\
			& (1.000) & (0.999) & (1.000) & (1.000) & (0.999) & (1.000) & (1.000) & (1.000) & (1.000) & (1.000)  \\   \hline
			$1000$ & 201888 & 202527 & 203035 & 203512 & 203927 & 99272 & 99302 & 99339 & 99392 & 99572  \\
			& (0.999) & (0.999) & (1.000) & (1.000) & (1.000) & (1.000) & (1.000) & (1.000) & (1.000) & (1.000)   \\  
			\bottomrule    
	\end{tabular}}
	\caption{Estimated OBS and PCD (in parentheses) of ${\cal IZE}$ with $n_0\in \{10, 20, 30, 40, 50, 60, 70, 80, 90, 100, 150, 200, 300, 400, 500, 600, 700, 800, 900, 1000\}, n_0'\in \{n_0-20, n_0-15, n_0-10, n_0-5, n_0\}$, and $\nu=0.8$ under the SM and CV configurations with $k=99$ systems and $s=4$ constraints.}
	\label{tab:DiffCombInitSample_IZRE_nu0.8}
\end{table}

\begin{table}[h!]
	\centering
	\resizebox{\columnwidth}{!}{
		\begin{tabular}{ c || c cccc | c cccc}
			\toprule
			& \multicolumn{5}{c|}{$d_\ell=0.02$} & \multicolumn{5}{c}{$d_\ell=0.5$}   \\  
			\diagbox{$n_0$}{$n_0'$} & $n_0-20$ & $n_0-15$ & $n_0-10$ & $n_0-5$ & $n_0$ & $n_0-20$ & $n_0-15$ & $n_0-10$ & $n_0-5$ & $n_0$  \\  \midrule
			$10$ & -- & -- & -- & 573559 & 449910 & -- & -- & -- & 4340 & 3323 \\ 
			& -- & -- & -- & (0.999) & (0.999) & -- & -- & -- & (1.000) & (1.000) \\   \hline
			$20$ & -- & 259923 & 241358 & 230735 & 217018 & -- & 3410 & 3075 & 2935 & 3002 \\ 
			& -- & (1.000) & (0.999) & (0.999) & (1.000) & -- & (1.000) & (1.000) & (1.000) & (1.000) \\  \hline
			$30$ & 196666 & 188252 & 182737 & 178536 & 173032 & 3782 & 3669 & 3615 & 3614 & 3750 \\  
			& (0.999) & (0.999) & (1.000) & (1.000) & (0.999) & (1.000) & (1.000) & (1.000) & (1.000) & (1.000) \\    \hline
			$40$ & 166407 & 162797 & 159864 & 157705 & 154958 & 4480 & 4462 & 4458 & 4481 & 4642 \\ 
			& (1.000) & (0.999) & (1.000) & (0.999) & (0.999) & (1.000) & (1.000) & (1.000) & (1.000) & (1.000) \\   \hline
			$50$ & 151800 & 149717 & 148129 & 146721 & 144978 & 5353 & 5363 & 5377 & 5415 & 5588 \\ 
			& (1.000) & (0.999) & (1.000) & (1.000) & (1.000) & (1.000) & (1.000) & (1.000) & (1.000) & (1.000)    \\  \hline
			$60$ & 143253 & 141782 & 140897 & 139878 & 138880 & 6289 & 6305 & 6331 & 6372 & 6546 \\
			& (0.999) & (1.000) & (0.999) & (1.000) & (1.000) & (1.000) & (1.000) & (1.000) & (1.000) & (1.000)  \\  \hline
			$70$ & 137727 & 136740 & 136153 & 135654 & 134652 & 7244 & 7266 & 7300 & 7342 & 7516 \\
			& (1.000) & (1.000) & (1.000) & (1.000) & (1.000) & (1.000) & (1.000) & (1.000) & (1.000) & (1.000)   \\  \hline
			$80$ & 133968 & 133208 & 132729 & 132532 & 132118 & 8215 & 8240 & 8271 & 8317 & 8491 \\
			& (1.000) & (0.999) & (1.000) & (0.999) & (1.000) & (1.000) & (1.000) & (1.000) & (1.000) & (1.000)  \\  \hline
			$90$ & 131146 & 130578 & 130667 & 130214 & 129938 & 9192 & 9216 & 9247 & 9299 & 9471  \\  
			& (0.999) & (0.999) & (1.000) & (1.000) & (1.000) & (1.000) & (1.000) & (1.000) & (1.000) & (1.000) \\    \hline
			$100$ & 129252 & 129003 & 128889 & 128728 & 128515 & 10173 & 10197 & 10232 & 10280 & 10455  \\
			& (1.000) & (1.000) & (0.999) & (1.000) & (0.999) & (1.000) & (1.000) & (1.000) & (1.000) & (1.000)  \\    \hline
			$150$ & 125458 & 125549 & 125814 & 125981 & 126119 & 15097 & 15125 & 15159 & 15207 & 15382  \\
			& (1.000) & (1.000) & (1.000) & (0.999) & (0.999) & (1.000) & (1.000) & (1.000) & (1.000) & (1.000)  \\   \hline
			$200$ & 126032 & 126187 & 126509 & 126839 & 127153 & 20037 & 20065 & 20099 & 20147 & 20321  \\
			& (1.000) & (0.999) & (0.999) & (1.000) & (0.999) & (1.000) & (1.000) & (1.000) & (1.000) & (1.000)   \\    \hline
			$300$ & 131324 & 131649 & 132133 & 132599 & 133061 & 29929 & 29957 & 29991 & 30038 & 30212  \\
			& (1.000) & (1.000) & (1.000) & (1.000) & (1.000) & (1.000) & (1.000) & (1.000) & (1.000) & (1.000) \\    \hline
			$400$ & 138912 & 139278 & 139897 & 140179 & 140648 & 39823 & 39853 & 39887 & 39934 & 40108  \\
			& (1.000) & (1.000) & (0.999) & (1.000) & (1.000) & (1.000) & (1.000) & (1.000) & (1.000) & (1.000) \\    \hline
			$500$ & 147376 & 147814 & 148375 & 148784 & 149319 & 49722 & 49749 & 49784 & 49832 & 50005  \\
			& (0.999) & (1.000) & (1.000) & (1.000) & (0.999) & (1.000) & (1.000) & (1.000) & (1.000) & (1.000)   \\    \hline
			$600$ & 156413 & 156755 & 157421 & 157881 & 158371 & 59621 & 59649 & 59684 & 59731 & 59904  \\
			& (0.999) & (0.999) & (1.000) & (1.000) & (1.000) & (1.000) & (1.000) & (1.000) & (1.000) & (1.000)    \\    \hline
			$700$ & 165650 & 166142 & 166680 & 167054 & 167536 & 69520 & 69548 & 69581 & 69629 & 69803 \\
			& (1.000) & (1.000) & (1.000) & (1.000) & (1.000) & (1.000) & (1.000) & (1.000) & (1.000) & (1.000) \\    \hline
			$800$ & 174916 & 175515 & 176136 & 176467 & 176912 & 79420 & 79446 & 79482 & 79529 & 79702 \\
			& (0.999) & (1.000) & (1.000) & (1.000) & (1.000) & (1.000) & (1.000) & (1.000) & (1.000) & (1.000) \\    \hline
			$900$ & 184532 & 185119 & 185516 & 186053 & 186557 & 89318 & 89346 & 89380 & 89428 & 89601 \\
			& (1.000) & (0.999) & (0.999) & (1.000) & (1.000) & (1.000) & (1.000) & (1.000) & (1.000) & (1.000) \\    \hline
			$1000$ & 194117 & 194651 & 195339 & 195676 & 196310 & 99218 & 99246 & 99280 & 99328 & 99501  \\
			& (1.000) & (1.000) & (1.000) & (0.999) & (1.000) & (1.000) & (1.000) & (1.000) & (1.000) & (1.000)   \\  
			\bottomrule    
	\end{tabular}}
	\caption{Estimated OBS and PCD (in parentheses) of ${\cal IZE}$ with $n_0\in \{10, 20, 30, 40, 50, 60, 70, 80, 90, 100, 150, 200, 300, 400, 500, 600, 700, 800, 900, 1000\}, n_0'\in \{n_0-20, n_0-15, n_0-10, n_0-5, n_0\}$, and $\nu=1$ under the SM and CV configurations with $k=99$ systems and $s=4$ constraints.}
	\label{tab:DiffCombInitSample_IZRE_nu1}
\end{table}

\begin{table}[H]
	\centering
		\begin{tabular}{ c | c cc | c cc}
			\toprule
			& \multicolumn{3}{c|}{$d_\ell=0.02$} & \multicolumn{3}{c}{$d_\ell=0.5$}   \\  
			& ${\cal F}_B$ & ${\cal IZR}$ & ${\cal IZR}$ & ${\cal F}_B$ & ${\cal IZR}$ & ${\cal IZR}$  \\  
			$n_0$ & & $T=2, \xi=2$ & $T=2, \xi=3$ & & $T=2, \xi=2$ & $T=2, \xi=3$ \\  \midrule
			$10$ & 587192 & 424793 & 381433 & 27374 & 16539 & 11149  \\
			& (1.000) & (0.999) & (1.000) & (1.000) & (1.000) & (1.000) \\   \cline{1 - 7}
			$20$ & 328495 & 223273 & 199134 & 15378 & 8824 & 6092  \\ 
			& (0.999) & (1.000) & (0.999) & (1.000) & (1.000) & (1.000) \\   \cline{1 - 7}
			$30$ & 279600 & 186483 & 165654 & 13141 & 7722 & 5707  \\  
			& (0.999) & (1.000) & (1.000) & (1.000) & (1.000) & (1.000) \\    \cline{1 - 7}
			$40$ & 259272 & 171437 & 152048 & 12395 & 7717 & 6077 \\ 
			& (1.000) & (1.000) & (0.999) & (1.000) & (1.000) & (1.000) \\    \cline{1 - 7}
			$50$ & 248162 & 163213 & 144637 & 12268 & 8118 & 6691 \\ 
			& (1.000) & (0.999) & (0.999) & (1.000) & (1.000) & (1.000)  \\   \cline{1 - 7}
			$60$ & 241174 & 158214 & 139881 & 12469 & 8695 & 7423 \\
			& (1.000) & (0.999) & (0.999) & (1.000) & (1.000) & (1.000)  \\  
			\cline{1 - 7}
			$70$ & 236591 & 154702 & 136598 & 12867 & 9375 & 8219 \\
			& (0.999) & (0.999) & (0.999) & (1.000) & (1.000) & (1.000)  \\  \cline{1 - 7}
			$80$ & 233079 & 152212 & 134618 & 13385 & 10117 & 9057 \\
			& (0.999) & (1.000) & (1.000) & (1.000) & (1.000) & (1.000)  \\
			\cline{1 - 7}
			$90$ & 230534 & 150141 & 132727 & 13983 & 10906 & 9920 \\
			& (1.000) & (1.000) & (0.999) & (1.000) & (1.000) & (1.000)  \\ 
			\cline{1 - 7}
			$100$ & 228425 & 148734 & 131377 & 14639 & 11725 & 10804 \\
			& (1.000) & (0.999) & (0.999) & (1.000) & (1.000) & (1.000)  \\  \cline{1 - 7}
			$150$ & 222488 & 144360 & 127401 & 18434 & 16088 & 15394 \\  
			& (1.000) & (1.000) & (1.000) & (1.000) & (1.000) & (1.000) \\   \cline{1 - 7}
			$200$ & 219845 & 142375 & 125634 & 22649 & 20680 & 20139 \\  
			& (1.000) & (1.000) & (0.999) & (1.000) & (1.000) & (1.000) \\   \cline{1 - 7}
			$300$ & 217067 & 140324 & 124901 & 31623 & 30164 & 29775 \\
			& (1.000) & (1.000) & (1.000) & (1.000) & (1.000) & (1.000) \\ \cline{1 - 7}
			$400$ & 215710 & 140214 & 127456 & 40919 & 39770 & 39602 \\
			& (1.000) & (0.999) & (0.999) & (1.000) & (1.000) & (1.000) \\ \cline{1 - 7}
			$500$ & 214945 & 142119 & 131778 & 50443 & 49524 & 49500 \\  
			& (1.000) & (1.000) & (0.999) & (1.000) & (1.000) & (1.000) \\   \cline{1 - 7}
			$600$ & 214525 & 145761 & 137438 & 60035 & 59400 & 59400  \\
			& (0.999) & (1.000) & (1.000) & (1.000) & (1.000) & (1.000)  \\ \cline{1 - 7}
			$700$ & 214868 & 150473 & 143866 & 69639 & 69300 & 69300  \\
			& (1.000) & (0.999) & (0.999) & (1.000) & (1.000) & (1.000) \\ \cline{1 - 7}
			$800$ & 216656 & 155835 & 150513 & 79312 & 79200 & 79200  \\
			& (1.000) & (0.999) & (1.000) & (1.000) & (1.000) & (1.000) \\ \cline{1 - 7}
			$900$ & 219265 & 161963 & 157966 & 89117 & 89100 & 89100  \\
			& (1.000) & (1.000) & (0.999) & (1.000) & (1.000) & (1.000)  \\ \cline{1 - 7}
			$1000$ & 222928 & 168436 & 165423 & 99001 & 99000 & 99000 \\  
			& (0.999) & (1.000) & (1.000) & (1.000) & (1.000) & (1.000) \\
			\bottomrule    
		\end{tabular}
	\caption{Estimated OBS and PCD (in parentheses) of ${\cal F}_B$ and ${\cal IZR}$ with $n_0\in \{10, 20, 30, 40, 50, 60, 70, 80, 90, 100, 150, 200, 300, 400, 500, 600, 700,800,900, 1000\}$ under the SM and CV configurations with $k=99$ systems and $s=4$ constraints.}
	\label{tab:DiffCombInitSample_FB_IZR}
\end{table}

\section{Additional Results for Section \ref{sec:StatisticalValidity}}
\label{sec:ValidityAdditionalResults}

In this section, we present additional experimental results that further support the statistical validity of our proposed procedures. Specifically, Section~\ref{sec:ValidityAdditionalResults_SingleSys} provides results for the case including a single system with a single constraint, while Section~\ref{sec:ValidityAdditionalResults_MultiSys} presents results for the case involving multiple systems with multiple constraints.

\subsection{Single System with Single Constraint}
\label{sec:ValidityAdditionalResults_SingleSys}

In this section, we first present additional simulation results for a single system (i.e., $k=\underline{b}=\overline{b}=1$) with a single constraint (i.e., $s=m=1$) with additional values $d_1\in \{0.05, 0.1, 1\}$ in Table \ref{tab:Validity_k1s1_FullResults}. 

\begin{table}[h!]
	\centering
		\begin{tabular}{  c | c | cc|c}
			\toprule
			$d_1$ & ${\cal F}_B$ & \multicolumn{2}{c|}{${\cal IZR}$} & ${\cal IZE}$  \\
			& & $T=2, \xi=2$ & $T=2, \xi=3$ &  \\ \midrule 
			0.05 & 2174 & 1800 & 1850 & 2262 \\  
			& (0.999) & (0.998) & (0.996) & (0.995) \\  \hline
			0.1 & 1184 & 882 & 676 & 992 \\ 
			& (1.000) & (1.000) & (1.000) & (0.999) \\ \hline
			1 & 130 & 89 & 60 & 29 \\
			& (1.000) & (1.000) & (1.000) & (1.000)  \\
			\bottomrule
		\end{tabular}
	\caption{Estimated OBS and PCD (in parentheses) of ${\cal F}_B, {\cal IZR}$, and ${\cal IZE}$ under the CM configuration with $k=s=1$ and $d_\ell\in \{0.05, 0.1, 1\}$. }
	\label{tab:Validity_k1s1_FullResults}
\end{table}

A similar pattern emerges as discussed in Section~\ref{sec:Validity_SingleSysSingleConstr}. For a moderately difficult problem (i.e., $d_1 = 0.05$), ${\cal IZE}$ does not outperform ${\cal F}_B$, while ${\cal IZR}$ does offer some improvement, though the savings are not as substantial as in easier cases (i.e., $d_1 \in \{0.1, 1\}$). In contrast, both ${\cal IZE}$ and ${\cal IZR}$ consistently outperform ${\cal F}_B$ when the problem is easier (i.e., $d_1 \in \{0.1, 1\}$).

\subsection{Multiple Systems with Multiple Constraints}
\label{sec:ValidityAdditionalResults_MultiSys}

In this section, we provided additional results that demonstrate the statistical validity of our proposed procedures when multiple systems and multiple constraints are considered. Tables \ref{tab:Validity_k99s4_CV}--\ref{tab:Validity_k99s4_DV_S} show the results under variance configurations CV, IV-C, DV-C, IV-S, and DV-S, respectively. 

\begin{table}[!h]
	\centering
	{\footnotesize
			\begin{tabular}{ c| c|c | c | cc|c}
				\toprule
				& $(\underline{b}, \overline{b})$ & $d_1$ & ${\cal F}_B$ & \multicolumn{2}{c|}{${\cal IZR}$} & ${\cal IZE}$  \\
				& & & & $T=2, \xi=2$ & $T=2, \xi=3$ &  \\ \midrule  
				\multirow{40}{*}{CV} 
				& \multirow{10}{*}{$(55,77)$}
				& 0.02 & 2411999 & 2378234 & 2665526 & 2661448 \\
				& & & (0.975) & (0.977) & (0.978) & (0.974) \\ \cline{3-7}
				&  & 0.05 & 1195552 & 819634 & 848221 & 1168953 \\
				& & & (1.000) & (1.000) & (1.000) & (0.998) \\ \cline{3-7}
				&  & 0.1 & 649026 & 405758 & 302751 & 512758 \\
				& & & (1.000) & (1.000) & (1.000) & (1.000) \\ \cline{3-7}
				&  & 0.5 & 139456 & 80537 & 54910 & 33538 \\
				& & & (1.000) & (1.000) & (1.000) & (1.000) \\ \cline{3-7}
				&  & 1 & 70393 & 40240 & 27138 & 7149 \\
				& & & (1.000) & (1.000) & (1.000) & (1.000) \\ \cline{2-7}
				& \multirow{10}{*}{$(33,66)$}
				& 0.02 & 2063068 & 1939161 & 2251000 & 2240086 \\
				& & & (0.984) & (0.984) & (0.983) & (0.984) \\ \cline{3-7}
				&  & 0.05 & 1034686 & 699583 & 680371 & 964025 \\
				& & & (1.000) & (1.000) & (1.000) & (0.999) \\ \cline{3-7}
				&  & 0.1 & 565720 & 350898 & 259779 & 405280 \\
				& & & (1.000) & (1.000) & (1.000) & (1.000) \\ \cline{3-7}
				&  & 0.5 & 122310 & 70524 & 47981 & 26003 \\
				& & & (1.000) & (1.000) & (1.000) & (1.000) \\ \cline{3-7}
				&  & 1 & 61807 & 35307 & 23789 & 6245 \\
				& & & (1.000) & (1.000) & (1.000) & (1.000) \\ \cline{2-7}
				& \multirow{10}{*}{$(22,44)$}
				& 0.02 & 1832425 & 1653390 & 1973288 & 1961669 \\
				& & & (0.988) & (0.991) & (0.988) & (0.991) \\ \cline{3-7}
				&  & 0.05 & 928937 & 620968 & 575724 & 828121 \\
				& & & (1.000) & (1.000) & (1.000) & (1.000) \\ \cline{3-7}
				&  & 0.1 & 510125 & 314668 & 231234 & 336339 \\
				& & & (1.000) & (1.000) & (1.000) & (1.000) \\ \cline{3-7}
				&  & 0.5 & 110874 & 63861 & 43422 & 21560 \\
				& & & (1.000) & (1.000) & (1.000) & (1.000) \\ \cline{3-7}
				&  & 1 & 56052 & 32017 & 21562 & 5661 \\
				& & & (1.000) & (1.000) & (1.000) & (1.000) \\ \cline{2-7}
				& \multirow{10}{*}{$(22,77)$}
				& 0.02 & 1944820 & 1784879 & 2116229 & 2099287 \\
				& & & (0.989) & (0.990) & (0.988) & (0.991) \\ \cline{3-7}
				&  & 0.05 & 980878 & 659199 & 617384 & 894447 \\
				& & & (1.000) & (1.000) & (1.000) & (0.999) \\ \cline{3-7}
				&  & 0.1 & 537386 & 332548 & 245058 & 366285 \\
				& & & (1.000) & (1.000) & (1.000) & (1.000) \\ \cline{3-7}
				&  & 0.5 & 116521 & 67128 & 45660 & 22914 \\
				& & & (1.000) & (1.000) & (1.000) & (1.000) \\ \cline{3-7}
				&  & 1 & 58897 & 33626 & 22663 & 5915 \\
				& & & (1.000) & (1.000) & (1.000) & (1.000) \\
				\bottomrule
			\end{tabular}
		}
		\caption{Estimated OBS and PCD (in parentheses) of ${\cal F}_B, {\cal IZR}$, and ${\cal IZE}$ under the CM configuration with $k=99$ and $s=4$. The variance configuration is CV. }
		\label{tab:Validity_k99s4_CV}
	\end{table}
	
	\begin{table}[!h]
		\centering
		{\footnotesize
				\begin{tabular}{ c| c|c | c | cc|c}
					\toprule
					& $(\underline{b}, \overline{b})$ & $d_1$ & ${\cal F}_B$ & \multicolumn{2}{c|}{${\cal IZR}$} & ${\cal IZE}$  \\
					& & & & $T=2, \xi=2$ & $T=2, \xi=3$ &  \\ \midrule  
					\multirow{40}{*}{IV-C} 
					& \multirow{10}{*}{$(55,77)$}
					& 0.02 & 2828498 & 2759333 & 3139220 & 3134028 \\
					& & & (0.971) & (0.974) & (0.974) & (0.973) \\ \cline{3-7}
					&  & 0.05 & 1406653 & 960391 & 963085 & 1394123 \\
					& & & (1.000) & (1.000) & (1.000) & (0.999) \\ \cline{3-7}
					&  & 0.1 & 764273 & 476986 & 355300 & 614572 \\
					& & & (1.000) & (1.000) & (1.000) & (1.000) \\ \cline{3-7}
					&  & 0.5 & 164525 & 94870 & 64662 & 40073 \\
					& & & (1.000) & (1.000) & (1.000) & (1.000) \\ \cline{3-7}
					&  & 1 & 83015 & 47431 & 31986 & 8379 \\
					& & & (1.000) & (1.000) & (1.000) & (1.000) \\ \cline{2-7}
					& \multirow{10}{*}{$(33,66)$}
					& 0.02 & 2323070 & 2201370 & 2558362 & 2549204 \\
					& & & (0.985) & (0.984) & (0.985) & (0.986) \\ \cline{3-7}
					&  & 0.05 & 1162904 & 788549 & 759741 & 1116529 \\
					& & & (1.000) & (1.000) & (1.000) & (1.000) \\ \cline{3-7}
					&  & 0.1 & 634721 & 394494 & 292363 & 477229 \\
					& & & (1.000) & (1.000) & (1.000) & (1.000) \\ \cline{3-7}
					&  & 0.5 & 136933 & 78920 & 53774 & 30302 \\
					& & & (1.000) & (1.000) & (1.000) & (1.000) \\ \cline{3-7}
					&  & 1 & 69125 & 39509 & 26614 & 7027 \\
					& & & (1.000) & (1.000) & (1.000) & (1.000) \\ \cline{2-7}
					& \multirow{10}{*}{$(22,44)$}
					& 0.02 & 1825603 & 1707591 & 1994788 & 1982775 \\
					& & & (0.989) & (0.989) & (0.988) & (0.990) \\ \cline{3-7}
					&  & 0.05 & 916224 & 619675 & 587566 & 857221 \\
					& & & (1.000) & (1.000) & (1.000) & (0.999) \\ \cline{3-7}
					&  & 0.1 & 500036 & 310644 & 229749 & 361554 \\
					& & & (1.000) & (1.000) & (1.000) & (1.000) \\ \cline{3-7}
					&  & 0.5 & 108058 & 62285 & 42436 & 23078 \\
					& & & (1.000) & (1.000) & (1.000) & (1.000) \\ \cline{3-7}
					&  & 1 & 54561 & 31167 & 21017 & 5790 \\
					& & & (1.000) & (1.000) & (1.000) & (1.000) \\ \cline{2-7}
					& \multirow{10}{*}{$(22,77)$}
					& 0.02 & 2317023 & 2136943 & 2532391 & 2527129 \\
					& & & (0.989) & (0.991) & (0.991) & (0.991) \\ \cline{3-7}
					&  & 0.05 & 1167078 & 785672 & 729099 & 1095917 \\
					& & & (1.000) & (1.000) & (1.000) & (1.000) \\ \cline{3-7}
					&  & 0.1 & 638637 & 395548 & 291668 & 455465 \\
					& & & (1.000) & (1.000) & (1.000) & (1.000) \\ \cline{3-7}
					&  & 0.5 & 138188 & 79689 & 54211 & 27820 \\
					& & & (1.000) & (1.000) & (1.000) & (1.000) \\ \cline{3-7}
					&  & 1 & 69858 & 39894 & 26880 & 6880 \\
					& & & (1.000) & (1.000) & (1.000) & (1.000) \\
					\bottomrule
				\end{tabular}
			}
			\caption{Estimated OBS and PCD (in parentheses) of ${\cal F}_B, {\cal IZR}$, and ${\cal IZE}$ under the CM configuration with $k=99$ and $s=4$. The variance configuration is IV-C. }
			\label{tab:Validity_k99s4_IV_C}
		\end{table}
		
		\begin{table}[!h]
			\centering
			{\footnotesize
					\begin{tabular}{ c| c|c | c | cc|c}
						\toprule
						& $(\underline{b}, \overline{b})$ & $d_1$ & ${\cal F}_B$ & \multicolumn{2}{c|}{${\cal IZR}$} & ${\cal IZE}$  \\
						& & & & $T=2, \xi=2$ & $T=2, \xi=3$ &  \\ \midrule  
						\multirow{40}{*}{DV-C} 
						& \multirow{10}{*}{$(55,77)$}
						& 0.02 & 2504060 & 2480002 & 2787762 & 2777960 \\
						& & & (0.971) & (0.976) & (0.975) & (0.975) \\ \cline{3-7}
						&  & 0.05 & 1238584 & 851097 & 870670 & 1238052 \\
						& & & (1.000) & (1.000) & (1.000) & (0.999) \\ \cline{3-7}
						&  & 0.1 & 672712 & 420723 & 314014 & 554013 \\
						& & & (1.000) & (1.000) & (1.000) & (1.000) \\ \cline{3-7}
						&  & 0.5 & 144284 & 83357 & 56836 & 36989 \\
						& & & (1.000) & (1.000) & (1.000) & (1.000) \\ \cline{3-7}
						&  & 1 & 72868 & 41632 & 28086 & 7656 \\
						& & & (1.000) & (1.000) & (1.000) & (1.000) \\ \cline{2-7}
						& \multirow{10}{*}{$(33,66)$}
						& 0.02 & 1835631 & 1783829 & 2027587 & 2014044 \\
						& & & (0.983) & (0.984) & (0.984) & (0.983) \\ \cline{3-7}
						&  & 0.05 & 912466 & 623399 & 622200 & 880839 \\
						& & & (1.000) & (1.000) & (1.000) & (1.000) \\ \cline{3-7}
						&  & 0.1 & 496159 & 309771 & 230469 & 386504 \\
						& & & (1.000) & (1.000) & (1.000) & (1.000) \\ \cline{3-7}
						&  & 0.5 & 106737 & 61605 & 41998 & 25676 \\
						& & & (1.000) & (1.000) & (1.000) & (1.000) \\ \cline{3-7}
						&  & 1 & 53896 & 30816 & 20787 & 5929 \\
						& & & (1.000) & (1.000) & (1.000) & (1.000) \\ \cline{2-7}
						& \multirow{10}{*}{$(22,44)$}
						& 0.02 & 1501779 & 1429905 & 1641689 & 1626197 \\
						& & & (0.988) & (0.990) & (0.989) & (0.989) \\ \cline{3-7}
						&  & 0.05 & 748982 & 509608 & 495159 & 700677 \\
						& & & (1.000) & (1.000) & (1.000) & (1.000) \\ \cline{3-7}
						&  & 0.1 & 408155 & 254332 & 188703 & 300760 \\
						& & & (1.000) & (1.000) & (1.000) & (1.000) \\ \cline{3-7}
						&  & 0.5 & 87920 & 50761 & 34558 & 19975 \\
						& & & (1.000) & (1.000) & (1.000) & (1.000) \\ \cline{3-7}
						&  & 1 & 44428 & 25372 & 17099 & 5062 \\
						& & & (1.000) & (1.000) & (1.000) & (1.000) \\ \cline{2-7}
						& \multirow{10}{*}{$(22,77)$}
						& 0.02 & 1503061 & 1439626 & 1655072 & 1631740 \\
						& & & (0.990) & (0.990) & (0.991) & (0.989) \\ \cline{3-7}
						&  & 0.05 & 749756 & 510468 & 498795 & 703467 \\
						& & & (1.000) & (1.000) & (1.000) & (1.000) \\ \cline{3-7}
						&  & 0.1 & 408053 & 254416 & 188934 & 303802 \\
						& & & (1.000) & (1.000) & (1.000) & (1.000) \\ \cline{3-7}
						&  & 0.5 & 87974 & 50767 & 34587 & 20036 \\
						& & & (1.000) & (1.000) & (1.000) & (1.000) \\ \cline{3-7}
						&  & 1 & 44441 & 25395 & 17127 & 5066 \\
						& & & (1.000) & (1.000) & (1.000) & (1.000) \\
						\bottomrule
					\end{tabular}
				}
				\caption{Estimated OBS and PCD (in parentheses) of ${\cal F}_B, {\cal IZR}$, and ${\cal IZE}$ under the CM configuration with $k=99$ and $s=4$. The variance configuration is DV-C. }
				\label{tab:Validity_k99s4_DV_C}
			\end{table}
			
			\begin{table}[!h]
				\centering
				{\footnotesize
						\begin{tabular}{ c| c|c | c | cc|c}
							\toprule
							& $(\underline{b}, \overline{b})$ & $d_1$ & ${\cal F}_B$ & \multicolumn{2}{c|}{${\cal IZR}$} & ${\cal IZE}$  \\
							& & & & $T=2, \xi=2$ & $T=2, \xi=3$ &  \\ \midrule  
							\multirow{40}{*}{IV-S} 
							& \multirow{10}{*}{$(55,77)$}
							& 0.02 & 2011676 & 1874603 & 2188866 & 2189975 \\
							& & & (0.974) & (0.977) & (0.973) & (0.978) \\ \cline{3-7}
							&  & 0.05 & 1011197 & 682296 & 657433 & 942343 \\
							& & & (1.000) & (1.000) & (1.000) & (1.000) \\ \cline{3-7}
							&  & 0.1 & 553329 & 342885 & 253242 & 387773 \\
							& & & (1.000) & (1.000) & (1.000) & (1.000) \\ \cline{3-7}
							&  & 0.5 & 119730 & 69037 & 46959 & 23128 \\
							& & & (1.000) & (1.000) & (1.000) & (1.000) \\ \cline{3-7}
							&  & 1 & 60509 & 34576 & 23301 & 6025 \\
							& & & (1.000) & (1.000) & (1.000) & (1.000) \\ \cline{2-7}
							& \multirow{10}{*}{$(33,66)$}
							& 0.02 & 1680646 & 1460462 & 1792751 & 1790352 \\
							& & & (0.984) & (0.985) & (0.985) & (0.985) \\ \cline{3-7}
							&  & 0.05 & 859096 & 568772 & 500265 & 747298 \\
							& & & (1.000) & (1.000) & (1.000) & (1.000) \\ \cline{3-7}
							&  & 0.1 & 473965 & 290861 & 212256 & 286512 \\
							& & & (1.000) & (1.000) & (1.000) & (1.000) \\ \cline{3-7}
							&  & 0.5 & 103444 & 59514 & 40428 & 16504 \\
							& & & (1.000) & (1.000) & (1.000) & (1.000) \\ \cline{3-7}
							&  & 1 & 52327 & 29868 & 20116 & 5185 \\
							& & & (1.000) & (1.000) & (1.000) & (1.000) \\ \cline{2-7}
							& \multirow{10}{*}{$(22,44)$}
							& 0.02 & 1509947 & 1251412 & 1584134 & 1584761 \\
							& & & (0.989) & (0.991) & (0.989) & (0.991) \\ \cline{3-7}
							&  & 0.05 & 780116 & 510219 & 426558 & 648274 \\
							& & & (1.000) & (1.000) & (1.000) & (1.000) \\ \cline{3-7}
							&  & 0.1 & 432590 & 264255 & 191384 & 239225 \\
							& & & (1.000) & (1.000) & (1.000) & (1.000) \\ \cline{3-7}
							&  & 0.5 & 94902 & 54601 & 37020 & 14113 \\
							& & & (1.000) & (1.000) & (1.000) & (1.000) \\ \cline{3-7}
							&  & 1 & 48067 & 27426 & 18486 & 4793 \\
							& & & (1.000) & (1.000) & (1.000) & (1.000) \\ \cline{2-7}
							& \multirow{10}{*}{$(22,77)$}
							& 0.02 & 1646265 & 1412007 & 1760405 & 1751802 \\
							& & & (0.989) & (0.990) & (0.991) & (0.989) \\ \cline{3-7}
							&  & 0.05 & 843504 & 556878 & 477052 & 730473 \\
							& & & (1.000) & (1.000) & (1.000) & (1.000) \\ \cline{3-7}
							&  & 0.1 & 466135 & 285954 & 208149 & 277885 \\
							& & & (1.000) & (1.000) & (1.000) & (1.000) \\ \cline{3-7}
							&  & 0.5 & 101786 & 58582 & 39756 & 15881 \\
							& & & (1.000) & (1.000) & (1.000) & (1.000) \\ \cline{3-7}
							&  & 1 & 51547 & 29407 & 19811 & 5114 \\
							& & & (1.000) & (1.000) & (1.000) & (1.000) \\
							\bottomrule
						\end{tabular}
					}
					\caption{Estimated OBS and PCD (in parentheses) of ${\cal F}_B, {\cal IZR}$, and ${\cal IZE}$ under the CM configuration with $k=99$ and $s=4$. The variance configuration is IV-S. }
					\label{tab:Validity_k99s4_IV_S}
				\end{table}
				
				\begin{table}[!h]
					\centering
					{\footnotesize
							\begin{tabular}{ c| c|c | c | cc|c}
								\toprule
								& $(\underline{b}, \overline{b})$ & $d_1$ & ${\cal F}_B$ & \multicolumn{2}{c|}{${\cal IZR}$} & ${\cal IZE}$  \\
								& & & & $T=2, \xi=2$ & $T=2, \xi=3$ &  \\ \midrule  
								\multirow{40}{*}{DV-S} 
								& \multirow{10}{*}{$(55,77)$}
								& 0.02 & 2815029 & 2881582 & 3142071 & 3150378 \\
								& & & (0.973) & (0.976) & (0.973) & (0.976) \\ \cline{3-7}
								&  & 0.05 & 1379943 & 957361 & 1039168 & 1425129 \\
								& & & (1.000) & (1.000) & (1.000) & (0.999) \\ \cline{3-7}
								&  & 0.1 & 745140 & 468501 & 352075 & 665565 \\
								& & & (1.000) & (1.000) & (1.000) & (1.000) \\ \cline{3-7}
								&  & 0.5 & 159250 & 92036 & 62829 & 49385 \\
								& & & (1.000) & (1.000) & (1.000) & (1.000) \\ \cline{3-7}
								&  & 1 & 80337 & 45936 & 31021 & 9043 \\
								& & & (1.000) & (1.000) & (1.000) & (1.000) \\ \cline{2-7}
								& \multirow{10}{*}{$(33,66)$}
								& 0.02 & 2446196 & 2419192 & 2708883 & 2708519 \\
								& & & (0.985) & (0.985) & (0.982) & (0.985) \\ \cline{3-7}
								&  & 0.05 & 1210863 & 831292 & 860199 & 1210085 \\
								& & & (1.000) & (1.000) & (1.000) & (0.999) \\ \cline{3-7}
								&  & 0.1 & 657344 & 410913 & 306932 & 550136 \\
								& & & (1.000) & (1.000) & (1.000) & (1.000) \\ \cline{3-7}
								&  & 0.5 & 141151 & 81522 & 55585 & 40731 \\
								& & & (1.000) & (1.000) & (1.000) & (1.000) \\ \cline{3-7}
								&  & 1 & 71230 & 40755 & 27524 & 8034 \\
								& & & (1.000) & (1.000) & (1.000) & (1.000) \\ \cline{2-7}
								& \multirow{10}{*}{$(22,44)$}
								& 0.02 & 2154231 & 2054514 & 2359108 & 2358611 \\
								& & & (0.989) & (0.991) & (0.991) & (0.987) \\ \cline{3-7}
								&  & 0.05 & 1076961 & 731295 & 725138 & 1038839 \\
								& & & (1.000) & (1.000) & (1.000) & (1.000) \\ \cline{3-7}
								&  & 0.1 & 587571 & 365454 & 270846 & 459145 \\
								& & & (1.000) & (1.000) & (1.000) & (1.000) \\ \cline{3-7}
								&  & 0.5 & 126739 & 73126 & 49796 & 33520 \\
								& & & (1.000) & (1.000) & (1.000) & (1.000) \\ \cline{3-7}
								&  & 1 & 64070 & 36609 & 24719 & 7157 \\
								& & & (1.000) & (1.000) & (1.000) & (1.000) \\ \cline{2-7}
								& \multirow{10}{*}{$(22,77)$}
								& 0.02 & 2242467 & 2157917 & 2470525 & 2466008 \\
								& & & (0.988) & (0.990) & (0.989) & (0.991) \\ \cline{3-7}
								&  & 0.05 & 1117298 & 761329 & 756590 & 1089027 \\
								& & & (1.000) & (1.000) & (1.000) & (0.999) \\ \cline{3-7}
								&  & 0.1 & 608937 & 379138 & 281840 & 481287 \\
								& & & (1.000) & (1.000) & (1.000) & (1.000) \\ \cline{3-7}
								&  & 0.5 & 131212 & 75709 & 51573 & 34524 \\
								& & & (1.000) & (1.000) & (1.000) & (1.000) \\ \cline{3-7}
								&  & 1 & 66263 & 37869 & 25577 & 7350 \\
								& & & (1.000) & (1.000) & (1.000) & (1.000) \\
								\bottomrule
							\end{tabular}
						}
						\caption{Estimated OBS and PCD (in parentheses) of ${\cal F}_B, {\cal IZR}$, and ${\cal IZE}$ under the CM configuration with $k=99$ and $s=4$. The variance configuration is DV-S. }
						\label{tab:Validity_k99s4_DV_S}
					\end{table}

\clearpage
\newpage
\section{Additional Results for Section \ref{sec:Efficiency}}
\label{sec:EfficiencyAdditional}

In this section, we include additional results that demonstrate the efficiency of our proposed procedures. Tables \ref{tab:Efficiency_k99s4_IV_C}--\ref{tab:Efficiency_k99s4_DV_S} present the results under the IV-C, DV-C, IV-S, and DV-S variance configuration, respectively. 

    \begin{table}[!h]
	\centering
	\resizebox{\textwidth}{!}{
	\begin{tabular}{ c|c | c | cc|c || c|c | c | cc|c}
	\toprule
	$(\underline{b}, \overline{b})$ & $d_1$ & ${\cal F}_B$ & \multicolumn{2}{c|}{${\cal IZR}$} & ${\cal IZE}$ & $(\underline{b}, \overline{b})$ & $d_1$ & ${\cal F}_B$ & \multicolumn{2}{c|}{${\cal IZR}$} & ${\cal IZE}$  \\
	& & & $T=2, \xi=2$ & $T=2, \xi=3$ & & & & & $T=2, \xi=2$ & $T=2, \xi=3$ &  \\ \midrule  
	\multirow{16}{*}{$(55,77)$} & 0.005 & 1131604 & 909225 & 901672 & 1081831 & \multirow{16}{*}{$(33,66)$} & 0.005 & 1094338 & 887628 & 886453 & 1053645 \\ 
	& & (1.000) & (0.999) & (0.999) & (0.999) & & & (0.999) & (1.000) & (0.999) & (0.999) \\ \cline{2-6} \cline{8-12}
	& 0.01 & 675078 & 501205 & 472074 & 559146 & & 0.01 & 657569 & 490860 & 464320 & 551741 \\
	& & (1.000) & (1.000) & (1.000) & (0.999) & & & (1.000) & (1.000) & (0.999) & (0.999) \\ \cline{2-6} \cline{8-12}
	& 0.02 & 377946 & 256418 & 228273 & 256415 & & 0.02 & 368989 & 251068 & 225242 & 256277 \\
	& & (1.000) & (1.000) & (1.000) & (0.999) & & & (1.000) & (1.000) & (0.999) & (1.000) \\ \cline{2-6} \cline{8-12}
	& 0.05 & 164503 & 99845 & 75897 & 78663 & & 0.05 & 161016 & 97685 & 74462 & 78661 \\
	& & (1.000) & (1.000) & (1.000) & (1.000) & & & (1.000) & (1.000) & (1.000) & (1.000) \\ \cline{2-6} \cline{8-12}
	& 0.1 & 85129 & 49980 & 34713 & 28288 & & 0.1 & 83401 & 48917 & 34027 & 28393 \\
	& & (1.000) & (1.000) & (1.000) & (1.000) & & & (1.000) & (1.000) & (1.000) & (1.000) \\ \cline{2-6} \cline{8-12}
	& 0.5 & 17574 & 10084 & 6904 & 3092 & & 0.5 & 17244 & 9974 & 6878 & 3097 \\
	& & (1.000) & (1.000) & (1.000) & (1.000) & & & (1.000) & (1.000) & (1.000) & (1.000) \\ \cline{2-6} \cline{8-12}
	& 1 & 8912 & 5316 & 3908 & 2166 & & 1 & 8839 & 5310 & 3906 & 2166 \\
	& & (1.000) & (1.000) & (1.000) & (1.000) & & & (1.000) & (1.000) & (1.000) & (1.000) \\ \cline{2-6} \cline{8-12}
	& 2 & 4799 & 3251 & 2679 & 2007 & & 2 & 4798 & 3251 & 2678 & 2007 \\
	& & (1.000) & (1.000) & (1.000) & (1.000) & & & (1.000) & (1.000) & (1.000) & (1.000) \\ \hline
	\multirow{16}{*}{$(22,44)$} & 0.005 & 980672 & 818224 & 836951 & 971232 & \multirow{16}{*}{$(22,77)$} & 0.005 & 1061553 & 866142 & 868943 & 1005260  \\
	& & (1.000) & (0.999) & (0.999) & (0.999) & & & (0.999) & (0.999) & (0.999) & (0.999)  \\ \cline{2-6} \cline{8-12}
	& 0.01 & 598181 & 456521 & 440852 & 524597 & & 0.01 & 639271 & 480897 & 456946 & 533476 \\
	& & (0.999) & (1.000) & (1.000) & (0.999) & & & (0.999) & (1.000) & (0.999) & (0.999) \\ \cline{2-6} \cline{8-12}
	& 0.02 & 338934 & 233994 & 213510 & 249454 & & 0.02 & 359827 & 246056 & 221454 & 252284 \\
	& & (0.999) & (1.000) & (0.999) & (1.000) & & & (0.999) & (1.000) & (1.000) & (1.000) \\ \cline{2-6} \cline{8-12}
	& 0.05 & 149031 & 90909 & 69834 & 77818 & & 0.05 & 157285 & 95614 & 73011 & 78140 \\
	& & (1.000) & (1.000) & (1.000) & (1.000) & & & (1.000) & (1.000) & (1.000) & (1.000) \\ \cline{2-6} \cline{8-12}
	& 0.1 & 77250 & 45523 & 31794 & 28345 & & 0.1 & 81441 & 47887 & 33317 & 28325  \\
	& & (1.000) & (1.000) & (1.000) & (1.000) & & & (1.000) & (1.000) & (1.000) & (1.000) \\ \cline{2-6} \cline{8-12}
	& 0.5 & 16200 & 9562 & 6733 & 3094 & & 0.5 & 16847 & 9716 & 6758 & 3097 \\
	& & (1.000) & (1.000) & (1.000) & (1.000) & & & (1.000) & (1.000) & (1.000) & (1.000) \\ \cline{2-6} \cline{8-12}
	& 1 & 8523 & 5250 & 3900 & 2166 & & 1 & 8632 & 5261 & 3898 & 2165 \\
	& & (1.000) & (1.000) & (1.000) & (1.000) & & & (1.000) & (1.000) & (1.000) & (1.000) \\ \cline{2-6} \cline{8-12}
	& 2 & 4767 & 3249 & 2679 & 2007 & & 2 & 4770 & 3250 & 2678 & 2007 \\
	& & (1.000) & (1.000) & (1.000) & (1.000) & & & (1.000) & (1.000) & (1.000) & (1.000) \\ 
    \bottomrule
	\end{tabular}
    }
	\caption{Estimated OBS and PCD (in parentheses) of ${\cal F}_B, {\cal IZR}$, and ${\cal IZE}$ under the SM configuration with $k=99$ and $s=4$. The variance configuration is IV-C. }
    \label{tab:Efficiency_k99s4_IV_C}
	\end{table}
		
	\begin{table}[!h]
	\centering
	\resizebox{\textwidth}{!}{
	\begin{tabular}{ c|c | c | cc|c || c|c | c | cc|c}
	\toprule
	$(\underline{b}, \overline{b})$ & $d_1$ & ${\cal F}_B$ & \multicolumn{2}{c|}{${\cal IZR}$} & ${\cal IZE}$ & $(\underline{b}, \overline{b})$ & $d_1$ & ${\cal F}_B$ & \multicolumn{2}{c|}{${\cal IZR}$} & ${\cal IZE}$ \\
	& & & $T=2, \xi=2$ & $T=2, \xi=3$ & & & & & $T=2, \xi=2$ & $T=2, \xi=3$  \\ \midrule  
	\multirow{16}{*}{$(55,77)$} & 0.005 & 936082 & 751483 & 738377 & 885982 & \multirow{16}{*}{$(33,66)$} & 0.005 & 863828 & 708076 & 708974 & 839216 \\
	& & (0.999) & (1.000) & (0.999) & (0.999) & & & (1.000) & (1.000) & (0.999) & (1.000) \\ \cline{2-6} \cline{8-12}
	& 0.01 & 552540 & 410386 & 385084 & 454820 & & 0.01 & 516660 & 389957 & 371044 & 442222 \\
	& & (1.000) & (0.999) & (0.999) & (1.000) & & & (1.000) & (1.000) & (1.000) & (0.999) \\ \cline{2-6} \cline{8-12}
	& 0.02 & 307169 & 209457 & 186002 & 207989 & & 0.02 & 289089 & 199340 & 179171 & 206108 \\
	& & (0.999) & (0.999) & (1.000) & (0.999) & & & (0.999) & (0.999) & (1.000) & (1.000) \\ \cline{2-6} \cline{8-12}
	& 0.05 & 133081 & 80700 & 61977 & 64048 & & 0.05 & 125883 & 76678 & 59161 & 63727 \\
	& & (1.000) & (1.000) & (1.000) & (1.000) & & & (1.000) & (1.000) & (1.000) & (1.000) \\ \cline{2-6} \cline{8-12}
	& 0.1 & 68607 & 40271 & 28037 & 23560 & & 0.1 & 65029 & 38242 & 26654 & 23590 \\
	& & (1.000) & (1.000) & (1.000) & (1.000) & & & (1.000) & (1.000) & (1.000) & (1.000) \\ \cline{2-6} \cline{8-12}
	& 0.5 & 14161 & 8155 & 5689 & 2887 & & 0.5 & 13486 & 7946 & 5632 & 2891 \\
	& & (1.000) & (1.000) & (1.000) & (1.000) & & & (1.000) & (1.000) & (1.000) & (1.000) \\ \cline{2-6} \cline{8-12}
	& 1 & 7247 & 4467 & 3406 & 2123 & & 1 & 7087 & 4457 & 3407 & 2124 \\
	& & (1.000) & (1.000) & (1.000) & (1.000) & & & (1.000) & (1.000) & (1.000) & (1.000) \\ \cline{2-6} \cline{8-12}
	& 2 & 4075 & 2917 & 2489 & 1999 & & 2 & 4069 & 2913 & 2490 & 1999 \\
	& & (1.000) & (1.000) & (1.000) & (1.000) & & & (1.000) & (1.000) & (1.000) & (1.000) \\ \hline
	\multirow{16}{*}{$(22,44)$} & 0.005 & 785774 & 659957 & 674093 & 775020 & \multirow{16}{*}{$(22,77)$} & 0.005 & 785347 & 660285 & 674961 & 776108  \\
	& & (0.999) & (0.999) & (0.999) & (0.999) & & & (0.999) & (0.999) & (0.999) & (0.999) \\ \cline{2-6} \cline{8-12}
	& 0.01 & 476334 & 366561 & 353269 & 420129 & & 0.01 & 475661 & 365944 & 354018 & 420678 \\
	& & (1.000) & (1.000) & (1.000) & (1.000) & & & (1.000) & (0.999) & (1.000) & (0.999) \\ \cline{2-6} \cline{8-12}
	& 0.02 & 267961 & 186628 & 170710 & 200519 & & 0.02 & 268177 & 187061 & 170574 & 200801 \\
	& & (1.000) & (0.999) & (0.999) & (1.000) & & & (0.999) & (1.000) & (1.000) & (0.999) \\ \cline{2-6} \cline{8-12}
	& 0.05 & 117413 & 71934 & 55863 & 63290 & & 0.05 & 117489 & 71974 & 55903 & 63318 \\
	& & (1.000) & (1.000) & (1.000) & (1.000) & & & (1.000) & (1.000) & (1.000) & (1.000) \\ \cline{2-6} \cline{8-12}
	& 0.1 & 60887 & 35826 & 25100 & 23503 & & 0.1 & 60839 & 35836 & 25098 & 23462 \\
	& & (1.000) & (1.000) & (1.000) & (1.000) & & & (1.000) & (1.000) & (1.000) & (1.000) \\ \cline{2-6} \cline{8-12}
	& 0.5 & 12782 & 7652 & 5509 & 2893 & & 0.5 & 12790 & 7650 & 5507 & 2894 \\
	& & (1.000) & (1.000) & (1.000) & (1.000) & & & (1.000) & (1.000) & (1.000) & (1.000) \\ \cline{2-6} \cline{8-12}
	& 1 & 6854 & 4408 & 3402 & 2124 & & 1 & 6850 & 4407 & 3401 & 2123 \\
	& & (1.000) & (1.000) & (1.000) & (1.000) & & & (1.000) & (1.000) & (1.000) & (1.000) \\ \cline{2-6} \cline{8-12}
	& 2 & 4040 & 2915 & 2490 & 1999 & & 2 & 4040 & 2912 & 2491 & 1999 \\
	& & (1.000) & (1.000) & (1.000) & (1.000) & & & (1.000) & (1.000) & (1.000) & (1.000) \\ 
	\bottomrule
	\end{tabular}
	}
	\caption{Estimated OBS and PCD (in parentheses) of ${\cal F}_B, {\cal IZR}$, and ${\cal IZE}$ under the SM configuration with $k=99$ and $s=4$. The variance configuration is DV-C. }
	\label{tab:Efficiency_k99s4_DV_C}
	\end{table}
			
	\begin{table}[!h]
	\centering
	\resizebox{\textwidth}{!}{
	\begin{tabular}{ c|c | c | cc|c || c|c | c | cc|c}
	\toprule
	$(\underline{b}, \overline{b})$ & $d_1$ & ${\cal F}_B$ & \multicolumn{2}{c|}{${\cal IZR}$} & ${\cal IZE}$ & $(\underline{b}, \overline{b})$ & $d_1$ & ${\cal F}_B$ & \multicolumn{2}{c|}{${\cal IZR}$} & ${\cal IZE}$  \\
	& & & $T=2, \xi=2$ & $T=2, \xi=3$ & & & & & $T=2, \xi=2$ & $T=2, \xi=3$ \\ \midrule  
	\multirow{16}{*}{$(55,77)$} & 0.005 & 1079969 & 886216 & 896656 & 1044563 & \multirow{16}{*}{$(33,66)$} & 0.005 & 811177 & 642250 & 633041 & 712981 \\
	& & (0.999) & (1.000) & (0.999) & (0.999) & & & (0.999) & (1.000) & (0.999) & (0.999) \\ \cline{2-6} \cline{8-12}
	& 0.01 & 654047 & 491167 & 470503 & 548913 & & 0.01 & 485418 & 354538 & 331916 & 362422 \\
	& & (1.000) & (1.000) & (0.999) & (0.999) & & & (1.000) & (0.999) & (1.000) & (0.999) \\ \cline{2-6} \cline{8-12}
	& 0.02 & 370879 & 252784 & 228168 & 257048 & & 0.02 & 273434 & 183052 & 161855 & 166740 \\
	& & (1.000) & (1.000) & (1.000) & (0.999) & & & (0.999) & (1.000) & (1.000) & (0.999) \\ \cline{2-6} \cline{8-12}
	& 0.05 & 163282 & 99238 & 75711 & 79650 & & 0.05 & 119953 & 72468 & 54450 & 50798 \\
	& & (1.000) & (1.000) & (1.000) & (1.000) & & & (1.000) & (1.000) & (1.000) & (1.000) \\ \cline{2-6} \cline{8-12}
	& 0.1 & 84833 & 49856 & 34814 & 28826 & & 0.1 & 62190 & 36471 & 25317 & 18368 \\
	& & (1.000) & (1.000) & (1.000) & (1.000) & & & (1.000) & (1.000) & (1.000) & (1.000) \\ \cline{2-6} \cline{8-12}
	& 0.5 & 17799 & 10355 & 7176 & 3124 & & 0.5 & 13014 & 7554 & 5261 & 2630 \\
	& & (1.000) & (1.000) & (1.000) & (1.000) & & & (1.000) & (1.000) & (1.000) & (1.000) \\ \cline{2-6} \cline{8-12}
	& 1 & 9192 & 5532 & 4028 & 2173 & & 1 & 6716 & 4137 & 3176 & 2094 \\
	& & (1.000) & (1.000) & (1.000) & (1.000) & & & (1.000) & (1.000) & (1.000) & (1.000) \\ \cline{2-6} \cline{8-12}
	& 2 & 4986 & 3325 & 2713 & 2008 & & 2 & 3778 & 2744 & 2381 & 1994 \\
	& & (1.000) & (1.000) & (1.000) & (1.000) & & & (1.000) & (1.000) & (1.000) & (1.000) \\ \hline
	\multirow{16}{*}{$(22,44)$} & 0.005 & 589075 & 455704 & 439611 & 469517 & \multirow{16}{*}{$(22,77)$} & 0.005 & 699393 & 544792 & 533129 & 594719  \\
	& & (1.000) & (1.000) & (0.999) & (1.000) & & & (1.000) & (1.000) & (0.999) & (0.999) \\ \cline{2-6} \cline{8-12}
	& 0.01 & 349120 & 251264 & 231479 & 236464 & & 0.01 & 417916 & 301353 & 279876 & 299468 \\  
	& & (1.000) & (1.000) & (0.999) & (0.999) & & & (1.000) & (0.999) & (1.000) & (0.999) \\ \cline{2-6} \cline{8-12}
	& 0.02 & 195236 & 129441 & 112930 & 108215 & & 0.02 & 235841 & 155795 & 136709 & 136239 \\
	& & (1.000) & (0.999) & (1.000) & (1.000) & & & (1.000) & (0.999) & (0.999) & (1.000) \\ \cline{2-6} \cline{8-12}
	& 0.05 & 85225 & 51319 & 38305 & 33054 & & 0.05 & 103602 & 62357 & 46327 & 41162 \\
	& & (1.000) & (1.000) & (1.000) & (1.000) & & & (1.000) & (1.000) & (1.000) & (1.000) \\ \cline{2-6} \cline{8-12}
	& 0.1 & 44194 & 25843 & 17914 & 12331 & & 0.1 & 53839 & 31465 & 21782 & 14841 \\
	& & (1.000) & (1.000) & (1.000) & (1.000) & & & (1.000) & (1.000) & (1.000) & (1.000) \\ \cline{2-6} \cline{8-12}
	& 0.5 & 9235 & 5399 & 3887 & 2369 & & 0.5 & 11258 & 6579 & 4709 & 2479 \\
	& & (1.000) & (1.000) & (1.000) & (1.000) & & & (1.000) & (1.000) & (1.000) & (1.000) \\ \cline{2-6} \cline{8-12}
	& 1 & 4828 & 3197 & 2632 & 2047 & & 1 & 5887 & 3798 & 2990 & 2072 \\
	& & (1.000) & (1.000) & (1.000) & (1.000) & & & (1.000) & (1.000) & (1.000) & (1.000) \\ \cline{2-6} \cline{8-12}
	& 2 & 2984 & 2383 & 2178 & 1985 & & 2 & 3505 & 2622 & 2313 & 1991 \\
	& & (1.000) & (1.000) & (1.000) & (1.000) & & & (1.000) & (1.000) & (1.000) & (1.000)  \\ 
	\bottomrule
	\end{tabular}
	}
	\caption{Estimated OBS and PCD (in parentheses) of ${\cal F}_B, {\cal IZR}$, and ${\cal IZE}$ under the SM configuration with $k=99$ and $s=4$. The variance configuration is IV-S. }
	\label{tab:Efficiency_k99s4_IV_S}
	\end{table}
				
	\begin{table}[!h]
	\centering
	\resizebox{\textwidth}{!}{
	\begin{tabular}{ c|c | c | cc|c || c|c | c | cc|c}
	\toprule
	$(\underline{b}, \overline{b})$ & $d_1$ & ${\cal F}_B$ & \multicolumn{2}{c|}{${\cal IZR}$} & ${\cal IZE}$ & $(\underline{b}, \overline{b})$ & $d_1$ & ${\cal F}_B$ & \multicolumn{2}{c|}{${\cal IZR}$} & ${\cal IZE}$  \\
	& & & $T=2, \xi=2$ & $T=2, \xi=3$ & & & & & $T=2, \xi=2$ & $T=2, \xi=3$ &  \\ \midrule  
	\multirow{16}{*}{$(55,77)$} & 0.005 & 909360 & 715244 & 695380 & 868496 & \multirow{16}{*}{$(33,66)$} & 0.005 & 1150527 & 943387 & 946603 & 1153649 \\
	& & (1.000) & (0.999) & (0.999) & (1.000) & & & (1.000) & (1.000) & (0.999) & (0.999) \\ \cline{2-6} \cline{8-12}
	& 0.01 & 529522 & 386650 & 357618 & 431926 & & 0.01 & 685882 & 515676 & 490775 & 609485 \\
	& & (0.999) & (0.999) & (0.999) & (1.000) & & & (0.999) & (1.000) & (0.999) & (0.999) \\ \cline{2-6} \cline{8-12}
	& 0.02 & 292649 & 196811 & 172227 & 190463 & & 0.02 & 384204 & 263338 & 236787 & 284332 \\
	& & (1.000) & (0.999) & (0.999) & (0.999) & & & (1.000) & (0.999) & (0.999) & (1.000) \\ \cline{2-6} \cline{8-12}
	& 0.05 & 126219 & 76319 & 58082 & 56180 & & 0.05 & 167449 & 101849 & 78523 & 86733 \\
	& & (1.000) & (1.000) & (1.000) & (1.000) & & & (1.000) & (1.000) & (1.000) & (1.000) \\ \cline{2-6} \cline{8-12}
	& 0.1 & 65120 & 38177 & 26507 & 20418 & & 0.1 & 86652 & 51010 & 35545 & 31447 \\
	& & (1.000) & (1.000) & (1.000) & (1.000) & & & (1.000) & (1.000) & (1.000) & (1.000) \\ \cline{2-6} \cline{8-12}
	& 0.5 & 13546 & 7866 & 5453 & 2725 & & 0.5 & 18090 & 10537 & 7319 & 3254 \\
	& & (1.000) & (1.000) & (1.000) & (1.000) & & & (1.000) & (1.000) & (1.000) & (1.000) \\ \cline{2-6} \cline{8-12}
	& 1 & 6987 & 4241 & 3209 & 2097 & & 1 & 9352 & 5665 & 4120 & 2181 \\
	& & (1.000) & (1.000) & (1.000) & (1.000) & & & (1.000) & (1.000) & (1.000) & (1.000) \\ \cline{2-6} \cline{8-12}
	& 2 & 3853 & 2759 & 2387 & 1995 & & 2 & 5108 & 3383 & 2742 & 2009 \\
	& & (1.000) & (1.000) & (1.000) & (1.000) & & & (1.000) & (1.000) & (1.000) & (1.000) \\ \hline
	\multirow{16}{*}{$(22,44)$} & 0.005 & 1244819 & 1050310 & 1081661 & 1271267 & \multirow{16}{*}{$(22,77)$} & 0.005 & 1165555 & 979579 & 1001591 & 1179982 \\
	& & (1.000) & (0.999) & (0.999) & (0.999) & & & (0.999) & (0.999) & (1.000) & (0.999) \\ \cline{2-6} \cline{8-12}
	& 0.01 & 757133 & 580857 & 564553 & 695634 & & 0.01 & 702473 & 539076 & 521989 & 642505 \\
	& & (1.000) & (0.999) & (1.000) & (0.999) & & & (1.000) & (1.000) & (0.999) & (0.999) \\ \cline{2-6} \cline{8-12}
	& 0.02 & 428513 & 297485 & 271933 & 335713 & & 0.02 & 396139 & 275920 & 251229 & 310176 \\
	& & (1.000) & (1.000) & (0.999) & (1.000) & & & (1.000) & (1.000) & (1.000) & (1.000) \\ \cline{2-6} \cline{8-12}
	& 0.05 & 188692 & 115361 & 89516 & 105551 & & 0.05 & 173326 & 106048 & 82643 & 97976 \\
	& & (1.000) & (1.000) & (1.000) & (1.000) & & & (1.000) & (1.000) & (1.000) & (1.000) \\ \cline{2-6} \cline{8-12}
	& 0.1 & 98059 & 57817 & 40481 & 38288 & & 0.1 & 89878 & 52979 & 37037 & 35940 \\
	& & (1.000) & (1.000) & (1.000) & (1.000) & & & (1.000) & (1.000) & (1.000) & (1.000) \\ \cline{2-6} \cline{8-12}
	& 0.5 & 20597 & 12040 & 8367 & 3616 & & 0.5 & 18798 & 11014 & 7721 & 3510 \\
	& & (1.000) & (1.000) & (1.000) & (1.000) & & & (1.000) & (1.000) & (1.000) & (1.000) \\ \cline{2-6} \cline{8-12}
	& 1 & 10702 & 6459 & 4685 & 2244 & & 1 & 9800 & 6005 & 4416 & 2220 \\
	& & (1.000) & (1.000) & (1.000) & (1.000) & & & (1.000) & (1.000) & (1.000) & (1.000) \\ \cline{2-6} \cline{8-12}
	& 2 & 5820 & 3799 & 2993 & 2019 & & 2 & 5432 & 3624 & 2895 & 2015 \\
	& & (1.000) & (1.000) & (1.000) & (1.000) & & & (1.000) & (1.000) & (1.000) & (1.000) \\ \cline{2-6}
	\bottomrule
	\end{tabular}
	}
	\caption{Estimated OBS and PCD (in parentheses) of ${\cal F}_B, {\cal IZR}$, and ${\cal IZE}$ under the SM configuration with $k=99$ and $s=4$. The variance configuration is DV-S. }
	\label{tab:Efficiency_k99s4_DV_S}
	\end{table}

\clearpage
\newpage
\section{Additional Results for Section \ref{sec:Inventory}}
\label{sec:InventoryAdditional}
In Table~\ref{tab:InventoryResultsAdditional}, we present additional experimental results for ${\cal F}_B$, ${\cal IZR}$, and ${\cal IZE}$ using larger values of $n_0 \in \{30, 40, \ldots, 100\}$ across the three sets of thresholds discussed in Section~\ref{sec:Inventory}. 

\begin{table}[h!]
	\centering
		{\scriptsize \begin{tabular}{ c || c | c c c c}
				\toprule
				& & ${\cal F}_B$ & \multicolumn{2} {c} {${\cal IZR}$} & ${\cal IZE}$  \\   
				$(q_1, q_2)$ & $n_0$ & & $T=2, \xi=2$ & $T=2, \xi=3$  \\ \midrule
				\multirow{ 16}{*}{(0.01, 120)} & \multirow{ 2}{*}{30} & 483188 & 334065 & 290387 & 257448   \\  
				& & (0.999) & (0.998) & (0.996) & (0.999) \\     \cline{2-6}
				& \multirow{ 2}{*}{40} & 455332 & 323488 & 286612 & 264159 \\  
				& & (1.000) & (0.999) & (0.999) & (0.999) \\     \cline{2-6}
				& \multirow{ 2}{*}{50} & 450483 & 328941 & 295994 & 282096 \\  
				& & (1.000) & (1.000) & (1.000) & (0.999)  \\    \cline{2-6}
				& \multirow{ 2}{*}{60} & 456158 & 341728 & 311203 & 304272   \\  
				& & (1.000) & (1.000) & (0.999) & (1.000) \\    \cline{2-6}
				& \multirow{ 2}{*}{70} & 467527 & 358000 & 329400 & 328732  \\  
				& & (1.000) & (1.000) & (1.000) & (1.000)  \\  \cline{2-6}
				& \multirow{ 2}{*}{80} & 482058 & 376743 & 349590 & 354428  \\  
				& & (1.000) & (1.000) & (1.000) & (1.000) \\  \cline{2-6}
				& \multirow{ 2}{*}{90} & 498951 & 396859 & 370979 & 381102   \\  
				& & (1.000) & (1.000) & (1.000) & (1.000) \\  \cline{2-6}
				& \multirow{ 2}{*}{100} & 517328 & 418028 & 393212 & 408246  \\  
				& & (1.000) & (1.000) & (1.000) & (1.000) \\  \cline{2-6}
				\hline
				\multirow{ 16}{*}{(0.05, 125)} & \multirow{ 2}{*}{30} & 1466312 & 1159370 & 1101872 & 1103057  \\  
				& & (1.000) & (1.000) & (1.000) & (1.000) \\     \cline{2-6}
				& \multirow{ 2}{*}{40} & 1332004 & 1051375 & 1003934 & 999451   \\  
				& & (1.000) & (1.000) & (1.000) & (1.000) \\     \cline{2-6}
				& \multirow{ 2}{*}{50} & 1269112 & 1004034 & 962241 & 959318   \\  
				& & (1.000) & (1.000) & (1.000) & (1.000) \\     \cline{2-6}
				& \multirow{ 2}{*}{60} & 1238251 & 983940 & 944195 & 945353   \\  
				& & (1.000) & (1.000) & (1.000) & (1.000) \\    \cline{2-6}
				& \multirow{ 2}{*}{70} & 1224049 & 977437 & 940712 & 945543   \\  
				& & (1.000) & (1.000) & (1.000) & (1.000) \\  \cline{2-6}
				& \multirow{ 2}{*}{80} & 1219171 & 978328 & 943762 & 954026  \\  
				& & (1.000) & (1.000) & (1.000) & (1.000) \\  \cline{2-6}
				& \multirow{ 2}{*}{90} & 1220267 & 985688 & 952359 & 967166   \\  
				& & (1.000) & (1.000) & (1.000) & (1.000) \\  \cline{2-6}
				& \multirow{ 2}{*}{100} & 1226293 & 995894 & 964037 & 983192  \\  
				& & (1.000) & (1.000) & (1.000) & (1.000) \\  \cline{2-6}
				\hline
				\multirow{ 16}{*}{(0.1, 130)} & \multirow{ 2}{*}{30} & 928755 & 670839 & 617263 & 624690   \\  
				& & (1.000) & (1.000) & (1.000) & (1.000) \\     \cline{2-6}
				& \multirow{ 2}{*}{40} & 848349 & 617150 & 571602 & 580856   \\  
				& & (1.000) & (1.000) & (1.000) & (1.000) \\     \cline{2-6}
				& \multirow{ 2}{*}{50} & 814054 & 598337 & 558236 & 571734   \\  
				& & (1.000) & (1.000) & (1.000) & (1.000)  \\ \cline{2-6}
				& \multirow{ 2}{*}{60} & 800225 & 595238 & 559172 & 577904   \\  
				& & (1.000) & (1.000) & (1.000) & (1.000)   \\   \cline{2-6}
				& \multirow{ 2}{*}{70} & 796448 & 601135 & 568184 & 591159   \\  
				& & (1.000) & (1.000) & (1.000) & (1.000) \\  \cline{2-6}
				& \multirow{ 2}{*}{80} & 799736 & 611277 & 581443 & 609046  \\  
				& & (1.000) & (1.000) & (1.000) & (1.000) \\  \cline{2-6}
				& \multirow{ 2}{*}{90} & 808348 & 625842 & 597508 & 629132   \\  
				& & (1.000) & (1.000) & (1.000) & (1.000) \\  \cline{2-6}
				& \multirow{ 2}{*}{100} & 819216 & 642255 & 616067 & 651371  \\  
				& & (1.000) & (1.000) & (1.000) & (1.000) \\  
				\bottomrule
		\end{tabular}}
	\caption{Estimated OBS and PCD (in parentheses) of ${\cal F}_B, {\cal IZR}$, and ${\cal IZE}$ with respect to three sets of thresholds and $n_0\in \{30, 40, \ldots, 100\}$. For ${\cal IZE}$, we set $n_0'=n_0-5$ and $\nu=0.8$.}
	\label{tab:InventoryResultsAdditional}
\end{table}

\end{document}